\PassOptionsToPackage{unicode}{hyperref}
\PassOptionsToPackage{hyphens}{url}
\PassOptionsToPackage{dvipsnames,svgnames,x11names}{xcolor}
\documentclass[
  12pt]{article}

\usepackage{amsmath,amssymb,amsthm}
\usepackage{subcaption}
\usepackage{iftex}
\ifPDFTeX
  \usepackage[T1]{fontenc}
  \usepackage[utf8]{inputenc}
  \usepackage{textcomp} 
\else 
  \usepackage{unicode-math}
  \defaultfontfeatures{Scale=MatchLowercase}
  \defaultfontfeatures[\rmfamily]{Ligatures=TeX,Scale=1}
\fi
\usepackage{lmodern}
\ifPDFTeX\else  
\fi
\IfFileExists{upquote.sty}{\usepackage{upquote}}{}
\IfFileExists{microtype.sty}{
  \usepackage[]{microtype}
  \UseMicrotypeSet[protrusion]{basicmath} 
}{}
\makeatletter
\@ifundefined{KOMAClassName}{
  \IfFileExists{parskip.sty}{%
    \usepackage{parskip}
  }{
    \setlength{\parindent}{0pt}
    \setlength{\parskip}{6pt plus 2pt minus 1pt}}
}{
  \KOMAoptions{parskip=half}}
\makeatother
\usepackage{xcolor}
\makeatletter
\ifx\paragraph\undefined\else
  \let\oldparagraph\paragraph
  \renewcommand{\paragraph}{
    \@ifstar
      \xxxParagraphStar
      \xxxParagraphNoStar
  }
  \newcommand{\xxxParagraphStar}[1]{\oldparagraph*{#1}\mbox{}}
  \newcommand{\xxxParagraphNoStar}[1]{\oldparagraph{#1}\mbox{}}
\fi
\ifx\subparagraph\undefined\else
  \let\oldsubparagraph\subparagraph
  \renewcommand{\subparagraph}{
    \@ifstar
      \xxxSubParagraphStar
      \xxxSubParagraphNoStar
  }
  \newcommand{\xxxSubParagraphStar}[1]{\oldsubparagraph*{#1}\mbox{}}
  \newcommand{\xxxSubParagraphNoStar}[1]{\oldsubparagraph{#1}\mbox{}}
\fi
\makeatother

\usepackage{longtable,booktabs,array}
\usepackage{calc} 
\usepackage{etoolbox}
\makeatletter
\patchcmd\longtable{\par}{\if@noskipsec\mbox{}\fi\par}{}{}
\makeatother
\IfFileExists{footnotehyper.sty}{\usepackage{footnotehyper}}{\usepackage{footnote}}
\makesavenoteenv{longtable}
\usepackage{graphicx}
\makeatletter
\def\maxwidth{\ifdim\Gin@nat@width>\linewidth\linewidth\else\Gin@nat@width\fi}
\def\maxheight{\ifdim\Gin@nat@height>\textheight\textheight\else\Gin@nat@height\fi}
\makeatother
\setkeys{Gin}{width=\maxwidth,height=\maxheight,keepaspectratio}
\makeatletter
\def\fps@figure{htbp}
\makeatother

\makeatletter
\@ifpackageloaded{caption}{}{\usepackage{caption}}
\AtBeginDocument{%
\ifdefined\contentsname
  \renewcommand*\contentsname{Table of contents}
\else
  \newcommand\contentsname{Table of contents}
\fi
\ifdefined\listfigurename
  \renewcommand*\listfigurename{List of Figures}
\else
  \newcommand\listfigurename{List of Figures}
\fi
\ifdefined\listtablename
  \renewcommand*\listtablename{List of Tables}
\else
  \newcommand\listtablename{List of Tables}
\fi
\ifdefined\figurename
  \renewcommand*\figurename{Figure}
\else
  \newcommand\figurename{Figure}
\fi
\ifdefined\tablename
  \renewcommand*\tablename{Table}
\else
  \newcommand\tablename{Table}
\fi
}
\@ifpackageloaded{float}{}{\usepackage{float}}
\floatstyle{ruled}
\@ifundefined{c@chapter}{\newfloat{codelisting}{h}{lop}}{\newfloat{codelisting}{h}{lop}[chapter]}
\floatname{codelisting}{Listing}

\makeatother
\makeatletter
\@ifpackageloaded{caption}{}{\usepackage{caption}}
\@ifpackageloaded{subcaption}{}{\usepackage{subcaption}}
\makeatother

\newtheorem{theorem}{Theorem}
\newtheorem{lemma}{Lemma}
\newtheorem{corollary}{Corollary}
\newtheorem{proposition}{Proposition}

\ifLuaTeX
  \usepackage{selnolig}  
\fi
\usepackage[]{natbib}
\usepackage{bookmark}

\IfFileExists{xurl.sty}{\usepackage{xurl}}{} 
\hypersetup{
  pdftitle={Sharp Bounds for the LATE and the ATE with an Invalid Instrument},
  pdfauthor={Author 1; Author 2},
  pdfkeywords={3 to 6 keywords, that do not appear in the title},
  colorlinks=true,
  linkcolor={blue},
  filecolor={Maroon},
  citecolor={Blue},
  urlcolor={Blue},
  pdfcreator={LaTeX via pandoc}}

\newcommand{\anon}{1}

\begin{document}

\def\spacingset#1{\renewcommand{\baselinestretch}%
{#1}\small\normalsize} \spacingset{1}


\if1\anon
{
  \title{\bf Sensitivity Analysis for Instrumental Variables Under Joint Relaxations of Monotonicity and Independence}
  \author{Pedro Picchetti\thanks{ \textit{pedro.picchetti@uc.cl}}\hspace{.2cm}\\
    Institute of Economics, PUC-Chile}
  \maketitle
} \fi

\if0\anon
{
  \bigskip
  \bigskip
  \bigskip
  \begin{center}
    {\LARGE\bf Title}
\end{center}
  \medskip
} \fi

\bigskip
\begin{abstract}
In this paper I develop a breakdown frontier approach to assess the sensitivity of Local Average Treatment Effects (LATE) estimates to violations of monotonicity and independence of the instrument. I parametrize violations of independence using the concept of $c$-dependence from \cite{mastenpoirier2018} and allow for the share of defiers to be greater than zero but smaller than the share of compliers. I derive identified sets for the LATE and the Average Treatment Effect (ATE) in which the bounds are functions of these two sensitivity parameters. Using these bounds, I derive the breakdown frontier for the LATE, which is the weakest set of assumptions such that a conclusion regarding the LATE holds. I derive consistent sample analogue estimators for the breakdown frontiers and provide a valid bootstrap procedure for inference. Monte Carlo simulations show the desirable finite-sample properties of the estimators and an empirical application shows that the conclusions regarding the effect of family size on female labor force participation from \cite{angev} are highly sensitive to violations of independence and monotonicity.
\end{abstract}

\noindent%
{\it Keywords:} Partial identification, heterogeneous treatment effects, selection on unobservables.
\vfill

\newpage
\spacingset{1.8} 

\section{Introduction}

Instrumental variables (IV) techniques are among the most widely used empirical tools in social sciences. In the canonical IV setting, the causal effect of a binary treatment is identified by exploiting variations in a binary instrument in the form of the \cite{wald} estimand. Point identification is achieved if the instrument satisfies a set of assumptions. For instance, the instrumental variable must be independent from potential treatments and potential outcomes. Also, the instrument must affect treatment uptake in the same direction for all individuals, which is usually referred to as the monotonicity assumption.

If the instrument satisfies the independence and the monotonicity assumption, along additional assumptions, then the Wald estimand identifies the average effect of the treatment for compliers, the subpopulation of individuals whose treatment status mimics its assignment, which is called the Local Average Treatment Effect, or simply LATE \citep{imbensangrist}.

In the recent years, applied researchers have grown increasingly skeptical of IV methods \citep{cinelli}. The identifying assumptions are often unverifiable, and although in certain cases some assumptions are readily justified (for instance, the independence assumption in experimental studies with imperfect compliance), in most cases they are defended by appealing to context-specific knowledge.

In this paper I study what can be learned about treatment effects in IV settings under relaxations of independence and monotonicity and develop a breakdown frontier approach  for assessing the sensitivity of IV estimates. I focus in the case where the outcome is binary. I begin by deriving bounds for potential treatments and potential outcomes under a bounded dependence assumption called \textit{c-dependence} \citep{mastenpoirier2018}, which bounds the distance between the probability of being assigned to treatment given observed covariates and unobserved potential quantities and the probability of assignment given just the observed covariates.

I then use the bounds for potential quantities to derive identified sets for the causal effect of assignment (which I will refer to as the Intention-to-Treat, or simply ITT) and the LATE, under a share of defiers which is greater than zero, but always smaller than the share of compliers. I derive the conditions under which these identified sets are sharp. In the cases where these conditions do not hold, the identified sets still provide a valid outer region for the parameters of interest.

One can argue that once the identifying assumptions for IV settings are violated, the LATE is no longer an interesting causal parameter. Thus, I also derive the identified set for the Average Treatment Effect (ATE) and show how the bounds under violations of independence and monotonicity are connected to well known bounds for the ATE using IVs in the causal inference literature \citep{Balke01091997,Chen1015-7483R1}.

I use the bounds of the ITT and the LATE to construct breakdown frontiers for conclusions regarding causal effects. The breakdown frontier in this setting provides the largest combination of violations of independence and monotonicity under which a particular conclusion holds. For instance, suppose a researcher finds a positive point estimate for the LATE, but is skeptical towards the identifying assumptions. To provide evidence of robustness of the qualitative takeaways of its findings (for instance, that the effect is indeed positive), the researcher can use the breakdown frontier to show the combinations of violations under which one can still conclude that the LATE is greater than zero.

I propose nonparametric estimators for the bounds of causal effects and breakdown values, and derive their asymptotic properties using convergence results for \textit{Hadamard
directional differentiable} functions \citep{fangsantos}. Standard inference methods such as the nonparametric bootstrap are not consistent for the breakdown frontiers. I show, however, that valid uniform confidence bands can be estimated using the boostrap procedure for \textit{Hadamard
directional differentiable} functions in \cite{fangsantos} and the numerical estimator for the Hadamard derivative from \cite{HONG2018379}. Monte Carlo simulations show the desirable finite sample properties of the estimators and inference procedures.

For the empirical application, I revisit \cite{angev}, which studies the effects of family size on female employment using same-sex siblings as the instrument. The estimated breakdown frontier for the LATE shows that the qualitative takeaway from this study only holds under very small violations of the identifying assumptions. Therefore, the breakdown frontier approach suggests that the conclusions of the study are highly sensitive to violations of independence and monotonicity.

\textbf{Related Literature:} This paper relates broadly to three strands of the causal inference literature. First, it is connected to the literature on partial identification and sensitivity analysis in IV settings. Most papers in this literature focus on partial identification and sensitivity analysis under violations of independence and the exclusion restriction\citep{conley,wang18,mastenporirer21,cinelli}. There also papers that focus on identification and sensitivity analysis under violations of monotonicity \citep{tolerating,noack2026sensitivity}. In this paper, I consider both relaxations of independence and monotonicity.

Second, this paper relates to the literature on the identification of breakdown values, introduced by \cite{horwitzmanski}. My approach to inference follows closely the one introduced in \cite{mastenpoirier2020} as it also uses $c$-dependence to parametrize violations of independence. While most of the work in this literature focuses on missing data settings \citep{klinesantos} and selection on observables \citep{mastenpoirier2020}, this is one of the first papers studying inference for breakdown values in settings with non-compliance. In that sense, it is closely related to the work of \cite{noack2026sensitivity}, but under a different parametrization for violations of monotonicity. A desirable feature of the breakdown analysis in this paper is that the violations of independence and monotonicity are measured in the same unit, which makes the interpretation of the tradeoffs of violations displayed by the breakdown frontier particularly easy.

Finally, this paper is related to the literature on IV settings with binary outcomes, which dates back to the seminal work of \cite{heckman78}. While most prominent work on this literature focuses on the identification of the average structural functions \citep{vytlacilyildiz,shaikhvytlacil} or partial identification of Average Treatment Effects \citep{Balke01091997,Chen1015-7483R1,MACHADO2019522}, this paper considers both the partial identification of the LATE and the ATE.

\textbf{Outline of the paper:} The rest of the paper is organized as follows: Section 2 describes the framework and target parameters in the setting. Section 3 provides the partial identification of potential treatments and outcomes, and in Section 4 I derive the identified sets for the ITT and the LATE show the identification of the breakdown frontiers. I also derive the identified sets for the ATE. Section 5 introduces the estimators and their asymptotic properties, as well as the bootstrap procedure used for inference. Section 6 presents the Monte Carlo simulation studies. Section 7 presents the empirical application and Section 8 concludes. Appendix A contains the main proofs from the results in the paper, and Appendix B contains auxiliary lemmas.

\section{General Framework}

\subsection*{Setup}

Let $Z\in\left\{ 0,1\right\}$ denote a binary variable that indicates whether an individual was assigned to treatment ($Z=1$) or control ($Z=0$). In this setting, non-compliance is allowed, which means that not all individuals assigned to treatment will actually take the treatment and not all individuals assigned to control will remain untreated. Rather than determining treatment status, the assignment represents an encouragement (or discouragement) towards treatment.

Let $D\in\left\{ 0,1\right\}$ denote the actual treatment status. Define the potential treatment associated to assignment $z$ as $D(z)$. We observe the treatment status

\vspace{-10mm}

\begin{equation*}
    D=ZD(1)+(1-Z)D(0)
\end{equation*}

Let $Y\in\left\{ 0,1\right\}$ denote the observed binary outcome. The potential outcome associated to assignment $z$ is defined as $Y(D(z),z)$. At first, I allow potential outcomes depend arbitrarily on treatment and assignment. Observed and potential outcomes are related by

\vspace{-10mm}

\begin{equation*}
    Y=ZY(D(1),1)+(1-Z)Y(D(0),0)
\end{equation*}

Let $X\in\mathcal{S}(X)$ be a vector of observed covariates and $p_{z|x}=\mathbb{P}\left ( Z=z|X=x \right )$ be the observed propensity score for assignment. I maintain the following assumption regarding the joint distribution of $(D(z),Y(D(z),z),Z,X)$ throughout the paper:

\textbf{Assumption 1:} For each $z,z'\in\left\{ 0,1\right\}$ and $x\in\mathcal{S}(X)$:

\begin{enumerate}
    \item $\mathbb{P}\left ( D(z)=1|Z=z',X=x \right )\in \left ( 0,1 \right )$
    \item $\mathbb{P}\left ( Y(D(z),z)=1|Z=z',X=x \right )\in \left ( 0,1 \right )$
    \item $\mathbb{P}\left ( Y(D(z),z)=1,D(z)=1|Z=z',X=x \right )\in \left ( 0,1 \right )$
    \item $p_{z|x}>0$
\end{enumerate}

Assumptions 1.1 to 1.3 state that the support of potential quantities does not depend on the assignment. Assumption 1.4 states that all individuals can be assigned to treatment and control with probability greater than zero, and is usually referred to as the common support, or overlap assumption.

I also maintain the standard exclusion restriction assumption from IV settings, which imposes that assignment does not affect potential outcomes directly.

\textbf{Assumption 2:} For $(D(z),z)\in\left\{0,1\right\}^{2}$, $Y(D(z),z)=Y(D(z))$.

In order to identify treatment effects with instrumental variables, it is standard to assume that the instrument is independent of potential outcomes and potential treatments conditional on $X=x$. The goal of this identification analysis is to study what can be said about treatment effects when standard IV assumptions fail to hold. To do this, I replace these standard assumptions by a bounded dependence assumption, called \textit{c-dependence} \citep{mastenpoirier2018}:

\textbf{Definition ($c$-dependence):} Let $z\in\left\{0,1\right\}$ and $x\in\mathcal{S}(X)$. Let $c$ be a scalar between 0 and 1. $Z$ is conditionally $c$-dependent with $Y(D(z)),D(z)$ given $X=x$ if

\vspace{-10mm}

\begin{equation*}
\sup_{(y,d)\in\mathcal{S}_{x}((Y(D(z)),D(z)))}\left | \mathbb{P}\left ( Z=1|Y(D(z))=y,D(z)=d,X=x \right )-\mathbb{P}\left ( Z=1|X=x \right )\right |\leq c
\end{equation*}

\noindent where $\mathcal{S}_{x}((Y(D(z)),D(z)))$ is the support of $(Y(D(z)),D(z))$ conditional on $X=x$. Conditional $c$-dependence provides a parametrization of violations of independence which has a straightforward interpretation. The sensitivity parameter $c$ can be interpreted as the difference between the unobserved assignment probability and the observed propensity score in terms of probability units. When $c=0$, independence holds, and potential probabilities $\mathbb{P}\left(Y(D(z))=y|X=x\right)$ and $\mathbb{P}\left(D(z)=d|X=x\right)$ are point identified. Throughout this paper, $c$-dependence is assume to hold.

\textbf{Assumption 3:} $Z$ is $c$-dependent with $(Y(D(1)),D(1))$ given $X$ and $(Y(D(0)),D(0))$ given $X$.

Without further assumptions, individuals can be partitioned into four groups regarding how they respond to assignment: always-takers ($at$), never-takers ($nt$), compliers ($co$) and defiers ($def$). Let $\pi_{g|x}$ denote the proportion of individuals from group $g\in\left\{at,nt,co,def\right\}$ with covariates equal to $x$. The fundamental behavioral assumption in IV settings is the monotonicity assumption, which imposes that for all $x\in\mathcal{S}(X)$, $\pi_{def|x}=0$.

I relax this assumption to allow for the presence of defiers, but I restrict the proportion of defiers to be smaller than the share of compliers:

\textbf{Assumption 4:} For all $x\in\mathcal{S}(X)$, $\pi_{co|x}>\pi_{def|x}$.

Assumption 4 is analogous to Assumption 5 from \cite{tolerating}. If this assumption holds, then the the estimate for the first-stage is positive\footnote{See \cite{noack2026sensitivity} for a partial ID framework where defiance is allowed which relaxes this assumption.}. Thus, the sensitivity parameter $\pi_{def|x}$ can be seen as a measure of deviation between the proportion of compliers $\pi_{co|x}$ and the first-stage estimand $\mathbb{E}\left[D|Z=1,X=x\right]-\mathbb{E}\left[D|Z=0,X=x\right]$ in terms of probability units.

If assumption 3 holds with $c=0$, then potential quantities are identified. If Assumption 4 further holds with $\pi_{def|x}=0$ for all $x$, then we go back to the standard IV setting with binary outcomes, where the LATE is point identified by the Wald estimand, and the ATE is partially identified within the bounds provided by \cite{Balke01091997}.

\subsection*{Target Parameters}

In this paper, I focus on the partial Identification of the Local Average Treatment Effect for compliers, which is usually the target parameter in IV settings and the Average Treatment Effect (ATE), which is typically the causal parameter that researchers would ideally like to identify. Define $LATE_{g}=\mathbb{E}\left[Y(1)-Y(0)|g\right]$ as the local average treatment effect for group $g$, with $g\in\left\{at,nt,co,def\right\}$. We are thus, interested in the partial identification of $LATE_{co}$ and the identification of a breakdown frontier which can be used to assess the robustness of results from studies that employ IV methods.

Researchers often report a point estimate of the paramerer $LATE_{co}$ because they assume the parameter is point identified in their instrumental variable setting. However, it is often argued that $LATE_{co}$ is not necessarily a relevant parameter \citep{huber2017jae}. Moreover, once the instrument is not assumed to be independent from conditional quantities, nor it is assumed to be monotonic, then potential quantities for the sub-population of compliers are no longer point identified. Therefore, I also focus on the partial identification of the ATE.

The partial identification approach is built using the following steps. First, I derive the bounds for conditional potential joint probabilities $\mathbb{P}\left(Y(D(z))=y,D(z)=d|X=x\right)$. Then, I derive the bounds for the marginal probabilities of potential outcomes and potential treatments, $\mathbb{P}\left(Y(D(z))=y|X=x\right)$ and $\mathbb{P}\left(D(z)=d|X=x\right)$. Then, I derive bounds for the parameter $LATE_{co|x}$, which is the LATE for compliers conditional on $X=x$, and a breakdown frontier, and finally bounds for the conditional ATE. Unconditional quantities are partially identified by integrating the conditional bounds over the distribution of covariates.

\section{Partial Identification of Potential Probabilities}

I begin with the identified set for the joint probability of potential quantities. I begin with the joint probability of potential quantities. Under Assumptions 1 and 2, the results from Proposition 5 of \cite{mastenpoirier2018} can be readily adapted to the IV setting and the modified conditional $c$-dependence assumption. Let $p_{y,d|z,x}=\mathbb{P}\left(Y=y,D=d|Z=z,X=x\right)$:

\begin{proposition}
    Suppose Assumptions 1-3 hold. Then the sharp identified set for $\mathbb{P}\left(Y(D(z))=y,D(z)=d|X=x\right)$ is

    \vspace{-10mm}

    {\small \begin{equation*}
    \mathbb{P}\left ( Y(D(z))=y,D(z)=d|X=x \right )\in\left [\mathbb{P}_{c}^{l}\left ( Y(D(z))=y,D(z)=d|X=x \right ),\mathbb{P}_{c}^{u}\left ( Y(D(z))=y,D(z)=d|X=x \right )  \right ]
\end{equation*}}

where

\vspace{-10mm}

\begin{align*}
    &\mathbb{P}_{c}^{l}\left ( Y(D(z))=y,D(z)=d|X=x \right )=\\&\max\left\{\frac{p_{y,d|z,x}p_{z|x}}{p_{z|x}+c},\frac{p_{y,d|z,x}p_{z|x}-c}{p_{z|x}-c}\mathbf{1}\left ( p_{z|x}>c \right ),p_{y,d|z,x}p_{z|x}\right\},\\&\mathbb{P}_{c}^{u}\left ( Y(D(z))=y,D(z)=d|X=x \right )\\&=\min\left\{\frac{p_{y,d|z,x}p_{z|x}}{p_{z|x}-c}\mathbf{1}\left ( p_{z|x}>c \right )+\mathbf{1}\left ( p_{z|x}\leq c \right ),\frac{p_{y,d|z,x}p_{z|x}+c}{p_{z|x}+c},p_{y,d|z,x}p_{z|x}+(1-p_{z|x})\right\}
\end{align*}

\end{proposition}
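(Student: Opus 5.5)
The argument is Bayes' rule combined with the $c$-dependence constraint, following Proposition 5 of \cite{mastenpoirier2018}. Fix $z$, $x$, and a cell $(y,d)$. Write $q=\mathbb{P}(Y(D(z))=y,D(z)=d|X=x)$ for the unknown target and $\rho=\mathbb{P}(Z=z|Y(D(z))=y,D(z)=d,X=x)$ for the unknown conditional assignment probability. By Assumption 2 the observed cell probability satisfies $p_{y,d|z,x}p_{z|x}=\mathbb{P}(Y(D(z))=y,D(z)=d,Z=z|X=x)=\rho q$. I will use the pair $(q,\rho)$ as the basic object and derive the bounds on $q$ from three constraints on it.

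\textbf{Constraint 1 ($c$-dependence).} Assumption 3 gives $|\rho-p_{z|x}|\le c$. For $z=1$ this is immediate from the definition. For $z=0$ it follows because $\mathbb{P}(Z=0|\cdot)=1-\mathbb{P}(Z=1|\cdot)$. Together with $\rho\in(0,1]$, this yields $\max\{p_{z|x}-c,0\}<\rho\le\min\{p_{z|x}+c,1\}$.

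\textbf{Constraint 2 (Bayes identity).} Since $q=p_{y,d|z,x}p_{z|x}/\rho$, the first two terms of each bound follow by monotonicity in $\rho$. Setting $\rho\le p_{z|x}+c$ gives the lower bound $p_{y,d|z,x}p_{z|x}/(p_{z|x}+c)$. Setting $\rho\ge p_{z|x}-c$ gives the upper bound $p_{y,d|z,x}p_{z|x}/(p_{z|x}-c)$ when $p_{z|x}>c$, and the trivial bound $1$ otherwise.

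\textbf{Constraint 3 (complement cell).} The complementary event satisfies $(1-\rho')(1-q)=\mathbb{P}(Z=z|\text{not }(y,d),X=x)(1-q)=p_{z|x}-p_{y,d|z,x}p_{z|x}$, where $\rho'$ obeys the same $c$-band because $c$-dependence holds on every support point and $\rho'$ is an average of them. Solving $1-q=(p_{z|x}-p_{y,d|z,x}p_{z|x})/\rho'$ with $\rho'\in[p_{z|x}-c,p_{z|x}+c]$ produces the terms $(p_{y,d|z,x}p_{z|x}-c)/(p_{z|x}-c)$ and $(p_{y,d|z,x}p_{z|x}+c)/(p_{z|x}+c)$. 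Finally, the law of total probability gives the worst-case Manski bounds $p_{y,d|z,x}p_{z|x}\le q\le p_{y,d|z,x}p_{z|x}+(1-p_{z|x})$, which correspond to $c$ large. Taking the max and min of these expressions gives validity of the stated interval.

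\textbf{Sharpness (the main obstacle).} For each $q$ in the interval I must construct a joint distribution of $(Y(D(z)),D(z),Z)$ given $X=x$ that satisfies the following:
\begin{itemize}
\item it reproduces $p_{\cdot|z,x}$ and $p_{z|x}$;
\item it obeys $c$-dependence simultaneously for all four support points;
\item it satisfies Assumption 1.
\end{itemize}
I would proceed in four steps:
\begin{enumerate}
\item Put mass $q$ on cell $(y,d)$ and set $\rho=p_{y,d|z,x}p_{z|x}/q$.
\item Spread the remaining mass $1-q$ over the other three cells, each with a common conditional assignment probability $\rho'=(p_{z|x}-p_{y,d|z,x}p_{z|x})/(1-q)$, keeping the observed proportions within $Z=z$.
\item Assign the unobserved $Z\neq z$ arm's distribution arbitrarily, so that it is unconstrained.
\item Check that $\rho,\rho'\in[p_{z|x}-c,p_{z|x}+c]\cap(0,1)$ exactly when $q$ lies between the stated max and min.
\end{enumerate}
This is precisely the inversion of the three constraints above, and continuity in $q$ fills the interval. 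The delicate points are the boundary cases $p_{z|x}\le c$, where the indicator terms switch off, and the strict inequalities of Assumption 1, which may require approximating endpoints so that the interval is the closure of the sharp set.
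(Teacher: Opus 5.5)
Your argument is correct, but it is organized differently from the paper's. The paper proves validity only by citing Proposition 5 of \cite{mastenpoirier2018}. For sharpness it builds two DGPs, one per endpoint, by setting $\mathbb{P}(Y(D(z))=y,D(z)=d|Z=1-z,X=x)=(\mathbb{P}^{u}_{c}-p_{y,d|z,x}p_{z|x})/(1-p_{z|x})$ (resp.\ with $\mathbb{P}^{l}_{c}$). It checks $c$-dependence only on the target cell $(y,d)$ and fills the interior by mixtures.

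You instead parametrize by $q$ and derive every term of the bounds explicitly. The target-cell constraints $\rho\le p_{z|x}+c$, $\rho\ge p_{z|x}-c$, $\rho\le 1$ give $p_{y,d|z,x}p_{z|x}/(p_{z|x}\pm c)$ and the lower Manski term. The same constraints on the complement's $\rho'$ give $(p_{y,d|z,x}p_{z|x}\pm c)/(p_{z|x}\pm c)$ and the upper Manski term. For each $q$ you then build one DGP in which the three other cells share the common assignment probability $\rho'$, so $c$-dependence holds at all four support points.

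This buys something real. When, e.g., the upper bound equals $(p_{y,d|z,x}p_{z|x}+c)/(p_{z|x}+c)$, one gets $\rho'=p_{z|x}+c$ exactly. Since $\rho'$ averages the cell-level assignment probabilities, each bounded by $p_{z|x}+c$, every other cell with positive mass must sit exactly at $p_{z|x}+c$. So your proportional filling is forced, and the paper's check, confined to the target cell, leaves this step implicit.

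Your closure caveat is also warranted. At an endpoint equal to $p_{y,d|z,x}p_{z|x}$ or $p_{y,d|z,x}p_{z|x}+1-p_{z|x}$, the counterfactual-arm probability is $0$ or $1$, so Assumption 1.3 can fail there, despite the paper's ``Proof of 2''.

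The paper's route is shorter. Its mixture step is legitimate because $c$-dependence, written as $(p_{z|x}-c)\mathbb{P}(\text{cell})\le\mathbb{P}(\text{cell},Z=z)\le(p_{z|x}+c)\mathbb{P}(\text{cell})$, is linear in the joint law. You do not need mixtures because you hit each $q$ directly.

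Three small fixes:
\begin{enumerate}
\item In Constraint 3 the left side should be $\rho'(1-q)$ with $\rho'=\mathbb{P}(Z=z|\text{not }(y,d),X=x)$, not $(1-\rho')(1-q)$. Your subsequent formulas already use the correct version.
\item In step 3, the $Z\neq z$ arm law of $(Y(D(z)),D(z))$ is already pinned down by steps 1--2. What remains free is the law of $(Y(D(1-z)),D(1-z))$ and its coupling with $(Y(D(z)),D(z))$. That law is not fully arbitrary, since it must reproduce $p_{\cdot|1-z,x}$ and Assumption 3 applies to it too. Taking it independent of $Z$ given $X$ with law $p_{\cdot|1-z,x}$ works.
\item $\rho=1$ is admissible at the lower Manski endpoint, which binds only when $p_{z|x}+c\ge 1$. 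So the relevant range in step 4 is $(0,1]$, not $(0,1)$; strictness matters only for Assumption 1, which your closure remark handles.
\end{enumerate}
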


The notation introduced in Proposition 1 for the bounds on the joint probability of potential quantities illustrates the fact the bounds are functions of the sensitivity parameter $c$. When $c=0$ the joint probability is point identified by the conditional probability $\mathbb{P}\left(Y=y,D=d|Z=z,X=x\right)$,and as $c$ increases the identified set becomes larger until we reach the worst-case identified set.

The bounds from Proposition 1 can be combined using the Law of Total Probabilities to obtain bounds for the marginal probabilities of potential quantities. I begin with the bounds for potential outcomes:

\begin{proposition}
    Suppose Assumptions 1-3 hold. Then the identified set for $\mathbb{P}\left(Y(D(z))=y|X=x\right)$ is 

    \vspace{-10mm}

    \begin{equation*}
        \mathbb{P}\left ( Y(D(z))=y|X=x \right )\in\left [\mathbb{P}_{c}^{l}\left ( Y(D(z))=y|X=x \right ),\mathbb{P}_{c}^{u}\left ( Y(D(z))=y|X=x \right )  \right ]
    \end{equation*}

    \noindent where

    \vspace{-10mm}

\begin{align*}
    &\mathbb{P}_{c}^{l}\left ( Y(D(z))=y|X=x \right )\\&=\max\left\{ \mathbb{P}^{l}_{c}\left(Y(D(z))=y,D(z)=1|X=x\right)+\mathbb{P}^{l}_{c}\left(Y(D(z))=y,D(z)=0|X=x\right),p_{y|z,x}p_{z|x}\right\},\\&\mathbb{P}_{c}^{u}\left ( Y(D(z))=y|X=x \right )\\&=\min\left\{\mathbb{P}^{u}_{c}\left(Y(D(z))=y,D(z)=1|X=x\right)+\mathbb{P}^{u}_{c}\left(Y(D(z))=y,D(z)=0|X=x\right),p_{y|z,x}p_{z|x}+(1-p_{z|x})\right\}
\end{align*}

Moreover, if $p_{y,d|z,x}<1/2$ and $c\in\left ( 0,\min\left\{ p_{z|x}(1-2p_{y,d|z,x}),\frac{p_{z|x}(1-p_{z|x})(1-p_{y,d|z,x})}{p_{y,d|z,x}p_{z|x}+1-p_{z|x}}\right\} \right )$, then the identified set is sharp.
\end{proposition}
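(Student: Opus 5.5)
The proof has two parts: validity of the bounds, which is routine, and sharpness, which is where the work lies.

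\textbf{Validity.} By the Law of Total Probability,
\[
\mathbb{P}(Y(D(z))=y\mid X=x)=\sum_{d\in\{0,1\}}\mathbb{P}(Y(D(z))=y,D(z)=d\mid X=x).
\]
Proposition 1 bounds each summand, so the sum lies between the sums of the lower and upper bounds from Proposition 1.

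A second, cruder bound comes from decomposing over $Z$:
\[
\mathbb{P}(Y(D(z))=y\mid X=x)=\mathbb{P}(Y=y\mid Z=z,X=x)\,p_{z|x}+\mathbb{P}(Y(D(z))=y\mid Z=1-z,X=x)\,(1-p_{z|x}).
\]
The last conditional probability can be anywhere in $[0,1]$. This gives the worst-case bounds $p_{y|z,x}p_{z|x}$ and $p_{y|z,x}p_{z|x}+(1-p_{z|x})$. Intersecting the two intervals yields the stated max and min.

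\textbf{Sharpness: setup.} For each $t$ in the interior of the interval, I would construct a joint distribution of $(Y(D(z)),D(z),Z)$ given $X=x$ that meets three requirements:
\begin{enumerate}
\item it reproduces the observed $p_{y,d|z,x}$ and $p_{z|x}$;
\item it satisfies $c$-dependence at every support point $(y',d)$;
\item it gives $\mathbb{P}(Y(D(z))=y\mid X=x)=t$.
\end{enumerate}
Following Masten and Poirier, I parametrize the latent distribution by the cell masses $q_{y',d}=\mathbb{P}(Y(D(z))=y',D(z)=d\mid X=x)$. Bayes' rule then gives
\[
\mathbb{P}(Z=z\mid y',d,x)=\frac{p_{y',d|z,x}\,p_{z|x}}{q_{y',d}}.
\]
Requiring this to lie in $[p_{z|x}-c,\,p_{z|x}+c]$ is exactly the interval of Proposition 1 for each cell. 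The complementary-assignment quantities are then pinned down by consistency: $\mathbb{P}(Y(D(z))=y',D(z)=d\mid Z=1-z,X=x)$ follows from the $q_{y',d}$.

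\textbf{Sharpness: joint attainability.} The key step is to show that the four cell bounds can be attained \emph{simultaneously}, subject to two constraints:
\begin{itemize}
\item the $q_{y',d}$ sum to one, so the implied distribution given $Z=1-z$ is a probability distribution;
\item the $c$-dependence constraint for $Z=1$ holds at each cell, which for binary $Z$ is symmetric to the constraint for $Z=z$.
\end{itemize}
Concretely, I set the two cells with $Y=y$ at their lower (respectively upper) Proposition 1 bounds. I then distribute the remaining mass $1-q_{y,1}-q_{y,0}$ over the two cells with $Y=1-y$. I must check that these masses lie within their own Proposition 1 intervals, and intermediate values $t$ then follow by convexity of the constraint set.

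\textbf{Main obstacle.} Feasibility of this redistribution is where the stated conditions enter. The condition $p_{y,d|z,x}<1/2$ together with $c<p_{z|x}(1-2p_{y,d|z,x})$ should ensure that the first-term bound $p_{y,d|z,x}p_{z|x}/(p_{z|x}\pm c)$ is the binding element of each max or min, rather than the worst-case term. It should also keep the leftover mass large enough to be split compatibly with $c$-dependence.

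The second condition, $c<p_{z|x}(1-p_{z|x})(1-p_{y,d|z,x})/(p_{y,d|z,x}p_{z|x}+1-p_{z|x})$, should be exactly what keeps the complementary cells' implied propensity scores within $c$ of $p_{z|x}$. It also keeps the sum of the joint bounds below the worst-case bound, so the max and min are attained by the first component.

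I would verify these inequalities by direct algebra on each cell. If the simultaneous extreme fails, I would instead approach the endpoints as limits of feasible interior points, which suffices for sharpness of the closure.
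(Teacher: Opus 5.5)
Your validity argument and your sharpness construction (both $Y=y$ cells at their Proposition 1 extremes, then back out the $Z=1-z$ conditionals) are the same as the paper's. You correctly locate the difficulty in placing the leftover mass on the two $Y=1-y$ cells, and the paper's proof never performs that check. But this step cannot be completed under the stated hypotheses. Neither condition does what you attribute to it:
\begin{itemize}
\item $c<p_{z|x}(1-2p_{y,d|z,x})$ is equivalent to $\frac{p_{y,d|z,x}p_{z|x}}{p_{z|x}-c}<\frac{p_{y,d|z,x}p_{z|x}+c}{p_{z|x}+c}$;
\item the second condition is equivalent to $\frac{p_{y,d|z,x}p_{z|x}}{p_{z|x}-c}<p_{y,d|z,x}p_{z|x}+1-p_{z|x}$.
\end{itemize}
Both only select the active term of the cellwise min, and say nothing about the complementary cells.

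Write $p=p_{z|x}$, $A=p_{y|z,x}$, and suppose $p+c\le 1$. $c$-dependence at the cells $(1-y,d)$ forces $q_{1-y,d}\ge p\,p_{1-y,d|z,x}/(p+c)$. Then $\sum q=1$ gives
\[
\mathbb{P}(Y(D(z))=y\mid X=x)\le\frac{pA+c}{p+c},\qquad \frac{pA}{p-c}-\frac{pA+c}{p+c}=\frac{c\,\bigl(c-p(1-2A)\bigr)}{p^{2}-c^{2}},
\]
so the claimed endpoint $pA/(p-c)$ is unattainable whenever $c>p(1-2A)$. The hypotheses bound $c$ only through the cell probabilities, which do not control $A=p_{y,1|z,x}+p_{y,0|z,x}$. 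Concretely, take $p=1/2$, $p_{y,1|z,x}=p_{y,0|z,x}=0.3$, $p_{1-y,1|z,x}=p_{1-y,0|z,x}=0.2$, and $c=0.1$. Every stated condition holds (the binding one is $c<0.2$), and the stated upper bound is $0.375+0.375=0.75$. Yet every joint distribution consistent with the data and $c$-dependence has $\mathbb{P}(Y(D(z))=y\mid X=x)\le 2/3$.

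Your fallback does not rescue this. The admissible cell masses form a compact polytope (weak inequalities plus $\sum q=1$), so the attainable values form a closed interval, and no sequence of admissible distributions approaches $0.75$. The sharpness claim is therefore false as stated; the gap is in the statement as much as in your argument.

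What your redistribution step actually needs for the upper endpoint is $c\le p_{z|x}(1-2p_{y|z,x})$, a condition on $p_{y|z,x}$ rather than on the cells. Under it, the leftover mass $1-\frac{pA}{p-c}$ exceeds the required minimum $\frac{p(1-A)}{p+c}$ by $\frac{c\,(p(1-2A)-c)}{p^{2}-c^{2}}\ge 0$, and your construction plus convexity goes through. For the lower endpoint, the analogous requirement $c\ge p(2A-1)$ is automatic once $A<1/2$.

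Alternatively, the missing ingredient is the event-level term $\frac{p_{y|z,x}p_{z|x}+c}{p_{z|x}+c}$, i.e.\ the Masten--Poirier bound applied directly to $\{Y(D(z))=y\}$. This is valid because $c$-dependence on each cell implies it on their union, and adding it to the min removes the obstruction.
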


Proposition 2 shows that the bounds for potential outcomes can be obtained by combining the bounds for join potential quantities, and the additional conditions under which the identified set is sharp. Essentially, the additional conditions imply that the upper bound for joint potential probabilities simplifies to 
$\mathbb{P}^{u}_{c}\left(Y(D(z))=y,D(z)=d|X=x\right)=\frac{p_{y,d|z,x}p_{z|x}}{p_{z|x}-c}$, and that the lower bound simplifies to $\mathbb{P}^{l}_{c}\left(Y(D(z))=y,D(z)=d|X=x\right)=\frac{p_{y,d|z,x}p_{z|x}}{p_{z|x}+c}$. If this conditions do not hold, the identified set still provides a valid outer region.

The results from Proposition 2 can be easily adapted to obtain bounds for potential treatments:

\begin{proposition}
    Suppose Assumptions 1-3 hold. Then the identified set for $\mathbb{P}\left(D(z)=d|X=x\right)$ is 

    \vspace{-10mm}

    \begin{equation*}
        \mathbb{P}\left ( D(z)=d|X=x \right )\in\left [\mathbb{P}_{c}^{l}\left ( D(z)=d|X=x \right ),\mathbb{P}_{c}^{u}\left ( D(z)=d|X=x \right )  \right ]
    \end{equation*}

    \noindent where

    \vspace{-10mm}

    \begin{align*}
    &\mathbb{P}_{c}^{l}\left ( D(z)=d|X=x \right )\\&=\max\left\{ \mathbb{P}^{l}_{c}\left(Y(D(z))=1,D(z)=d|X=x\right)+\mathbb{P}^{l}_{c}\left(Y(D(z))=0,D(z)=d|X=x\right),p_{d|z,x}p_{z|x}\right\},\\&\mathbb{P}_{c}^{u}\left ( D(z)=d|X=x \right )\\&=\min\left\{\mathbb{P}^{u}_{c}\left(Y(D(z))=1,D(z)=d|X=x\right)+\mathbb{P}^{u}_{c}\left(Y(D(z))=0,D(z)=d|X=x\right),p_{d|z,x}p_{z|x}+(1-p_{z|x})\right\}
\end{align*}

Moreover, if $p_{y,d|z,x}<1/2$ and $c\in\left ( 0,\min\left\{ p_{z|x}(1-2p_{y,d|z,x}),\frac{p_{z|x}(1-p_{z|x})(1-p_{y,d|z,x})}{p_{y,d|z,x}p_{z|x}+1-p_{z|x}}\right\} \right )$, then the identified set is sharp.
\end{proposition}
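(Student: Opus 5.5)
The argument runs parallel to that of Proposition 2, with the roles of $Y$ and $D$ exchanged: the marginal $\mathbb{P}(D(z)=d|X=x)$ is obtained by summing over $y\in\{0,1\}$ rather than over $d$.

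\textbf{Validity.} By the law of total probability,
\begin{equation*}
\mathbb{P}(D(z)=d|X=x)=\mathbb{P}(Y(D(z))=1,D(z)=d|X=x)+\mathbb{P}(Y(D(z))=0,D(z)=d|X=x).
\end{equation*}
Proposition 1 bounds each summand, so summing the lower and upper bounds gives valid bounds on the marginal. Separately, there is a worst-case bound that uses no independence. Write
\begin{equation*}
\mathbb{P}(D(z)=d|X=x)=\mathbb{P}(D(z)=d|Z=z,X=x)p_{z|x}+\mathbb{P}(D(z)=d|Z=1-z,X=x)(1-p_{z|x}).
\end{equation*}
On $\{Z=z\}$ we have $D=D(z)$, so the first conditional probability equals $p_{d|z,x}$. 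The second is unrestricted in $[0,1]$. This yields the interval $[p_{d|z,x}p_{z|x},\,p_{d|z,x}p_{z|x}+1-p_{z|x}]$. Intersecting the two intervals gives the max/min expressions in the statement.

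\textbf{Sharpness.} Under the stated restrictions on $p_{y,d|z,x}$ and $c$, I first check, as claimed after Proposition 2, that the bounds of Proposition 1 simplify to
\begin{equation*}
\frac{p_{y,d|z,x}p_{z|x}}{p_{z|x}+c}\quad\text{and}\quad\frac{p_{y,d|z,x}p_{z|x}}{p_{z|x}-c}.
\end{equation*}
This needs $c<p_{z|x}(1-2p_{y,d|z,x})$ to rank the ratio terms against the terms involving $\pm c$ in the numerator. It needs the second threshold to rank them against the worst-case terms. These are routine algebraic comparisons.

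Next I construct a joint distribution of $(Y(D(z)),D(z),Z)$ given $X=x$ that attains the lower bound simultaneously for both $y=0$ and $y=1$, and similarly for the upper bound. Intermediate values then follow by convexity (mixing the two distributions), since the $c$-dependence constraint is preserved under the relevant mixtures of the conditional propensities.

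The construction follows Masten and Poirier: set $\mathbb{P}(Z=z|Y(D(z))=y,D(z)=d,X=x)=p_{z|x}+c$ on both cells $(y,d)$ with the given $d$. By Bayes' rule this produces exactly $\mathbb{P}(Y(D(z))=y,D(z)=d|X=x)=p_{y,d|z,x}p_{z|x}/(p_{z|x}+c)$. The leftover probability mass is placed on the cells with treatment $1-d$. Their propensities are chosen so that the marginal $\mathbb{P}(Z=z|X=x)=p_{z|x}$ and the observed $p_{y',1-d|z,x}$ are reproduced, and each propensity stays within $c$ of $p_{z|x}$.

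\textbf{Main obstacle.} The hard step is verifying that these complementary-cell propensities satisfy $c$-dependence and lie in $[0,1]$ while both $d$-cells are pushed to an extreme at once. This is where the upper-range condition on $c$ has to be used. I would check it directly by solving the two moment equations for the remaining cells. If it fails in general, I would fall back on splitting the complementary mass unevenly across $y'$, which the distribution over $Y(D(z))$ within treatment $1-d$ freely allows.
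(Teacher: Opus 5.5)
Your validity argument is correct and matches the paper's, and your sharpness strategy is also the paper's. The gap is in sharpness, at exactly the step you flag as the main obstacle. It cannot be repaired by splitting the complementary mass unevenly across $y'$, because the binding restriction is on the aggregate. The quantity $\mathbb{P}(Z=z\mid D(z)=1-d,X=x)=p_{z|x}(1-p_{d|z,x})/\{1-\mathbb{P}(D(z)=d\mid X=x)\}$ is a weighted average over $y'$ of the two cell propensities. So if it leaves $[p_{z|x}-c,p_{z|x}+c]$, some cell does too, whatever the split. At your upper target $p_{d|z,x}p_{z|x}/(p_{z|x}-c)$, this aggregate is at most $p_{z|x}+c$ if and only if $c\le p_{z|x}(1-2p_{d|z,x})$. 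At the lower target $p_{d|z,x}p_{z|x}/(p_{z|x}+c)$, it is at least $p_{z|x}-c$ if and only if $c\ge p_{z|x}(2p_{d|z,x}-1)$. The hypotheses control only the cells $p_{y,d|z,x}$, not $p_{d|z,x}=p_{1,d|z,x}+p_{0,d|z,x}$.

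For instance, take $p_{z|x}=0.5$, $c=0.1$, $p_{1,d|z,x}=p_{0,d|z,x}=0.3$ and $p_{1,1-d|z,x}=p_{0,1-d|z,x}=0.2$. Every stated condition holds (the smallest threshold is $0.2$), and the stated upper bound is $\min\{0.375+0.375,\,0.8\}=0.75$. But at $0.75$ the propensity on $\{D(z)=1-d\}$ equals $0.2/0.25=0.8>0.6$, and the true maximum is $(0.3+0.1)/0.6=2/3$. More generally, since $p_{1|z,x}+p_{0|z,x}=1$, under the stated hypotheses the upper bound is never attained for the value of $d$ with $p_{d|z,x}\ge 1/2$ whenever $0<c<1-p_{z|x}$. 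So the sharpness claim is false as stated, and no construction can establish it.

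What is true is the following. The mass within $\{D(z)=d\}$ and within $\{D(z)=1-d\}$ can always be split proportionally to the observed cells, which gives every cell the aggregate propensity of its event. Hence the four-cell problem reduces to the two-cell one. The sharp set for $\mathbb{P}(D(z)=d\mid X=x)$ is then Proposition 1's formula with $p_{y,d|z,x}$ replaced by $p_{d|z,x}$. Equivalently, the stated interval becomes sharp once you add $(p_{d|z,x}p_{z|x}+c)/(p_{z|x}+c)$ inside the min and $(p_{d|z,x}p_{z|x}-c)/(p_{z|x}-c)$ inside the max.

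Your construction, with proportional splitting, does prove sharpness of the stated interval if you add the hypotheses $c\le p_{z|x}(1-2p_{d|z,x})$ and $p_{z|x}+c\le 1$. The second is also needed for your claim that the lower joint bound reduces to $p_{y,d|z,x}p_{z|x}/(p_{z|x}+c)$, and for the propensity $p_{z|x}+c$ to be admissible. It is not implied by the stated thresholds: take $p_{z|x}=0.9$, all four cells equal to $0.25$, and $c=0.15$.

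The paper's own argument, which defers to Proposition 2, does not supply the missing step either. It verifies $c$-dependence only on the two cells pushed to their bounds and never on the complementary cells.
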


Bounds for unconditional potential probabilities are obtained by integrating the bounds of conditional probabilities over the distribution of covariates. The bounds derived in this section are the building blocks for the partial identification of treatment effects which is presented in the next section.

\section{Partial Identification of Treatment Effects}

\subsection{Partial Identification of $LATE_{co}$}

I begin deriving bounds for the average treatment effects of the sub-population of compliers. In the standard IV setting with covariates, the parameter $LATE_{co|x}$ is partially identified by the conditional Wald estimand:

\vspace{-10mm}

{\small \begin{equation*}
    \frac{\mathbb{E}\left[Y|Z=1,X=x\right]-\mathbb{E}\left[Y|Z=0,X=x\right]}{\mathbb{E}\left[D|Z=1,X=x\right]-\mathbb{E}\left[D|Z=0,X=x\right]}=\frac{\mathbb{E}\left[Y(1)-Y(0)|co,X=x\right]\pi_{co|x}}{\pi_{co|x}}=\mathbb{E}\left[Y(1)-Y(0)|co,X=x\right]
\end{equation*}}

The numerator of the Wald estimand, usually referred to as the reduced form estimand, identifies the causal effect of assignment, $\mathbb{E}\left[Y(D(1))-Y(D(0))|X=x\right]$, which is equal to the treatment effect for compliers multiplied by the share of compliers in the standard IV setting. This parameter is often called the Intention-to-Treat effect (I will refer to its conditional as $ITT_{x}$ and its unconditional version as $ITT$). The ITT is rarely the parameter of interest in IV settings, but it carries important information regarding the LATE for compliers. For instance, the paramater $LATE_{co}$ has the same sign as the ITT if monotonicity holds, or if the share of compliers is greater than the share of defiers. The next proposition provides the bounds for the ITT as functions of the sensitivity parameters $c$ and $\pi_{def|x}$, as well as the conditions under which these bounds are sharp.

\begin{proposition}
    Suppose Assumptions 1-4 hold. Then $ITT_{x}\in\left[ITT^{l}(c,\pi_{def|x}),ITT^{u}(c,\pi_{def|x})\right]$, where

    \vspace{-10mm}

    \begin{align*}
        &ITT^{u}(c,\pi_{def|x})=\min\left\{\mathbb{P}_{c}^{u}\left(Y(D(1))=1|X=x\right)-\mathbb{P}_{c}^{l}\left(Y(D(0))=1|X=x\right)+\pi_{def|x},1\right\},\\&ITT^{l}(c,\pi_{def|x})=\max\left\{\mathbb{P}_{c}^{l}\left(Y(D(1))=1|X=x\right)-\mathbb{P}_{c}^{u}\left(Y(D(0))=1|X=x\right)-\pi_{def|x},-1\right\}
    \end{align*}

    Moreover, if

    \vspace{-25mm}

    \begin{align*}
        &\\&(i)\max\left\{0, \mathbb{P}\left ( D(0)=1|X=x \right )-\mathbb{P}\left ( D(1)=1|X=x \right ) \right\}\\&\leq \pi_{def|x}\\&\leq \min\left\{ \mathbb{P}\left ( D(0)=1|X=x \right ),\mathbb{P}\left ( D(1)=0|X=x \right )\right\}\\&(ii)\ p_{y,d|z,x}<1/2\\&(iii)\ c\in\left ( 0,\min\left\{ p_{z|x}(1-2p_{y,d|z,x}),\frac{p_{z|x}(1-p_{z|x})(1-p_{y,d|z,x})}{p_{y,d|z,x}p_{z|x}+1-p_{z|x}}\right\} \right )\\&(iv)\  \mathbb{P}^{u}_{c}\left(Y(D(0))=1|X=x\right)-\mathbb{P}^{l}_{c}\left(Y(D(1))=1|X=x\right)-1\\&<\pi_{def|x}\\&<1-\left(\mathbb{P}^{u}_{c}\left(Y(D(1))=1|X=x\right)-\mathbb{P}^{l}_{c}\left(Y(D(0))=1|X=x\right)\right)
    \end{align*}
    
   \noindent for all $\left(y,d\right)\in\left\{0,1\right\}^{2}$ and $x\in\mathcal{S}(X)$, then the identified set is sharp.
\end{proposition}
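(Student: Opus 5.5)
The plan is to write $ITT_x$ as a difference of the two marginal potential-outcome probabilities, bound each with Proposition 2, and then fold the defier share into the bounds through the type decomposition.

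\textbf{Validity.} By Assumption 2,
\[
ITT_x=\mathbb{P}(Y(D(1))=1|X=x)-\mathbb{P}(Y(D(0))=1|X=x).
\]
Taking the upper bound of the first term and the lower bound of the second, and vice versa, from Proposition 2 gives an interval for the difference. To see where $\pi_{def|x}$ enters, I will split the population into the four types:
\[
ITT_x=\pi_{co|x}LATE_{co|x}-\pi_{def|x}LATE_{def|x}.
\]
The defier term contributes at most $\pi_{def|x}$ in absolute value, since $|LATE_{def|x}|\le 1$. Relative to the monotone benchmark, the contrast of potential outcomes can therefore shift by at most $\pi_{def|x}$ in either direction. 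Adding $\pm\pi_{def|x}$ to the $c$-dependence bounds and truncating at $\pm1$ yields $ITT^{u}$ and $ITT^{l}$.

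I should admit this step is heuristic. Because $c$-dependence already bounds each marginal directly, the additive $\pi_{def|x}$ is simply a widening, and widening an interval that is already valid keeps it valid. I will state it that way so that validity does not hinge on the decomposition.

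\textbf{Sharpness.} I need to show that every point of the interval is attained by some joint distribution of $(Y(1),Y(0),D(1),D(0),Z)$ given $X=x$ that is consistent with the data and satisfies Assumptions 1–4 with the given $c$ and $\pi_{def|x}$. The construction has three parts.

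First, under (ii)–(iii), Proposition 2 is sharp, and its bounds simplify to the forms $p_{y,d|z,x}p_{z|x}/(p_{z|x}\mp c)$. For each $z$, I take the extremal $c$-dependent joint distribution of $(Y(D(z)),D(z),Z)$ from Proposition 1 / Masten–Poirier, which attains the endpoint marginal. I then obtain intermediate values by convex combination, since the set of admissible distributions is convex.

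Second, I couple $D(1)$ and $D(0)$ so that the defier mass equals $\pi_{def|x}$. The Fréchet–Hoeffding bounds on the joint of the two binary variables $D(0),D(1)$ give
\[
\pi_{def|x}\in\bigl[\max\{0,\mathbb{P}(D(0)=1)-\mathbb{P}(D(1)=1)\},\ \min\{\mathbb{P}(D(0)=1),\mathbb{P}(D(1)=0)\}\bigr],
\]
and condition (i) says exactly that the prescribed value lies in this range, so such a coupling exists. Assumption 4 then pins down $\pi_{co|x}$ relative to it.

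Third, I assign potential outcomes within types so that the defiers' effects equal $\mp1$, which pushes the ITT up or down by the full $\pi_{def|x}$. Condition (iv) ensures that the resulting value stays strictly inside $(-1,1)$, so the truncation is not binding and the endpoint is actually reached.

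\textbf{Main obstacle.} The hard part is making these three pieces compatible at once. The extremal $c$-dependent marginals for $Z=1$ and $Z=0$ are constructed separately. Their coupling across $z$ must:
\begin{itemize}
\item keep $Z$ $c$-dependent with each $(Y(D(z)),D(z))$, and
\item realise the defier mass and its outcome assignment without changing the $z$-specific marginals.
\end{itemize}
I would first try to make $Z$ depend on the latent types only through $(Y(D(z)),D(z))$, then build the cross-$z$ coupling conditionally on $Z$. Since $c$-dependence constrains only the $z$-specific margins, any coupling that preserves those margins keeps $c$-dependence. The remaining issue is whether one can set $Y(1)=1,Y(0)=0$ for defiers while keeping the $z$-margins fixed. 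I would check that conditions (i) and (iv) leave enough mass in the relevant cells, using Assumption 1 to guarantee that all cells are nonempty.
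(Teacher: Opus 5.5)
Your proposal has a genuine gap: it misidentifies the target quantity.

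\textbf{Why sharpness fails under your reading.} In the validity step you commit to $ITT_x=\mathbb{P}(Y(D(1))=1|X=x)-\mathbb{P}(Y(D(0))=1|X=x)$. You then treat the $\pm\pi_{def|x}$ terms as pure widening. For that quantity the widened interval is valid, but it can never be sharp when $\pi_{def|x}>0$. By Proposition 2, the difference of marginals is at most $\mathbb{P}^{u}_{c}(Y(D(1))=1|X=x)-\mathbb{P}^{l}_{c}(Y(D(0))=1|X=x)$. This is strictly below the upper endpoint, which is not truncated under (iv). So under your reading the sharpness half of the statement is false, and no construction can rescue it.

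This is exactly where your third sharpness step breaks. You insist on keeping the $z$-specific marginals fixed, and those marginals alone determine the reduced form. Setting the defiers' effects to $\mp1$ only trades $\pi_{def|x}LATE_{def|x}$ against $\pi_{co|x}LATE_{co|x}$ inside $\pi_{co|x}LATE_{co|x}-\pi_{def|x}LATE_{def|x}$. It cannot move that difference at all.

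\textbf{The missing idea.} The proposition concerns the numerator of the complier LATE, $ITT_x=\pi_{co|x}LATE_{co|x}$; this is how the paper's proof and Proposition 5 use it. The key identity is de Chaisemartin's decomposition:
$\pi_{co|x}LATE_{co|x}=\mathbb{P}(Y(D(1))=1|X=x)-\mathbb{P}(Y(D(0))=1|X=x)+\pi_{def|x}LATE_{def|x}$.
Validity then follows from Proposition 2 together with $|LATE_{def|x}|\le 1$. That is precisely the step you called heuristic and then discarded. It cannot be replaced by a widening argument, because this quantity is not a difference of marginals.

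\textbf{How the paper's construction works.} With this target, your step 3 becomes the right move, and it is what the paper does:
\begin{enumerate}
\item Set $(Y(1),Y(0))=(1,0)$ for defiers, or $(0,1)$ for the lower bound.
\item Take type shares from a Fr\'echet-feasible $\pi_{def|x}$ as in (i).
\item Choose $\mathbb{P}(Y(1)=1|at,X=x)$, $\mathbb{P}(Y(0)=1|nt,X=x)$, $\mathbb{P}(Y(1)=1|co,X=x)$ and $\mathbb{P}(Y(0)=1|co,X=x)$ so that the four joint potential probabilities equal the extremal values from Proposition 1. For example, $\mathbb{P}(Y(1)=1|at,X=x)=\bigl(\mathbb{P}^{l}_{c}(Y(D(0))=1,D(0)=1|X=x)-\pi_{def|x}\bigr)/\pi_{at|x}$.
\end{enumerate}
The marginals then sit at $\mathbb{P}^{u}_{c}(Y(D(1))=1|X=x)$ and $\mathbb{P}^{l}_{c}(Y(D(0))=1|X=x)$, and the defier term adds the full $\pi_{def|x}$ on top of them.

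The cross-$z$ coupling and $c$-dependence issue you flag as the main obstacle is real. The paper handles it only by asserting that the extremal joint probabilities from Proposition 1 are reproduced. It is secondary to getting the target right.
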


Proposition 4 provides bounds for the conditional ITT. The bounds for the unconditional ITT are obtained by integrating the conditional bounds over the distribution of covariates. Under additional assumptions, the bound is sharp. These assumptions restrict the share of complier to lie within the Fréchet-feasible interval (i), the values which joint potential probabilities can take (ii), the values which the sensitivity parameter $c$ can take (iii) and the share of defiers to lie in an interval in which the truncations of the bounds are not active (iv). If these assumptions fail to hold, the bounds still provide a valid outer region for the ITT.

Proposition 4 provides bounds for the ITT. Once bounds for the share of compliers $\pi_{co|x}$ are obtained, one can derive the identified set for $LATE_{co|x}$:

\begin{proposition}
    Suppose Assumptions 1-4 hold. Then the identified set for $LATE_{co|x}$ is $\left[LATE^{u}_{co|x}(c,\pi_{def|x}),LATE^{l}_{co|x}(c,\pi_{def|x})\right]$, where

    \vspace{-10mm}

    \begin{align*}
    &LATE_{co|x}^{UB}(c,\pi_{def|x})=\min\left\{\frac{ITT_{x}^{u}(c,\pi_{def|x})}{\pi_{co|x}^{l}(c,\pi_{def|x})},1\right\},\\&LATE_{co|x}^{LB}(c,\pi_{def|x})=\max\left\{\frac{ITT^{l}_{x}(c,\pi_{def|x})}{\pi_{co|x}^{u}(c,\pi_{def|x})},-1\right\}
\end{align*}

with

\vspace{-10mm}

\begin{align*}
    &\pi_{co|x}^{u}(c,\pi_{def|x})=\min\left\{\mathbb{P}_{c}^{u}\left(D(1)=1|X=x\right)-\mathbb{P}^{l}\left(D(0)=1|X=x\right)+\pi_{def|x},1\right\},\\&\pi_{co|x}^{l}(c,\pi_{def|x})=\max\left\{ \mathbb{P}_{c}^{l}\left(D(1)=1|X=x\right)-\mathbb{P}^{u}\left(D(0)=1|X=x\right)+\pi_{def|x},0\right\}
\end{align*}

Moreover, if 

\vspace{-25mm}

    \begin{align*}
        &\\&(i)\max\left\{0, \mathbb{P}\left ( D(0)=1|X=x \right )-\mathbb{P}\left ( D(1)=1|X=x \right ) \right\}\\&\leq \pi_{def|x}\\&\leq \min\left\{ \mathbb{P}\left ( D(0)=1|X=x \right ),\mathbb{P}\left ( D(1)=0|X=x \right )\right\}\\&(ii)\ 
       c=0\\&(iii)\ \pi_{def|x}<1-\left(\mathbb{P}\left(D=1|Z=1,X=x\right)-\mathbb{P}\left(D=1|Z=0,X=x\right)\right)
    \end{align*}
    
   \noindent for all $\left(y,d\right)\in\left\{0,1\right\}^{2}$ and $x\in\mathcal{S}(X)$, then the identified set is sharp.
    
\end{proposition}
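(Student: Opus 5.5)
The proof has two parts: validity of the bounds, and sharpness under conditions (i)--(iii).

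\textbf{Validity.} I start from the decomposition of the reduced form by response type. Under Assumption 2, $Y(D(1))-Y(D(0))$ equals $Y(1)-Y(0)$ for compliers, $Y(0)-Y(1)$ for defiers, and zero for always- and never-takers. Hence
\begin{equation*}
ITT_x=LATE_{co|x}\,\pi_{co|x}-LATE_{def|x}\,\pi_{def|x},
\end{equation*}
and the first stage satisfies
\begin{equation*}
\mathbb{P}(D(1)=1|X=x)-\mathbb{P}(D(0)=1|X=x)=\pi_{co|x}-\pi_{def|x}.
\end{equation*}
By Proposition 3, the second identity gives $\pi_{co|x}=\mathbb{P}(D(1)=1|X=x)-\mathbb{P}(D(0)=1|X=x)+\pi_{def|x}$ with the two potential treatment probabilities confined to their $c$-dependence intervals. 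Replacing them by the extreme endpoints, and truncating to $[0,1]$, yields $\pi^{l}_{co|x}\le\pi_{co|x}\le\pi^{u}_{co|x}$.

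For the ratio, rather than literally dividing $ITT_x$ by $\pi_{co|x}$, I would follow the paper's construction: bound the complier contribution $LATE_{co|x}\pi_{co|x}$ by $ITT^{u}_x$ from Proposition 4, which already absorbs the defier term $\pm\pi_{def|x}$ since $|LATE_{def|x}|\le 1$. Then $LATE_{co|x}\le ITT^{u}_x/\pi_{co|x}$. When $ITT^u_x\ge 0$, this is at most $ITT^u_x/\pi^l_{co|x}$. When $ITT^u_x<0$, the bound $ITT^u_x/\pi^l_{co|x}$ is weakly larger than the true $LATE_{co|x}$ only if it is compared in the right direction, so I would check the sign case explicitly. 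If needed, I would note that the truncation at $1$ covers the degenerate case $\pi^l_{co|x}=0$. The lower bound follows symmetrically, using $ITT^l_x$ and $\pi^u_{co|x}$, with truncation at $-1$. Assumption 4 guarantees $\pi_{co|x}>0$, so the ratio is well defined.

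\textbf{Sharpness.} Under (ii), $c=0$, so all potential probabilities $\mathbb{P}(Y(D(z))=y,D(z)=d|X=x)$ are point identified by $p_{y,d|z,x}$ (Proposition 1 with $c=0$). It then remains to show that every value in the interval is attained by some joint distribution of the four types and their potential outcomes that matches these margins, with the given $\pi_{def|x}$.

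I would construct this explicitly in three steps:
\begin{enumerate}
\item Condition (i) guarantees that the type shares $\pi_{at|x},\pi_{nt|x},\pi_{co|x},\pi_{def|x}$ are nonnegative. They are pinned down by $\pi_{at|x}=\mathbb{P}(D(0)=1|x)-\pi_{def|x}$, $\pi_{nt|x}=\mathbb{P}(D(1)=0|x)-\pi_{def|x}$, and $\pi_{co|x}$ as above.
\item Condition (iii) guarantees that $\pi^{u}_{co|x}<1$, so that truncation of the complier share is inactive.
\item The joint distribution of $(Y(1),Y(0))$ within each type is then allocated, Balke--Pearl style, to push $LATE_{co|x}$ to an extreme. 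To attain the upper bound, set defiers' $Y(0)-Y(1)$ as large as the observed cells $p_{y,1|1,x}$ and $p_{y,0|0,x}$ allow, and load complier $Y(1)=1$, $Y(0)=0$ correspondingly.
\end{enumerate}
Interior points then follow by mixing the two extremal distributions, since the constraints are linear in the joint distribution.

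\textbf{Main obstacle.} The hard step is verifying that this extremal allocation is feasible, i.e.\ that the defier term in $ITT^{u}_x$ can actually equal $+\pi_{def|x}$ while the cell probabilities are matched. Failing that, the claimed upper bound is only an outer bound, and I would need to show that the Fréchet constraints in (i) are exactly what makes this allocation feasible. I would first try to assign defiers entirely to the cells $(Y=0,D=1|Z=1)$ and $(Y=1,D=0|Z=0)$ and check nonnegativity of the residual cells for the other types.
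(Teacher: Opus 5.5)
Your validity argument follows the paper's route: bound the complier contribution $LATE_{co|x}\pi_{co|x}=ITT_x+LATE_{def|x}\pi_{def|x}$ by $ITT^u_x$ (reading Proposition 4 this way is correct), bound $\pi_{co|x}$ through Proposition 3, then divide. The sign case you leave open is a genuine gap, and it cannot be closed. If $ITT^u_x<0$, then $\pi_{co|x}\ge\pi^l_{co|x}$ gives only $LATE_{co|x}\le ITT^u_x/\pi^u_{co|x}$, and the stated $ITT^u_x/\pi^l_{co|x}$ can lie below the truth. Here is a counterexample:
\begin{itemize}
\item Take $X$ degenerate, $p_{1|x}=1/2$, and $Z$ fully independent, so Assumption 3 holds for every $c$.
\item Take no defiers and shares $(\pi_{at},\pi_{co},\pi_{nt})=(0.2,0.25,0.55)$.
\item Take $\mathbb{P}(Y(1)=1|at)=0.05$, $\mathbb{P}(Y(0)=1|nt)=0.02$, $\mathbb{P}(Y(1)=1|co)=0.02$ and $\mathbb{P}(Y(0)=1|co)=0.98$, so $LATE_{co}=-0.96$.
\end{itemize}
At $c=0.01$ and $\pi_{def}=0$ the paper's formulas give $ITT^u\approx-0.2343$ and $\pi^l_{co}\approx0.2371$. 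Hence $LATE^{UB}\approx-0.988<-0.96$. Symmetrically $LATE^{LB}\approx-0.935>-0.96$, so the stated interval is actually empty. A correct version needs a sign-dependent denominator. In the upper bound use $\pi^l_{co|x}$ for a nonnegative numerator and $\pi^u_{co|x}$ for a negative one; in the lower bound, the reverse. The paper's proof only says that combining the inequalities yields the result, so it has the same hole.

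For sharpness you reproduce the paper's architecture: (i) makes the type shares nonnegative, (iii) removes the truncation of $\pi^u_{co|x}$, types are drawn independently of $Z$, defiers are set to extremes, residuals fill in the other types, and mixing gives the interior. However, step 3 is wrong, and the feasibility step you call the main obstacle fails.

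\begin{itemize}
\item \textbf{Sign of the defier effect.} By your own decomposition, at $c=0$ the reduced form and $\pi_{co|x}$ are fixed. The upper bound therefore requires $LATE_{def|x}=1$, i.e.\ $(Y(1),Y(0))=(1,0)$ for every defier, as in the paper. Making $Y(0)-Y(1)$ large does the opposite.
\item \textbf{Defier cells.} Defiers have $D(1)=0$ and $D(0)=1$. They never occupy $(Y=0,D=1|Z=1)$ or $(Y=1,D=0|Z=0)$; those cells hold compliers and always-/never-takers. The $(1,0)$ defiers sit in $(Y=1,D=1|Z=0)$ and $(Y=0,D=0|Z=1)$.
\item \textbf{Feasibility.} Attaining the bound therefore needs $p_{1,1|0,x}\ge\pi_{def|x}$, $p_{0,0|1,x}\ge\pi_{def|x}$, and nonnegative complier residuals (e.g.\ $p_{0,1|1,x}\ge p_{0,1|0,x}$ and $p_{1,0|0,x}\ge p_{1,0|1,x}$). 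None of these follows from (i), which restricts only the $D$-margins.
\end{itemize}

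Concretely, the observed distribution of the example above has reduced form $-0.24$, first stage $0.25$ and $p_{1,1|0}=0.01$. It is also generated by a $c=0$ model with $\pi_{def}=0.05$, and it satisfies (i)--(iii). Yet $LATE_{def}\pi_{def}\le p_{1,1|0}=0.01$. So the sharp upper bound is $(-0.24+0.01)/0.30\approx-0.767$, not the stated $(-0.24+0.05)/0.30\approx-0.633$.

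The paper writes down exactly this construction and asserts that admissibility is easy to see. You located the difficulty correctly, but the outcome is that sharpness needs conditions beyond (i)--(iii).
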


Proposition 5 provides the bounds for the LATE of compliers. In general, the bounds will not be sharp, since under violations of independence (Assumption 4 holds with $c>0$), the upper bound of the conditional ITT and the lower bound of the conditional share of compliers (and vice-versa) cannot be attained simultaneously while satisfying Assumptions 1-4. Nevertheless, in the case where independence holds and the share of compliers is such that it satisfies the Frechet inequalities and the bounds for the share of compliers are not the worst-case bounds, the identified set is sharp.

\subsection{Breakdown Frontier}

In this section I provide a breakdown frontier approach to assess the robustness of ITT and LATE estimates to violations of independence and monotonicity. I focus on the breakdown frontiers for the conclusions that $ITT\geq \mu_{1}$ and $LATE_{co}\geq \mu_{2}$. Choosing $\mu_{1}$ and $\mu_{2}$ equal to 0, for instance, provides us the breakdown analysis for the conclusion that the treatment has a positive effect. 

When deriving the breakdown frontier, it is important to consider the same share of defiers across all values of covariates ($\pi_{def|x}=\pi_{def}$) in order to go obtain unconditional bounds. First, consider all the values of $c$ and $\pi_{def}$ under which the conclusion holds. This sets are called the robust regions and are defined for the ITT and the LATE, respectively, as

\vspace{-10mm}

\begin{align*}
    &RR_{ITT}^{}(\mu_{1})=\left\{\left(c,\pi_{def}\right)\in\left[0,1\right]^{2}:ITT^{l}(c,\pi_{def})\geq \mu_{1}\right\},\\&RR_{LATE}^{}(\mu_{2})=\left\{\left(c,\pi_{def}\right)\in\left[0,1\right]^{2}:LATE_{co}^{l}(c,\pi_{def})\geq \mu_{2}\right\}
\end{align*}

Robust regions are simply combinations of $\left(c,\pi_{def}\right)$ which respectively deliver identified sets for the ITT and $LATE_{co}$ that contain the values $\mu_{1}$ and $\mu_{2}$. The breakdown frontiers are the sets $\left(c,\pi_{def}\right)$ in the boundary of the robust region for given conclusions. The breakdown frontiers are

\vspace{-10mm}

\begin{align*}
    &BF_{ITT}^{}(\mu_{1})=\left\{\left(c,\pi_{def}\right)\in\left[0,1\right]^{2}:ITT^{l}(c,\pi_{def})= \mu_{1}\right\},\\&BF_{LATE}^{}(\mu_{2})=\left\{\left(c,\pi_{def}\right)\in\left[0,1\right]^{2}:LATE_{co}^{l}(c,\pi_{def})= \mu_{2}\right\}
\end{align*}

Note that, in the cases where the bounds for the ITT and the LATE are not sharp, the breakdown frontiers operate as a conservative sufficient-robustness frontier rather than the exact combination of breakdown values.

Solving for $\pi_{def}$ in the equations $ITT^{l}(c,\pi_{def})=\mu_{1}$ and $LATE_{c}^{l}(c,\pi_{def})=\mu_{2}$ yields

\vspace{-10mm}

\begin{align*}
    &bf_{ITT}^{}(c,\mu_{1})=\int\mathbb{P}^{l}_{c}\left(Y(D(1))=1|X=x\right)dF_{X}(x)-\int\mathbb{P}^{u}_{c}\left(Y(D(0))=1|X=x\right)dF_{X}(x)-\mu_{1},\\&bf_{LATE}^{}(c,\mu_{2})=\frac{\int\mathbb{P}^{l}_{c}\left(Y(D(1))=1|X=x\right)dF_{X}(x)-\int\mathbb{P}^{u}_{c}\left(Y(D(0))=1|X=x\right)dF_{X}(x)}{1+\mu_{2}}\\&-\frac{\mu_{2}\left\{\int\mathbb{P}^{u}_{c}\left(D(1)=1|X=x\right)dF_{X}(x)-\int\mathbb{P}^{l}_{c}\left(D(0)=1|X=x\right)dF_{X}(x)\right\}}{1+\mu_{2}}
\end{align*}

Therefore, we obtain the following analytical expressions for the breakdown frontiers:

\vspace{-10mm}

\begin{align*}
    &BF_{ITT}^{}(c,\mu_{1})=\min\left\{\max\left\{bf_{ITT}^{}(c,\mu_{1}),0\right\},1\right\},\\&BF_{LATE}^{}(c,\mu_{2})=\min\left\{\max\left\{bf_{LATE}^{}(c,\mu_{2}),0\right\},1\right\}
\end{align*}

The frontiers provide the largest relaxations $c$ and $\pi_{def}$ under which predetermined conclusions regarding the ITT and the LATE hold. The shape of the frontier allows us to analyze the trade-off between the two types of relaxations considered when drawing conclusions regarding the target parameters. A desirable feature of this approach is that the sensitivity parameters $c$ and $\pi_{def}$ are measured in the same unit. Although this is not necessary, it certainly can be helpful.

Note that when we are interested in assessing the conclusion regarding the sign of treatment effect we can always use the breakdown frontiers for the ITT, as $bf_{LATE}^{}(c,0)=bf_{ITT}^{}(c,0)$. Next, I provide a simple numerical illustration of the bounds of the treatment effects and the breakdown frontier approach.

\subsubsection{Numerical Illustration}

I consider a simple DGP with a single covariate $X$ where $x\in\left\{0,1\right\}$. The instrument is assigned according to a Bernoulli distribution with parameter $p=0.6$. The covariate is distributed according to a Bernoulli distribution with parameter $p=0.5$. See Appendix C for the entire characterization of the DGP. Potential outcomes and treatments are defined in a way such that for $x\in\left\{0,1\right\}$, we have

\vspace{-10mm}

\begin{align*}
    &\mathbb{P}\left(Y(D(1))=1|X=x\right)-\mathbb{P}\left(Y(D(0))=1|X=x\right)=0.25,\\&\frac{\mathbb{P}\left(Y(D(1))=1|X=x\right)-\mathbb{P}\left(Y(D(0))=1|X=x\right)}{\mathbb{P}\left(D(1)=1|X=x\right)-\mathbb{P}\left(D(0)=1|X=x\right)}=0.5
\end{align*}

Therefore, in the absence of violations of the identifying assumptions in IV settings, the ITT is equal to 0.25 and the LATE is equal to 0.5 under this DGP. To analyze the sensitivity to violations of independence and monotonicity, Figures 1 shows the identified sets for the LATE under different shares of defiers.

\begin{figure}[htbp]
\caption{Identified Sets for the LATE}
\centering
\includegraphics[width=0.45\textwidth]{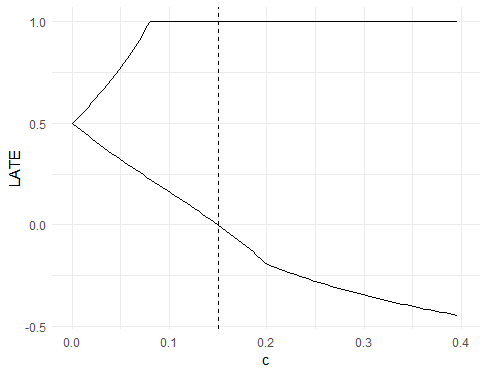}
\hfill
\includegraphics[width=0.45\textwidth]{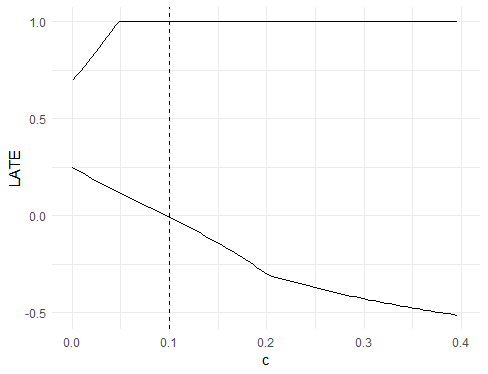}\\
    \scriptsize \noindent \textit{Note:} Left: Identified set for the LATE as a function of $c$ setting $\pi_{def}=0$. Right: Identified set for the LATE as a function of $c$ setting $\pi_{def}=0.1$ The vertical lines represent the values of $c$ under which the bounds become uninformative.
\end{figure}

The plot on the left of Figure 1 shows the identified set for the LATE under the monotonicity assumption ($\pi_{def|x}=0$ for all $x$). When $c=0$, the identified set collapses to 0.5, which is the value which would be point identified in the absence of any violation. As $c$ increases, the set becomes less informative. The vertical dotted line marks the largest violation $c$ under which the identified set does not contain 0. That is, in the absence of defiers, we can conclude that the LATE is positive under violations of independence for all sensitivity parameters $c\leq0.15$.

The plot on the right shows the identified set when defiance is allowed (I set $\pi_{def|x}=0.1$ for all $x$). Note that in this case, the LATE is no longer point identified when $c=0$. The vertical dotted line is moved to the left, and shows that we can conclude that the LATE is positive for violation parameters $c\leq 0.1$.

The plots with the identified sets under different shares of defiers illustrate the tradeoff between the magnitude of the violations when assessing the robustness of a given conclusion regarding the LATE. If we want to conclude that the LATE is positive, 
we can allow for smaller deviations from independence as we allow for larger shares of defiers.

The breakdown frontier format captures the tradeoffs between these violations. Figure 2 shows the breakdown frontiers for two conclusions regarding the LATE.

\begin{figure}[htbp]
\caption{Breakdown Frontiers}
\centering
\includegraphics[width=0.45\textwidth]{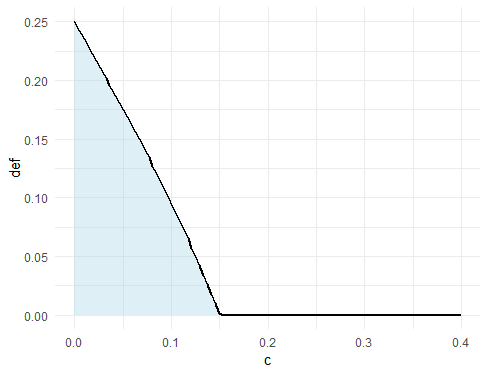}
\hfill
\includegraphics[width=0.45\textwidth]{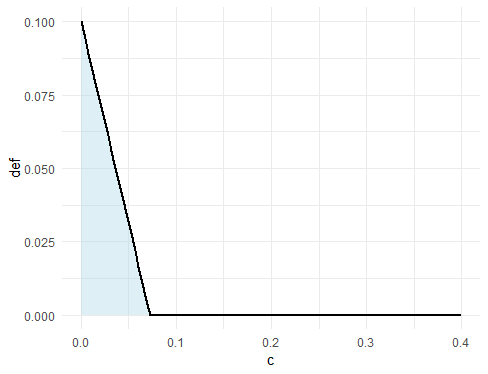}\\
    \scriptsize \noindent \textit{Note:} Left: Breakdown frontier for the conclusion that the LATE is greater than zero. Right: Breakdown frontier for the conclusion that the LATE is greater than 0.25. The blue areas are the robust regions for the conclusions.
\end{figure}

The plot on the left of Figure 2 provides the breakdown frontier for the conclusion that $LATE_{co}^{l}(c,\pi_{def})\geq 0$. The area painted in blue represents the robust region for the conclusion that the LATE is positive, and the black line denotes the breakdown frontier. The breakdown frontier shows that if we are willing to assume independence, then the share of defiers can be as great as 0.25 and the conclusion that the LATE is positive still holds. If we are willing to assume monotonicity, then the observed and unobserved propensity scores can differ by up to 0.15 probability units and the conclusion still holds.

The plot on the right shows the robust region and the breakdown frontier for the conclusion that $LATE^{l}(c,\pi_{def})\geq 0.25$, which is half of the value that is point identified under the standard assumptions. Note that the robust region is smaller that the one for the conclusion that the LATE is positive, and smaller violations of monotonicity are admitted in order for the conclusion to hold. If we are willing to assume that independence holds, then we can allow for a share of defiers no greater than 0.1. If we are willing to assume monotonicity, then the observed and unobserved propensity scores can differ by up to 0.075 probability units and the conclusion still holds.

\subsection{Partial Identification of the ATE}

Researchers usually report the LATE in IV settings because that is the causal parameter that is point identified under the standard IV assumptions \citep{imbensangrist}. However, whether or not the LATE is a relevant parameter depends on the empirical context \citep{huber2017jae,Chen1015-7483R1}. Researchers are typically interested in the Average Treatment Effect (ATE), which is the most general average causal parameter. Moreover, once the standard IV assumptions are violated and potential quantities are no longer point identified for the group of compliers, it might be of interest to analyze what can be learned about the ATE.

The ATE is a parameter that is not point identified in standard IV settings, as the quantities $\mathbb{P}\left(Y(0)=1|at,X=x\right)$ and $\mathbb{P}\left(Y(1)=1|nt,X=x\right)$ cannot be point identified from the data without further assumptions. If violations of monotonicity are allowed, further potential outcomes cannot be point identified. If violations of independence is also allowed, then none of the potential quantities are identified. The next proposition shows what are the bounds for the ATE under violations of monotonicity and independence.

\begin{proposition}
    Suppose Assumptions 1-4 hold. Then, $ATE_{x}\in\left[ATE^{l}_{x}(c,\pi_{def|x}),ATE^{u}_{x}(c,\pi_{def|x})\right]$, where

    \vspace{-10mm}

    \begin{align*}
        &ATE_{x}^{u}(c,\pi_{def|x})=\mathbb{P}^{u}\left(Y(1)=1|X=x\right)-\mathbb{P}^{l}\left(Y(0)=1|X=x\right),\\&ATE^{l}_{x}(c,\pi_{def|x})=\mathbb{P}^{l}\left(Y(1)=1|X=x\right)-\mathbb{P}^{u}\left(Y(0)=1|X=x\right)
    \end{align*}

    \noindent with

    \vspace{-10mm}

    \begin{equation*}
\begin{aligned}
\mathbb{P}^{u}\left ( Y(1)=1|X=x \right )
= \min\Bigg\{&
\mathbb{P}^{u}_{c}\left(Y(D(1))=1,D(1)=1|X=x\right)  +\mathbb{P}^{u}_{c}\left(Y(D(0))=1,D(0)=1|X=x\right)  \\
&+\mathbb{P}^{u}_{c}\left(D(1)=0|X=x\right)-\pi_{def|x},\\&\mathbb{P}\left (Y=1|D=1,X=x  \right )\mathbb{P}\left (D=1|X=x  \right )+(1-\mathbb{P}\left (D=1|X=x  \right ))
\Bigg\},
\end{aligned}
\end{equation*}

\vspace{-10mm}

\begin{equation*}
\begin{aligned}
\mathbb{P}^{l}\left ( Y(1)=1|X=x \right )
\\= \max\Bigg\{&
\mathbb{P}^{l}_{c}\left(Y(D(1))=1,D(1)=1|X=x\right)  +\mathbb{P}^{l}_{c}\left(Y(D(0))=1,D(0)=1|X=x\right)  \\
&-\min\left\{\mathbb{P}^{u}_{c}\left(Y(D(1))=1,D(1)=1|X=x\right),\mathbb{P}^{u}_{c}\left(Y(D(0))=1,D(0)=1|X=x\right)\right\},\\&\mathbb{P}\left (Y=1|D=1,X=x  \right )\mathbb{P}\left (D=1|X=x  \right )
\Bigg\},
\end{aligned}
\end{equation*}

\vspace{-10mm}

\begin{equation*}
\begin{aligned}
\mathbb{P}^{u}\left ( Y(0)=1|X=x \right )
= \min\Bigg\{&
\mathbb{P}^{u}_{c}\left(Y(D(1))=1,D(1)=0|X=x\right)  +\mathbb{P}^{u}_{c}\left(Y(D(0))=1,D(0)=0|X=x\right)  \\
&+\mathbb{P}^{u}_{c}\left(D(0)=1|X=x\right)-\pi_{def|x},\\&\mathbb{P}\left (Y=1|D=0,X=x  \right )\mathbb{P}\left (D=0|X=x  \right )+(1-\mathbb{P}\left (D=0|X=x  \right ))
\Bigg\},
\end{aligned}
\end{equation*}

\vspace{-10mm}

\begin{equation*}
\begin{aligned}
\mathbb{P}^{l}\left ( Y(0)=1|X=x \right )
\\=\max\Bigg\{&
\mathbb{P}^{l}_{c}\left(Y(D(1))=1,D(1)=0|X=x\right)  +\mathbb{P}^{l}_{c}\left(Y(D(0))=1,D(0)=0|X=x\right)  \\
&-\min\left\{\mathbb{P}^{u}_{c}\left(Y(D(1))=1,D(1)=0|X=x\right),\mathbb{P}^{u}_{c}\left(Y(D(0))=1,D(0)=0|X=x\right)\right\},\\&\mathbb{P}\left (Y=1|D=0,X=x  \right )\mathbb{P}\left (D=0|X=x  \right )
\Bigg\}
\end{aligned}
\end{equation*}

Moreover, if

    \vspace{-25mm}

    \begin{align*}
        &\\&(i)\max\left\{0, \mathbb{P}\left ( D(0)=1|X=x \right )-\mathbb{P}\left ( D(1)=1|X=x \right ) \right\}\\&\leq \pi_{def|x}\\&\leq \min\left\{ \mathbb{P}\left ( D(0)=1|X=x \right ),\mathbb{P}\left ( D(1)=0|X=x \right )\right\}\\&(ii)\ p_{y,d|z,x}<1/2\\&(iii)\ c\in\left ( 0,\min\left\{ p_{z|x}(1-2p_{y,d|z,x}),\frac{p_{z|x}(1-p_{z|x})(1-p_{y,d|z,x})}{p_{y,d|z,x}p_{z|x}+1-p_{z|x}}\right\} \right )\\&(iv)\  \mathbb{P}^{u}_{c}\left(Y(D(0))=1|X=x\right)-\mathbb{P}^{l}_{c}\left(Y(D(1))=1|X=x\right)-1\\&<\pi_{def|x}\\&<1-\left(\mathbb{P}^{u}_{c}\left(Y(D(1))=1|X=x\right)-\mathbb{P}^{l}_{c}\left(Y(D(0))=1|X=x\right)\right)
    \end{align*}
    
   \noindent for all $\left(y,d\right)\in\left\{0,1\right\}^{2}$ and $x\in\mathcal{S}(X)$, then the identified set is sharp.
\end{proposition}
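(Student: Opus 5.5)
The plan is to decompose each marginal potential outcome probability by compliance type and then bound each piece with the results of Propositions 1 and 3. Under Assumption 2, write, for fixed $x$,
\begin{align*}
\mathbb{P}(Y(1)=1|X=x)&=\mathbb{P}(Y(1)=1,at|x)+\mathbb{P}(Y(1)=1,co|x)+\mathbb{P}(Y(1)=1,def|x)+\mathbb{P}(Y(1)=1,nt|x).
\end{align*}
The event $\{Y(D(1))=1,D(1)=1\}$ is the union of $\{Y(1)=1\}\cap(at\cup co)$. Similarly, $\{Y(D(0))=1,D(0)=1\}$ is $\{Y(1)=1\}\cap(at\cup def)$. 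Summing the two joint probabilities therefore counts $\{Y(1)=1\}\cap at$ twice and omits $nt$. This gives the identity
\begin{equation*}
\mathbb{P}(Y(1)=1|x)=\mathbb{P}(Y(D(1))=1,D(1)=1|x)+\mathbb{P}(Y(D(0))=1,D(0)=1|x)-\mathbb{P}(Y(1)=1,at|x)+\mathbb{P}(Y(1)=1,nt|x).
\end{equation*}
The analogous identity for $Y(0)$ uses the events with $D(z)=0$, with the roles of $at$ and $nt$ interchanged.

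For the upper bound on $\mathbb{P}(Y(1)=1|x)$, I drop the subtracted $at$ term, using that it is nonnegative. I bound the $nt$ term by $\pi_{nt|x}=\mathbb{P}(D(1)=0|x)-\pi_{def|x}$, since $D(1)=0$ exactly for $nt\cup def$. Each remaining joint and marginal probability is then replaced by its upper bound from Propositions 1 and 3. For the lower bound, I drop the $nt$ term and bound the subtracted $at$ term from above. Since $\{Y(1)=1\}\cap at$ is contained in both events $\{Y(D(z))=1,D(z)=1\}$, $z=0,1$, its probability is at most the minimum of the two upper bounds $\mathbb{P}^u_c$. This is exactly the min term in the statement.

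The second entry of each max/min is the usual worst-case (Manski) bound using observed treatment. Since $Y=Y(1)$ on $\{D=1\}$, we have
\begin{equation*}
\mathbb{P}(Y=1,D=1|x)\le \mathbb{P}(Y(1)=1|x)\le \mathbb{P}(Y=1,D=1|x)+\mathbb{P}(D=0|x).
\end{equation*}
Because both the decomposition bounds and these bounds are valid, their intersection is valid. The $ATE_x$ bounds then follow by differencing, pairing upper bounds with lower bounds. This establishes validity.

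For sharpness under (i)--(iv), I would construct a joint distribution of $(Y(1),Y(0),D(1),D(0),Z)$ given $X=x$ attaining each endpoint. Condition (i) guarantees that the type shares are Fréchet-feasible. By the discussion after Proposition 2, conditions (ii) and (iii) make the joint bounds in Proposition 1 take the interior forms $p_{y,d|z,x}p_{z|x}/(p_{z|x}\mp c)$, which are attainable by a $c$-dependent propensity score. Condition (iv) keeps the truncations inactive. I would then assign $Y(1)=1$ to all never-takers and $Y(1)=0$ to the always-takers (for the upper bound), and do the reverse for the lower bound. The difficulty is that the upper bound on $Y(1)$ and the lower bound on $Y(0)$ must be attained simultaneously by a single distribution that satisfies $c$-dependence for both $z$. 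I expect this joint construction to be the main obstacle. I would attempt it by reusing the sharpness construction behind Propositions 1 and 4: choose $\mathbb{P}(Z=1|Y(D(z)),D(z),X)$ at $p_{z|x}\pm c$ according to the cell, and verify the type-level allocation cell by cell.
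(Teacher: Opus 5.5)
Your validity argument is the paper's argument. The paper also starts from Huber's compliance-type decomposition of $\mathbb{P}(Y(1)=1|X=x)$: your identity, with the always-taker term subtracted and the never-taker term added. It then bounds the pieces with Propositions 1 and 3 and intersects with the worst-case bounds. You spell out steps the paper leaves implicit: dropping the $at$ term, using $\pi_{nt|x}\le\mathbb{P}^u_c(D(1)=0|X=x)-\pi_{def|x}$, and using $\mathbb{P}(Y(1)=1,at|X=x)\le\min_z\mathbb{P}^u_c(Y(D(z))=1,D(z)=1|X=x)$. That part is fine.

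The sharpness part has a genuine gap, and the obstacle you anticipate is not technical: the stated endpoints cannot be attained. Write $A_z=\{Y(D(z))=1,D(z)=1\}$, and abbreviate $\mathbb{P}^{u}_{c}(Y(D(z))=1,D(z)=1|X=x)$ as $\mathbb{P}^u_c(A_z)$.

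\emph{Upper bound.} Reaching the first entry of $\mathbb{P}^u(Y(1)=1|X=x)$ needs $\mathbb{P}(Y(1)=1,at|X=x)=0$ and $\mathbb{P}(A_0|X=x)=\mathbb{P}^u_c(A_0)$ at the same time. But $A_0=\{Y(1)=1\}\cap(at\cup def)$, so the former forces $\mathbb{P}(A_0|X=x)\le\pi_{def|x}$. Your assignment of $Y(1)=0$ to always-takers runs straight into this. In fact, the bound $\mathbb{P}(Y(1)=1|X=x)\le\mathbb{P}(A_1|X=x)+\mathbb{P}(D(1)=0|X=x)\le\mathbb{P}^u_c(A_1)+\mathbb{P}^u_c(D(1)=0|X=x)$ is valid. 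It lies strictly below that first entry whenever $\mathbb{P}^u_c(A_0)>\pi_{def|x}$. For example, take $\pi_{def|x}=0$, which (i) permits; then the gap is strict because $\mathbb{P}^u_c(A_0)\ge p_{1,1|0,x}p_{0|x}>0$ by Assumption 1.

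\emph{Lower bound.} Because $A_0,A_1\subseteq\{Y(1)=1\}$, $\max_z\mathbb{P}^l_c(A_z)$ is a valid lower bound. Under (ii)--(iii), where $\mathbb{P}^l_c(A_z)<\mathbb{P}^u_c(A_z)$, it strictly exceeds the first entry of $\mathbb{P}^l(Y(1)=1|X=x)$. The same overlap, with $at$ and $nt$ swapped, affects the $Y(0)$ bounds.

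So whenever these entries are the binding ones in the min/max, the claimed $ATE_x$ endpoints are not sharp. No choice of $\mathbb{P}(Z=1|Y(D(z)),D(z),X=x)$ can repair that. The formulas treat $\mathbb{P}(Y(1)=1,at|X=x)$, $\mathbb{P}(A_0|X=x)$, $\mathbb{P}(A_1|X=x)$ and $\mathbb{P}(D(1)=0|X=x)$ as if they could be chosen independently, but they share the same underlying cells.

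The paper's own construction does not escape this either. It sets $\mathbb{P}(Y(1)=1|at,X=x)=(\mathbb{P}^u_c(A_0)-\pi_{def|x})/\pi_{at|x}$ and $Y(1)=1$ for never-takers and defiers. That gives $\mathbb{P}(Y(1)=1|X=x)=\mathbb{P}^u_c(A_1)+\mathbb{P}(D(1)=0|X=x)$, not the claimed endpoint. It also does not verify $c$-dependence jointly over the four cells for each $z$. A correct sharpness result would have to start from bounds that build in these overlaps, not from the formulas in the statement.
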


The bounds in Proposition 6 can be directly connected to the existing bounds for the ATE in the IV literature. The next corollary shows that the bounds from proposition 6 are equivalent to the bounds from \cite{Balke01091997} and \cite{Chen1015-7483R1} in the absence of violations.

\begin{corollary}
    Suppose Assumptions 1-4 hold. Furthermore, suppose that Assumption 3 holds with $c=0$ and Assumption 4 holds with $\pi_{def|x}=0$. Then, the bounds for $ATE_{x}$ become

    \vspace{-10mm}

    \begin{align*}
        &ATE^{u}_{x}=\mathbb{P}\left(Y=1,D=1|Z=1,X=x\right)-\mathbb{P}\left(Y=1,D=0|Z=0,X=x\right)+\mathbb{P}\left(D=0|Z=1,X=x\right),\\&ATE^{l}_{x}=\mathbb{P}\left(Y=1,D=1|Z=1,X=x\right)-\mathbb{P}\left(Y=1,D=0|Z=0,X=x\right)-\mathbb{P}\left(D=1|Z=0,X=x\right)
    \end{align*}
\end{corollary}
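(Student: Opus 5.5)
The plan is to specialize the expressions of Proposition 6 to $c=0$ and $\pi_{def|x}=0$, and then simplify them using the structure that monotonicity imposes on the four observed cells $p_{y,d|z,x}$.

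\textbf{Step 1 (collapse of the $c$-bounds).} First I will show that at $c=0$ every bound in Proposition 1 collapses to $p_{y,d|z,x}$. By Assumption 1.4, $p_{z|x}>0$, so the indicator $\mathbf{1}(p_{z|x}>0)$ equals one. The first two arguments of the max then both equal $p_{y,d|z,x}p_{z|x}/p_{z|x}=p_{y,d|z,x}$, and the third argument $p_{y,d|z,x}p_{z|x}$ is no larger. Symmetrically, the first two arguments of the min equal $p_{y,d|z,x}$, and the third, $p_{y,d|z,x}p_{z|x}+1-p_{z|x}$, is no smaller. Hence
\[
\mathbb{P}^{l}_{0}\left(Y(D(z))=y,D(z)=d|X=x\right)=\mathbb{P}^{u}_{0}\left(Y(D(z))=y,D(z)=d|X=x\right)=p_{y,d|z,x}.
\]
Summing over $y$ in the same way gives $\mathbb{P}^{u}_{0}(D(z)=d|X=x)=p_{d|z,x}$.

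\textbf{Step 2 (monotonicity simplifications).} Under $\pi_{def|x}=0$ and independence, the treated units with $Z=0$ are always-takers, while the treated units with $Z=1$ are always-takers together with compliers. The outcome $Y(1)$ of always-takers has the same distribution in both arms. Therefore $p_{1,1|1,x}\ge p_{1,1|0,x}$, and likewise $p_{1,0|0,x}\ge p_{1,0|1,x}$, since the untreated with $Z=0$ are never-takers and compliers. Using the identity $a+b-\min\{a,b\}=\max\{a,b\}$, the lower bounds of Proposition 6 become $\mathbb{P}^{l}(Y(1)=1|X=x)=p_{1,1|1,x}$ and $\mathbb{P}^{l}(Y(0)=1|X=x)=p_{1,0|0,x}$. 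I will also check that the second argument of each max, $\mathbb{P}(Y=1,D=d|X=x)$, does not exceed these values. This holds because it is a $p_{z|x}$-weighted average of the two cells.

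\textbf{Step 3 (upper bounds: the main obstacle).} Read literally, the first argument of the upper bound on $\mathbb{P}(Y(1)=1|X=x)$ is $p_{1,1|1,x}+p_{1,1|0,x}+p_{0|1,x}$. This exceeds the Balke--Pearl value $p_{1,1|1,x}+p_{0|1,x}$ by $p_{1,1|0,x}$, which is exactly the always-taker mass. Without monotonicity, that mass is counted both in the $Z=1$ cell and in the $Z=0$ cell. I therefore expect the crux to lie in returning to the decomposition underlying Proposition 6, namely
\[
\mathbb{P}(Y(1)=1|X=x)=\sum_{g}\mathbb{P}(Y(1)=1,g|X=x),
\]
rather than the displayed formula. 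With $\pi_{def|x}=0$, the always-taker and complier terms together are identified by $p_{1,1|1,x}$, since the $Z=0$ cell is nested inside it. The never-taker term, $\mathbb{P}(Y(1)=1,nt|X=x)$, is bounded above by $\pi_{nt|x}=p_{0|1,x}$. This yields the upper bound $p_{1,1|1,x}+p_{0|1,x}$. The same argument with the roles of $Z=0$ and $Z=1$ swapped, using compliers and never-takers and bounding the always-taker term by $\pi_{at|x}=p_{1|0,x}$, yields $\mathbb{P}^{u}(Y(0)=1|X=x)=p_{1,0|0,x}+p_{1|0,x}$. I must also verify that the competing argument $\mathbb{P}(Y=1|D=1,X=x)\mathbb{P}(D=1|X=x)+1-\mathbb{P}(D=1|X=x)$ is no smaller than this value. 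It exceeds the Balke--Pearl value because it adds all untreated mass rather than only never-takers.

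\textbf{Step 4 (conclusion).} Subtracting the bounds, $ATE^{u}_{x}=\mathbb{P}^{u}(Y(1)=1|X=x)-\mathbb{P}^{l}(Y(0)=1|X=x)$ and $ATE^{l}_{x}=\mathbb{P}^{l}(Y(1)=1|X=x)-\mathbb{P}^{u}(Y(0)=1|X=x)$, gives exactly the two displayed expressions of the corollary.
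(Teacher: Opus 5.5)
Your Steps 1--2 match the paper's proof. The paper also specializes Proposition 6 at $c=0$, and it uses the monotonicity/independence inequalities $p_{1,1|1,x}\ge p_{1,1|0,x}$ and $p_{1,0|0,x}\ge p_{1,0|1,x}$ (citing Kitagawa) to resolve the $\min$ terms.

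The routes split exactly where you place the crux. The paper keeps the displayed formula, which leaves a surplus $p_{1,1|0,x}$ in $ATE^{u}_{x}$ and $p_{1,0|1,x}$ in $ATE^{l}_{x}$. It then deletes these by asserting $\mathbb{P}(Y=1,D=1|Z=0,X=x)=0$ because the upper bound is ``achieved by setting $\mathbb{P}(Y(1)=1|at,X=x)=0$'', and symmetrically $\mathbb{P}(Y=1,D=0|Z=1,X=x)=0$. That step is invalid. These are observed cells, and under $c=0$, $\pi_{def|x}=0$ they identify $\mathbb{P}(Y(1)=1,at|X=x)$ and $\mathbb{P}(Y(0)=1,nt|X=x)$, so they cannot be chosen.

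Your repair is the correct argument. You return to the group decomposition, note that the always-taker and complier terms together equal $p_{1,1|1,x}$ exactly, and bound only the never-taker term (symmetrically for $Y(0)$). You also check the second arguments of the $\min$/$\max$, which the paper never does.

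Say explicitly that what your route proves is a corrected statement. As your Step 3 shows, the displayed Proposition 6 bounds do not reduce to these expressions. At $c=0$, $\pi_{def|x}=0$, the displayed $\mathbb{P}^{u}(Y(1)=1|X=x)$ exceeds $p_{1,1|1,x}+\mathbb{P}(D=0|Z=1,X=x)$ by
\[
\min\{\mathbb{P}(Y(1)=1,at|X=x),\ \mathbb{P}(Z=0|X=x)\,\mathbb{P}(Y(1)=0,co|X=x)\}.
\]
This is positive whenever always-takers with $Y(1)=1$ and compliers with $Y(1)=0$ both have positive mass; the same holds symmetrically for $\mathbb{P}^{u}(Y(0)=1|X=x)$. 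So the corollary holds for bounds obtained by replacing $\mathbb{P}(Y(1)=1,at|X=x)\ge 0$ in the decomposition with $\mathbb{P}(Y(1)=1,at|X=x)\ge \mathbb{P}(Y(D(0))=1,D(0)=1|X=x)-\pi_{def|x}$, and likewise for never-takers. It does not hold for the formulas as displayed, and your Step 4 should be framed that way.

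One detail to tighten: at $c=0$, Assumption 3 only gives independence of $Z$ from $(Y(D(z)),D(z))$ given $X$ separately for each $z$. Justify ``same distribution in both arms'' through $\{at\}=\{D(0)=1\}$ and $\{nt\}=\{D(1)=0\}$, which hold when there are no defiers.
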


\section{Estimation and Inference}

In this section, I study estimation and inference of the bounds for the LATE and the breakdown frontiers for the LATE and ITT defined in Section 4.1. The bounds and the breakdown frontier are known functionals of conditional probabilities of treatments and outcomes given assignments and covariates, and the conditional probabilities of assignments given covariates. Hence, I propose nonparametric sample analogue estimators for the bounds and the breakdown frontier.

First I assume a random sample of data is available for the researcher:

\textbf{Assumption 5:} The random variables $\left\{Y_{i},D_{i},Z_{i},X_{i}\right\}_{i=1}^{N}$ are independently and identically distributed according to the distribution of $\left(Y,D,Z,X\right)$.

Furthermore, assume that the support of the vector of covariates is discrete:

\textbf{Assumption 6:} The support of $X$ is discrete and finite. Let $\mathcal{S}(X)=\left\{x_{1},...,x_{K}\right\}$. 

Next, I invoke an assumption which is an important regularity condition for the derivation of the asymptotic properties of the estimator.

\textbf{Assumption 7:} For all $x\in\mathcal{S}(X)$, we have $c<\min\left\{p_{1|x},p_{0|x}\right\}$.

Assumption 7 is necessary for the proposed bounds to be sharp, but is also key for asymptotics. The asymptotic results are obtained using a delta method for directionally differentiable functionals. Under assumption 7, the indicator functions inside the min and max operators that determines the bounds disappear, and therefore, there are no Dirac delta functions in the analytical expression.

I begin with the asymptotic properties of the bounds for the LATE and its breakdown frontier.

\subsection{LATE and Breakdown Frontier}

The parameters of interest defined in Section 4.1 are functionals of the parameters $p_{y,d|z,x}=\mathbb{P}\left(Y=y,D=d|Z=z,X=x\right)$, $p_{z|x}=\mathbb{P}\left(Z=z|X=x\right)$ and $q_{x}=\mathbb{P}\left(X=x\right)$. Let

\begin{align*}
    &\widehat{p}_{y,d|z,x}=\frac{\frac{1}{N}\sum_{i=1}^{N}\mathbf{1}\left\{Y_{i}=y,D_{i}=d\right\}\mathbf{1}\left\{Z_{i}=z,X_{i}=x\right\}}{\frac{1}{N}\sum_{i=1}^{N}\mathbf{1}\left\{Z_{i}=z,X_{i}=x\right\}},\\&\widehat{p}_{z|x}=\frac{\frac{1}{N}\sum_{i=1}^{N}\mathbf{1}\left\{Z_{i}=z,X_{i}=x\right\}}{\frac{1}{N}\sum_{i=1}^{N}\mathbf{1}\left\{X_{i}=x\right\}},\\&\widehat{q}_{x}=\frac{1}{N}\sum_{i=1}^{N}\mathbf{1}\left\{X_{i}=x\right\}
\end{align*}

\noindent denote the sample analog estimators of these probabilities. In Lemma 1 of Appendix B, I show that the estimators of these quantities converge uniformly to a Gaussian process at a $\sqrt{N}$-rate.

The bounds in Propositions 1-6 are functionals evaluated at $p_{y,d|z,x}$, $p_{z|x}$ and $q_{x}$. The bounds are estimated by these functionals evaluated at the sample analogue estimators. If these functionals are \textit{Hadamard
directional differentiable}, then $\sqrt{N}$-convergence in distribution of the sample analogue estimators will carry over to the functionals by the delta method. 

I use the functional delta method for
Hadamard directionally differentiable mappings \citep{fangsantos} to show convergence in distribution of the estimators. Convergence is usually to a non-Gaussian limiting process. Thus, analytical asymptotic bands are challenging to obtain. I follow \cite{mastenpoirier2020} and propose a bootstrap procedure to obtain asymptotically valid uniform confidence bands for the breakdown frontier and the estimators for the bounds.

Consider the bounds from Proposition 1. Under Assumptions 1-7, we estimate them by

\vspace{-10mm}

\begin{align*}
    &\widehat{\mathbb{P}}^{u}_{c}\left(Y(D(z))=y,D(z)=d|X=x\right)=\min\left\{\frac{\widehat{p}_{y,d|z,x}\widehat{p}_{z|x}}{\widehat{p}_{z|x}-c},\frac{\widehat{p}_{y,d|z,x}\widehat{p}_{z|x}+c}{\widehat{p}_{z|x}+c},\widehat{p}_{y,d|z,x}\widehat{p}_{z|x}+(1-\widehat{p}_{z|x})\right\},\\&\widehat{\mathbb{P}}^{l}_{c}\left(Y(D(z))=y,D(z)=d|X=x\right)=\min\left\{\frac{\widehat{p}_{y,d|z,x}\widehat{p}_{z|x}}{\widehat{p}_{z|x}+c},\frac{\widehat{p}_{y,d|z,x}\widehat{p}_{z|x}-c}{\widehat{p}_{z|x}-c},\widehat{p}_{y,d|z,x}\widehat{p}_{z|x}\right\}
\end{align*}

The estimators perform poorly when $c$ is close to $p_{z|x}$. Assumption 7 ensures that $c$ is bounded away from $p_{z|x}$. The estimators for the bounds of potential treatments are analogous. In Lemmas 3 and 4 from Appendix B I show that these estimators converge in
distribution to a nonstandard distribution.

For the main results in this section I establish convergence uniformly over $c\in\mathcal{C}$, where $\mathcal{C}$ is a finite grid $\mathcal{C}\subset\left[0,\min\left\{p_{1|x},p_{0|x}\right\}\right]$ for all $x\in\mathcal{S}(X)$. Therefore, the asymptotic results are valid for values of $c$ which satisfy Assumption 7.

Next, consider the bounds for the conditional ITT introduced in Proposition 4. We estimate them by

\vspace{-10mm}

\begin{align*}
    &\widehat{ITT}_{x}^{u}(c,\pi_{def|x})=\min\left\{\widehat{\mathbb{P}}^{u}_{c}\left(Y(D(1))=1|X=x\right)-\widehat{\mathbb{P}}^{l}_{c}\left(Y(D(0))=1|X=x\right)+\pi_{def|x},1\right\},\\&\widehat{ITT}_{x}^{l}(c,\pi_{def|x})=\max\left\{\widehat{\mathbb{P}}^{l}_{c}\left(Y(D(1))=1|X=x\right)-\widehat{\mathbb{P}}^{u}_{c}\left(Y(D(0))=1|X=x\right)-\pi_{def|x},-1\right\}
\end{align*}

The unconditional bounds are estimated  by integrating over the empirical distribution of the covariates $X$. Let

\vspace{-10mm}

\begin{equation*}
    \widehat{ITT}^{u}(c,\pi_{def})=\frac{1}{N}\sum_{i=1}^{N}\widehat{ITT}^{u}_{X_{i}}(c,\pi_{def})\ \text{and}\ \widehat{ITT}^{l}(c,\pi)=\frac{1}{N}\sum_{i=1}^{N}\widehat{ITT}^{l}_{X_{i}}(c,\pi_{def})
\end{equation*}

In Lemma 5 of Appendix B, I show that these estimators for the ITT bounds converge weakly to a Gaussian element.

Now, consider the estimation for the breakdown frontier for the conclusion that the ITT is above a certain threshold $\mu_{1}$. Although the ITT is not the usual parameter of interest in IV settings, the breakdown frontier for the conclusion that the ITT is greater than zero coincides with the breakdown frontier for the conclusion that the LATE is greater than zero, so it is interesting to analyze its asymptotic properties.

Denote the breakdown frontier for the conclusion that $ITT\geq\mu_{1}$ by

\vspace{-10mm}

\begin{equation*}
    \widehat{BF}_{ITT}(c,\mu_{1})=\min\left\{\max\left\{\widehat{bf}_{ITT}(c,\mu_{1}),0\right\},1\right\}
\end{equation*}

\noindent where

\vspace{-10mm}

\begin{equation*}
    \widehat{bf}_{ITT}=\frac{1}{N}\sum_{i=1}^{N}\widehat{\mathbb{P}}^{l}_{c}\left(Y(D(1))=1|X_{i}\right)-\frac{1}{N}\sum_{i=1}^{N}\widehat{\mathbb{P}}^{u}_{c}\left(Y(D(0))=1|X_{i}\right)-\mu_{1}
\end{equation*}

I show that the estimator for the breakdown frontier of the ITT converges in distribution.

\begin{theorem}
    Suppose Assumptions 1-7 hold and that $c\in\mathcal{C}$ for some finite grid $\mathcal{C}\subset \overline{C}\in\left(0,\min\left\{p_{1|x},p_{0|x}\right\}\right)$. Let $\mathcal{M}\subset\left[-1,1\right]$ be a finite grid of points. Then, 

    \begin{equation*}
        \sqrt{N}\left(\widehat{BF}_{ITT}(c,\mu)-BF_{ITT}(c,\mu)\right)\xrightarrow[d]{}\textbf{Z}_{BF_{ITT}}(c,\mu)
    \end{equation*}

    \noindent a tight random element of $l^{\infty}\left(\mathcal{C}\times\mathcal{M}\right)$.
\end{theorem}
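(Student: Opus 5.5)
The plan is a chain of functional delta methods for Hadamard directionally differentiable maps \citep{fangsantos}. The starting point is Lemma 1 of Appendix B. Write $\theta_{0}=(\{p_{y,d|z,x}\},\{p_{z|x}\},\{q_{x}\})$ for the finite-dimensional vector of population probabilities (finite because of Assumption 6), and $\widehat{\theta}$ for its sample analogue. Lemma 1 gives $\sqrt{N}(\widehat{\theta}-\theta_{0})\xrightarrow[d]{}\mathbf{Z}_{\theta}$, a mean-zero Gaussian vector.

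Step 1 is to write $BF_{ITT}(c,\mu)=\phi(\theta_{0})(c,\mu)$ as a composition of maps and to check that each map is Hadamard directionally differentiable (HDD) from $\mathbb{R}^{\dim\theta}$ into $l^{\infty}(\mathcal{C}\times\mathcal{M})$. The maps are as follows.

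\begin{enumerate}
\item $\theta\mapsto\{\mathbb{P}^{l}_{c}(Y(D(z))=y,D(z)=d|X=x),\mathbb{P}^{u}_{c}(\cdot)\}_{c\in\mathcal{C}}$. By Assumption 7 and $\mathcal{C}\subset\overline{C}$, we have $p_{z|x}-c$ bounded away from zero, so the indicators $\mathbf{1}(p_{z|x}>c)$ in Proposition 1 equal one on a neighbourhood of $\theta_{0}$. Each argument of the max/min is then a smooth rational function of $\theta$, uniformly in $c$. Pointwise $\max$/$\min$ of finitely many such functions is HDD, and since $\mathcal{C}$ is finite, uniformity over $c$ is immediate. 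This is the content of Lemmas 3--4.
\item The sums and truncations of Proposition 2 giving $\mathbb{P}^{l}_{c}(Y(D(1))=1|X=x)$ and $\mathbb{P}^{u}_{c}(Y(D(0))=1|X=x)$. These are again finite max/min of smooth maps composed with HDD maps, and the chain rule for HDD maps applies (\citealp{fangsantos}, or Shapiro's chain rule).
\item Integration against $q_{x}$, i.e.\ $\sum_{k}q_{x_{k}}\,\mathbb{P}_{c}(\cdot|x_{k})$. This is bilinear and hence fully differentiable. Note that $\frac{1}{N}\sum_{i}f(X_{i})=\sum_{k}\widehat{q}_{x_{k}}f(x_{k})$ when the support is finite, so the estimator has exactly this form.
\item Subtraction of $\mu$. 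This is affine and trivially differentiable in $(c,\mu)\in\mathcal{C}\times\mathcal{M}$. Since $\mathcal{M}$ is finite, $l^{\infty}(\mathcal{C}\times\mathcal{M})$ is a finite-dimensional Euclidean space.
\item The clipping map $g\mapsto\min\{\max\{g,0\},1\}$ applied coordinatewise. It is HDD, with derivative $h\mapsto h$ where $0<g<1$, $h\mapsto\max\{h,0\}$ where $g=0$, $h\mapsto\min\{h,0\}$ where $g=1$, and $0$ outside $[0,1]$.
\end{enumerate}

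Step 2 is to apply the \citet{fangsantos} delta method to the composition. This gives $\sqrt{N}(\phi(\widehat{\theta})-\phi(\theta_{0}))\xrightarrow[d]{}\phi'_{\theta_{0}}(\mathbf{Z}_{\theta})\equiv\mathbf{Z}_{BF_{ITT}}$. Because the domain and codomain are finite-dimensional, tightness of the limit is automatic. The limit is generally non-Gaussian because $\phi'_{\theta_{0}}$ is only positively homogeneous, not linear. This matches the Lemma 5 result for the ITT bounds, which can be invoked directly for the unclipped part, since $bf_{ITT}$ is essentially $ITT^{l}$ with $\pi_{def}$ replaced by $\mu$ and no $-1$ truncation.

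The main obstacle is Step 1(i)--(ii). It must be verified that the kinks of the max/min in Propositions 1--2 do not break Hadamard directional differentiability uniformly over $c$. In particular, at parameter values where several arguments tie, the derivative must be the appropriate max/min over the tied active gradients. One must also check that the denominators $p_{z|x}\pm c$ and $p_{z|x}$ stay away from zero near $\theta_{0}$, which Assumptions 1.4 and 7 guarantee. I would handle this by invoking the lemma that $\max\{f_{1},\dots,f_{m}\}$ of differentiable $f_{j}$ has directional derivative $\max_{j\in J(\theta_{0})}f_{j}'(\theta_{0})[h]$ over the active set $J(\theta_{0})$. I would apply that lemma separately for each of the finitely many $(c,x,y,d,z)$ and then stack the results.
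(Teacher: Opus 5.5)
Your proposal is correct and follows the same route as the paper. The paper's proof likewise uses Lemma 3's convergence of the conditional bounds for $\mathbb{P}(Y(D(z))=1|X=x)$, aggregates with $\widehat{q}_{x}$, subtracts $\mu$, and applies the Fang--Santos delta method to the clipping map $g\mapsto\min\{\max\{g,0\},1\}$. You are in fact more explicit and careful than the paper (active-set derivative for the max/min, explicit derivative of the clipping map, non-Gaussian limit); the only slip is that the joint-bound step you attribute to Lemmas 3--4 is Lemma 2 in the appendix.
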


Now, consider the bounds for the conditional LATE introduced in Proposition 5. They are obtained by combining the bounds for the ITT with the bounds for the share of compliers. We estimate the bounds for the share of compliers by

\vspace{-10mm}

\begin{align*}
    &\widehat{\pi}^{u}_{co|x}(c,\pi_{def|x})=\min\left\{\widehat{\mathbb{P}}_{c}^{u}\left(D(1)=1|X=x\right)-\widehat{\mathbb{P}}_{c}^{l}\left(D(0)=1|X=x\right)+\pi_{def|x},1\right\},\\&\widehat{\pi}^{l}_{co|x}(c,\pi_{def|x})=\max\left\{\widehat{\mathbb{P}}_{c}^{l}\left(D(1)=1|X=x\right)-\widehat{\mathbb{P}}_{c}^{u}\left(D(0)=1|X=x\right)+\pi_{def|x},0\right\}
\end{align*}

The unconditional bounds are estimated  by integrating over the empirical distribution of the covariates $X$. Let

\vspace{-10mm}

\begin{equation*}
    \widehat{\pi}^{u}_{co}(c,\pi_{def})=\frac{1}{N}\sum_{i=1}^{N}\widehat{\pi}^{u}_{co|X_{i}}(c,\pi_{def})\ \text{and}\ \widehat{\pi}_{co}^{l}(c,\pi_{def})=\frac{1}{N}\sum_{i=1}^{N}\widehat{\pi}^{l}_{co|X_{i}}(c,\pi_{def})
\end{equation*}

In Lemma 6 of Appendix B, I show that these estimators converge weakly to a Gaussian element. The estimators for the bounds of the LATE are obtained by combining the bounds of the ITT and the share of compliers:

\vspace{-10mm}

\begin{align*}
    &\widehat{LATE}^{u}(c,\pi_{def})=\min\left\{\frac{\widehat{ITT}^{u}(c,\pi_{def})}{\widehat{\pi}^{l}_{co}(c,\pi_{def})},1\right\},\\&\widehat{LATE}^{l}(c,\pi_{def})=\max\left\{\frac{\widehat{ITT}^{l}(c,\pi_{def})}{\widehat{\pi}^{u}_{co}(c,\pi_{def})},-1\right\}
\end{align*}

In Lemma 7 of Appendix B, I show that these estimators converge weakly to a Gaussian element. The estimator for the breakdown frontier for the conclusion that $LATE\geq \mu_{2}$ is

\vspace{-10mm}

\begin{equation*}
    \widehat{BF}_{LATE}(c,\mu_{2})=\min\left\{\max\left\{\widehat{bf}_{LATE}(c,\mu_{2}),0\right\},1\right\}
\end{equation*}

\noindent where

\vspace{-10mm}

\begin{align*}
    &\widehat{bf}_{LATE}=\frac{1}{N}\sum_{i=1}^{N}\frac{\widehat{\mathbb{P}}^{l}_{c}\left(Y(D(1))=1|X_{i}\right)-\widehat{\mathbb{P}}^{u}_{c}\left(Y(D(0))=1|X_{i}\right)}{1+\mu_{2}}\\&-\frac{\mu_{2}\left\{\widehat{\mathbb{P}}^{u}_{c}\left(D(1)=1|X_{i}\right)-\widehat{\mathbb{P}}^{l}_{c}\left(D(0)=1|X_{i}\right)\right\}}{1+\mu_{2}}
\end{align*}

I show that the estimator for the breakdown frontier of the LATE converges in distribution.

\begin{theorem}
    Suppose Assumptions 1-7 hold and that $c\in\mathcal{C}$ for some finite grid $\mathcal{C}\subset \overline{C}\in\left(0,\min\left\{p_{1|x},p_{0|x}\right\}\right)$. Let $\mathcal{M}\subset\left[-1,1\right]$ be a finite grid of points. Then, 

    \begin{equation*}
        \sqrt{N}\left(\widehat{BF}_{LATE}(c,\mu)-BF_{LATE}(c,\mu)\right)\xrightarrow[d]{}\textbf{Z}_{BF_{LATE}}(c,\mu)
    \end{equation*}

    \noindent a tight random element of $l^{\infty}\left(\mathcal{C}\times\mathcal{M}\right)$.
\end{theorem}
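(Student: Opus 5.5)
The plan is a chain of delta-method steps, mirroring the argument for the ITT frontier (Theorem 1). Write $\theta=(\{p_{y,d|z,x}\},\{p_{z|x}\},\{q_x\})$ for the finite-dimensional parameter vector. By Lemma 1 of Appendix B, $\sqrt{N}(\widehat\theta-\theta)$ converges to a mean-zero Gaussian vector. Because $\mathcal{S}(X)$ is finite (Assumption 6), averaging over the empirical distribution of $X_i$ in $\widehat{bf}_{LATE}$ is the same as $\sum_{k}\widehat q_{x_k}(\cdot)$. So $\widehat{bf}_{LATE}(c,\mu)=\phi(\widehat\theta)(c,\mu)$ for a fixed map $\phi:\mathbb{R}^{m}\to l^{\infty}(\mathcal{C}\times\mathcal{M})$. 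I will prove the result in three steps:
\begin{enumerate}
\item show $\phi$ is Hadamard directionally differentiable at $\theta$;
\item compose with the truncation map $g\mapsto \min\{\max\{g,0\},1\}$;
\item apply the delta method of \cite{fangsantos}.
\end{enumerate}

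For step 1, I split $\phi$ into its building blocks. Under Assumption 7 and $c\in\mathcal{C}\subset(0,\min\{p_{1|x},p_{0|x}\})$, the indicators $\mathbf{1}(p_{z|x}>c)$ in Proposition 1 equal one in a neighborhood of $\theta$. Each joint bound $\mathbb{P}^{u}_{c},\mathbb{P}^{l}_{c}$ is therefore a max or min of finitely many functions, each $C^1$ in $(p_{y,d|z,x},p_{z|x})$ with denominators $p_{z|x}\pm c$ bounded away from zero. Maps of the form $(a_1,\dots,a_J)\mapsto\max_j a_j$ are Hadamard directionally differentiable, and the chain rule for such maps (Proposition 3.6 of \cite{fangsantos}) then gives directional differentiability of the joint bounds. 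Lemmas 3 and 4 already record this for the joint and marginal bounds on potential outcomes and potential treatments. The marginal bounds in Propositions 2 and 3 are further max/min of sums of these, so they inherit the property. The numerator and denominator pieces of $bf_{LATE}$ are linear combinations of these marginal bounds, weighted by $q_x$, times the fixed smooth factors $1/(1+\mu)$ and $\mu/(1+\mu)$.

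One point needs care: $1+\mu$ vanishes at $\mu=-1$, which lies in the allowed range of $\mathcal{M}$. I would restrict $\mathcal{M}\subset(-1,1]$, or note that the grid is finite and so bounded away from $-1$. Finiteness of $\mathcal{C}\times\mathcal{M}$ is what makes the passage to $l^\infty(\mathcal{C}\times\mathcal{M})$ trivial. On a finite index set, $l^\infty$ is just $\mathbb{R}^{|\mathcal{C}||\mathcal{M}|}$ with the sup norm, and directional differentiability can be checked coordinatewise. Tightness of the limit is then automatic.

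For step 2, the outer map $g\mapsto\min\{\max\{g,0\},1\}$ acts pointwise. Its directional derivative at $g$ in direction $h$ is
\[
h\,\mathbf{1}(0<g<1)+\max\{h,0\}\mathbf{1}(g=0)+\min\{h,0\}\mathbf{1}(g=1),
\]
which is again Hadamard directionally differentiable on the finite grid. Composing the two steps by the chain rule shows $BF_{LATE}=\psi(\theta)$ with $\psi$ Hadamard directionally differentiable at $\theta$. Its derivative $\psi'_\theta$ is continuous but possibly nonlinear in the direction.

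Step 3 is to apply Theorem 2.1 of \cite{fangsantos}. This gives $\sqrt N(\widehat{BF}_{LATE}-BF_{LATE})\xrightarrow[d]{}\psi'_\theta(\mathbf{Z}_\theta)=:\mathbf{Z}_{BF_{LATE}}$, where $\mathbf{Z}_\theta$ is the Gaussian limit from Lemma 1. As a tight element of a finite-dimensional space, this limit is tight in $l^\infty(\mathcal{C}\times\mathcal{M})$.

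The main obstacle is step 1. I expect the delicate points to be, first, verifying that the indicator terms really do drop out uniformly over the grid, which requires that the closure of $\mathcal{C}$ stay strictly below $\min_x p_{z|x}$. Second, the derivative must be tracked at kinks where several terms inside a max or min tie. Ties such as $p_{y,d|z,x}p_{z|x}=p_{y,d|z,x}p_{z|x}/(p_{z|x}+c)$ occur only when $p_{y,d|z,x}=0$, which is excluded by Assumption 1. Other ties are genuinely possible, and that is precisely why only directional, not full, differentiability is claimed. This is also why the limit may be non-Gaussian, in contrast with the Gaussian statements of Lemmas 5 to 7.
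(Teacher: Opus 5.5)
Your proposal is correct and follows the paper's route. The paper proves Theorem 2 ``analogously to Theorem 1'': Lemma 1 for $\widehat\theta$, Lemmas 3--4 for the marginal bounds, a $\widehat q_x$-weighted linear combination giving $\widehat{bf}_{LATE}$, and the Fang--Santos delta method for the truncation $\min\{\max\{\cdot,0\},1\}$, which you simply compose into a single Hadamard directionally differentiable map. Your remark that $bf_{LATE}$ is undefined at $\mu=-1$ is a genuine caveat the paper omits; finiteness of the grid alone does not exclude $-1$, so you should impose $\mathcal{M}\subset(-1,1]$ explicitly.
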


The results in this section essentially follow from the $\sqrt{N}$-convergence rate of the sample analogue estimators to a Gaussian process and by sequential applications of the Delta Method for Hadamard directionally differentiable functions.

\subsection{Bootstrap Inference}

The limiting processes of the estimators presented in this Section are non-Gaussian, so relying on analytical estimates of quantiles of functionals of these processes would be challenging. In order to overcome these challenges I use the bootstrap procedure from \cite{mastenpoirier2020}. The bootstrap procedure is subsequently used to construct uniform confidence bands for the breakdown frontiers.

Let $W_{i}=\left(Y_{i},D_{i},Z_{i},X_{i}\right)$ and $W^{N}=\left\{W_{1},W_{2},...,W_{N}\right\}$. Let $\theta_{0}$ denote a parameter of interest and $\widehat{\theta}$ be an estimator of $\theta_{0}$ based on $W^{N}$. Define $\textbf{A}_{N}^{*}=\sqrt{N}\left(\widehat{\theta}^{*}-\widehat{\theta}\right)$, where $\widehat{\theta}^{*}$ is a draw from the nonparametric bootstrap distribution of $\widehat{\theta}$.

I focus on

\vspace{-10mm}

\begin{equation*}
    \theta_{0}=\begin{pmatrix}
p_{y,d|z,x}
 \\p_{z|x}
 \\q_{x}
\end{pmatrix}\ \text{and}\ \widehat{\theta}=\begin{pmatrix}
\widehat{p}_{y,d|z,x}
 \\\widehat{p}_{z|x}
 \\\widehat{q}_{x}
\end{pmatrix}
\end{equation*}

Let $\textbf{Z}_{1}$ denote the limiting distribution of $\sqrt{N}\left(\widehat{\theta}-\theta_{0}\right)$, which is defined in Lemma 1 of Appendix B. It is well known that $\textbf{A}_{N}^{*}$ converges weakly to $\textbf{Z}_{1}$. The parameters of interest are functionals $\phi$ of $\theta_{0}$. For Hadamard differentiable functions, the nonparametric bootstrap is valid \citep{fangsantos}. However, when parameters are only Hadamard directionally differentiable, which is the case for the bounds of the ITT and LATE, and the breakdown frontiers, the nonparametric bootstrap is not consistent.

To construct a consistent bootstrap distribution, I use the bootstrap procedure from \cite{fangsantos}, which relies on a consistent estimator $\widehat{\phi^{'}}_{\theta_{0}}$ of the Hadamard derivative at $\theta_{0}$. These estimates can be obtained by using the numerical derivative estimator proposed by \cite{HONG2018379}, which is

\vspace{-10mm}

\begin{equation*}
    \widehat{\phi^{'}}_{\theta_{0}}\left(\sqrt{N}(\widehat{\theta}^{*}-\widehat{\theta})\right)=\frac{\phi\left(\widehat{\theta}+\varepsilon_{N}\sqrt{N}\left(\widehat{\theta}^{*}-\widehat{\theta}\right)\right)-\phi\left(\widehat{\theta}\right)}{\varepsilon_{N}}
\end{equation*}

\noindent and is computed across the bootstrap estimates $\widehat{\theta}^{*}$ Under the constraints $\varepsilon_{N}\rightarrow0$ and $\sqrt{N}\varepsilon_{N}\rightarrow\infty$ and additional regularity conditions, this numerical derivative bootstrap procedure is consistent (Li and Hong, 2018).

I use this bootstrap procedure  construct uniform confidence bands for the breakdown frontiers. I focus on one-sided lower uniform confidence bands. I am looking for a lower bound function $\widehat{LB}(c)$ such that

\vspace{-10mm}

\begin{align*}
    &\lim_{N\rightarrow\infty}\mathbb{P}\left(\widehat{LB}(c)\leq BF_{ITT}(c,\mu)\ \text{for all}\ c\in\left[0,1\right]\right)=1-\alpha
\end{align*}

I consider bands of the form

\vspace{-10mm}

\begin{equation*}
    \widehat{LB}(c)=\widehat{BF}(c,\mu)-\widehat{z}_{1-\alpha}\frac{\sigma(c)}{\sqrt{N}}
\end{equation*}

\noindent where $\widehat{z}_{1-\alpha}$ is a scalar and $\sigma(.)$ is a known function. Note that under Assumptions 1-7, the estimators for the breakdown frontiers can be written as $\widehat{BF}(c,\mu)=\left[\phi(\widehat{\theta})\right](c)$, where $\phi$ is Hadamard directionally differentiable. If we further assume that $\varepsilon_{N}\rightarrow0$ and $\sqrt{N}\varepsilon_{N}\rightarrow\infty$, then the conditions in Proposition 2 from \cite{mastenpoirier2020} hold, and the estimator

\vspace{-10mm}

\begin{equation*}
    \widehat{z}_{1-\alpha}=\inf\left\{ z\in\mathbb{R}:\mathbb{P}\left(\sup_{c\in\mathcal{C}}\frac{\left [ \widehat{\phi^{'}}_{\theta_{0}}\left ( \sqrt{N}\left ( \widehat{\theta}^{*}-\widehat{\theta} \right ) \right ) \right ](c,\mu)}{\sigma(c)}\leq z|W^{N}\right)\geq 1-\alpha\right\}
\end{equation*}

is consistent for $z_{1-\alpha}$, the $1-\alpha$ quantile of the cdf of

\vspace{-10mm}

\begin{equation*}
    \sup_{c\in\mathcal{C}}\frac{\textbf{Z}_{BF}(c,\mu)}{\sigma(c)}
\end{equation*}

Note that this holds for the estimators of both breakdown frontiers. It follows that the proposed lower bands are valid uniformly on the grid $\mathcal{C}$. In the next section, I study the finite-sample properties of the estimation and inference procedures for breakdown frontiers.

\section{Monte Carlo Simulations}

In this section I study the finite sample performance of the estimation and inference procedures proposed in Section 5. I consider the same DGP from the numerical illustration in Section 4.2.1, which implies a joint distribution for $\left(Y,D,Z,X\right)$ from which I draw independently. 

I consider two sample sizes, $N=1000$ and $N=2000$. For each sample size, I conduct 500 Monte Carlo simulations. For each exercise, I compute the estimated breakdown frontier and a 95\% lower bootstrap uniform confidence band. In all simulations, I set $\varepsilon_{N}=2/\sqrt{N}$, which is the choice of $\varepsilon_{N}$ which shows the best finite-sample coverage in \cite{MastenPoirier2020_supplement}. I estimate the breakdown frontier over a finite grid of points $c$. In the simulation, I use 100 values of $c$ equally spaced between 0 and 0.15 both in all simulations and bootstrap procedures.

\begin{figure}[htbp]
\caption{Sampling Distribution of the Breakdown Frontier Estimator}
\centering
\includegraphics[width=0.45\textwidth]{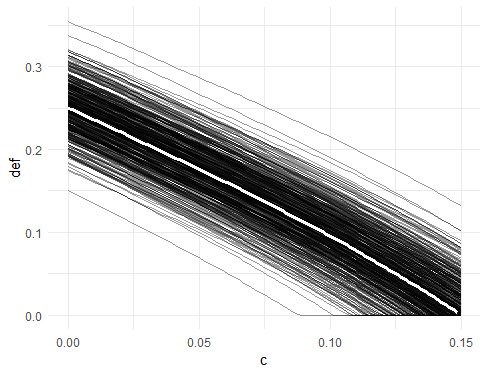}
\hfill
\includegraphics[width=0.45\textwidth]{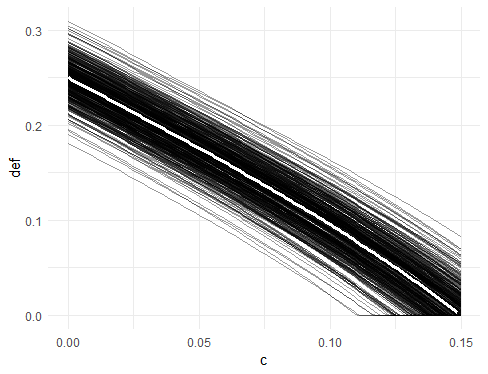}
\\
    \scriptsize \noindent \textit{Note:} Left: N = 1.000. Right: N = 2000. These plots show the sampling distribution of our
breakdown frontier estimator by gathering the point estimates of the breakdown frontier across
all Monte Carlo simulations into one plot. The true breakdown frontier is shown on top in white.
\end{figure}

Figure 3 shows the shows the sampling distribution of the breakdown frontier estimator for the conclusion that the LATE is greater than zero. The first thing that shows out is that, as implied by the consistency result in Section 5, the distribution of the estimator becomes tighter around the true frontier as the sample size increases. Second, the sampling distribution looks fairly symmetric around the true frontier. This contrasts with the findings of \cite{MastenPoirier2020_supplement}, which find that the estimator for the breakdown frontier of Distributional Treatment Effects is biased downwards. The difference might arise due to several factors: we consider different target parameters and different sensitivity parameters for the relaxation of the identifying assumptions, which inevitably leads to different functional forms for the breakdown frontiers. Nevertheless, the fact that the estimator for the breakdown frontier of the LATE is symmetric around the true frontier is a desirable feature which is does not hold generally for breakdown approach settings.

\begin{figure}[t]
    \centering
    \caption{Finite-Sample Bias of the Breakdown Frontier Estimator}
    \includegraphics{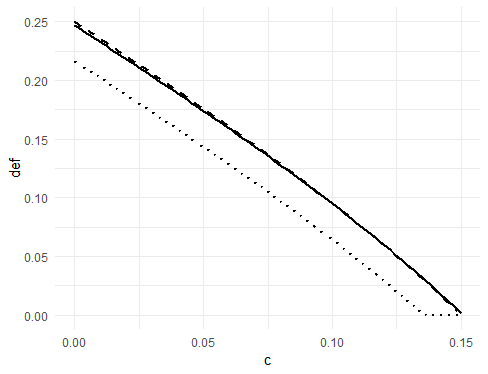}
    \label{fig:enter-label}\\
    \scriptsize \noindent \textit{Note:} This plot shows the finite-sample bias of the breakdown frontier estimator. The solid line is the true frontier, the dashed line the estimated finite sample mean of the frontier estimates and the dotted line the estimated finite sample mean of the 95\% lower confidence bands.
\end{figure}

Figure 4 shows the true breakdown frontier as the solid line, and the sample mean of breakdown estimates across the Monte Carlo simulations with $N=1.000$ as the dashed line. The two lines are pretty much overlapped, which shows that the finite-sample bias of the estimator for the frontier is very small across all considered values of $c$. The dotted line below represents the sample mean of the lower confidence band with nominal coverage $1-\alpha=0.95$.

Overall the results of the Monte Carlo exercise show desirable finite-sample properties of the estimator for the breakdown frontier. A pervasive concern when conducting inference procedures in IV settings is the so-called weak instrument problem. Although there are several bootstrap procedures that improve inference in settings with weak instruments where the standard assumptions hold, it is unclear how to improve the bootstrap for nondifferentiable functions. I leave this analysis for future work.

\section{Empirical Application}

In this section, I use the estimators from Section 5 to perform the breakdown analysis for the results regarding family size and female employment in \cite{angev}, using data from the US Census Public Use Microsamples married mothers aged 21–35 in 1980 with at least 2 children and oldest child less than 18.

In this setting, the dependent variable is and indicator for women who did not work for pay in 1979. Treatment is an indicator for women having three or more children, and the instrument is an indicator for women whose first two children have the same sex. The authors control for age, age at the first birth, race and sex of the first child as covariates.

Two concerns regarding the assumptions that lead to point identification of the LATE in this setting arise. The first, regards violations of monotonicity. The assumption holds if all parents in the sample have weak preferences towards mixed-sibling compositions. Although there is evidence that more families with two same-sex siblings have a higher probability of third birth than families with two siblings with mixed composition, this does not guarantee that there are no families which prefer same-sex siblings over mixed compositions. The second concern comes from the independence assumption. Genetic conditions which determine fertility outcomes can be correlated to economic outcomes \citep{farb}, which would lead to violations of independence. Under the light of this concerns, the \cite{angev} setting seems to be well suited for the breakdown analysis approach.

To begin the sensitivity analysis, I use selection on observables to to calibrate the beliefs regarding the amount of selection on unobservables. I take the approach from \cite{altonji} and \cite{mastenpoirier2018}. I partition the vector of covariates $X$ as $(X_{k},X_{-k})$, where $X_{k}$ is the $k$-th component and $X_{-k}$ is a vector with remaining components. The measures used to calibrate the beliefs regarding deviations from independence are

\vspace{-10mm}

\begin{equation*}
    \overline{c}_{k}=\sup_{x_{-k}}\sup_{x_{k}}\left | \mathbb{P}\left ( Z=1|X=(x_{k},x_{-k}) \right )-\mathbb{P}\left ( Z=1|X_{-k}=x_{-k} \right )\right |
\end{equation*}

In the data, the largest value obtained form $\overline{c}_{k}$ is associated to to the indicator for women whose first child is a man, which was estimated to be $\overline{c}_{1st\ sex}=0.011$. 

Using this result as a reference for the breakdown analysis, a robust result would have a breakdown frontier which admits values of $c$ above $\overline{c}_{1st\ sex}$. 

To calibrate the beliefs regarding violations of monotonicity, I follow \cite{tolerating}, which uses a survey from Peru in which women were asked about their ideal sex composition for their children. In the survey, 1.8\% of the respondents had three children or more and declared that ideal sex sibship composition would have been two boys and no girl, or no boy and two girls. Thus, one can argue that these women seem to have been induced to having a third
child because their first two children were a boy and a girl. Using this result as a reference, a robust result would have a breakdown frontier which admits a share of defiers greater than 0.018.

\begin{figure}[t]
    \centering
    \caption{Breakdown Frontier for the Effect of Family Size on Unemployment}
    \includegraphics{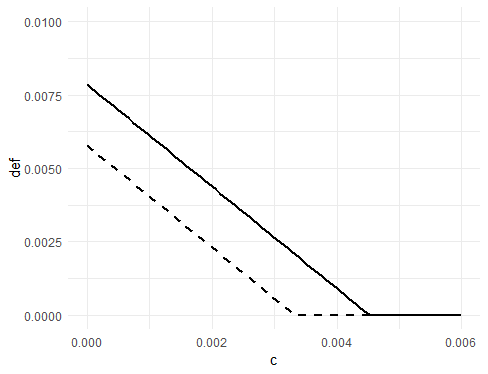}
    \label{fig:enter-label}\\
    \scriptsize \noindent \textit{Note:} Estimated breakdown frontier (solid line) for the conclusion that the effect of family size on unemployment is greater than zero. The dashed line is the 95\% lower confidence band.
\end{figure}

I estimate the breakdown frontier using 50 values of $c$, equally spaced between 0 and 0.1. For the construction of the lower confidence bands, I draw 999 boostrap samples from the data and set the tuning parameter to $\varepsilon_{N}=\frac{2}{\sqrt{N}}$. The implementation algorithm is similar to the one in Section 4 of \cite{mastenpoirier2020}, although it is much simpler since the considered outcome is binary.

Figure 5 shows the estimated breakdown frontier for the conclusion that the the effects of family size on employment is negative. The solid line is the estimated breakdown frontier, and the dashed line is the lower confidence band at the $1-\alpha=0.95$ level.

One can think of this the breakdown frontier as the frontier for the conclusion that the qualitative takeaways from \cite{angev} hold. The plot shows that when independence holds ($c=0$) the maximum share of defiers uner which the qualitative takeaways hold is 0.008, which lies below the baseline share of defiers implied the by the Peruvian survey. When monotonicity holds ($def=0$) the largest admissible difference between the observable and unobservable propensity scores is around 0.004 probability units, which lies below the baseline violation of 0.011 implied by the calibration based on selection on observables.

Overall, the results from this breakdown analysis suggest that the conclusion that effect of family size on employment is negative is not robust to violations of independence or monotonicity of the same-sex siblings instrument. The results align with the findings of \cite{noack2026sensitivity} which shows that small violations of monotonicity lead to uninformative results in this setting, and also add to the discussion that small deviations from independence also lead to uninformative results.

\section{Conclusion}

In this paper, I provide a breakdown frontier approach to sensitivity analysis in Instrumental Variables settings. I study the partial identification of the LATE under parametrizations of violations of independence and monotonicity. The bounds for the LATE are used to derived breakdown frontiers, the weakest set of assumptions such that a particular conclusion of interest
holds. Also, I derive identified set for the ATE under violations of independence and monotonicity given the fact that when the population of compliers is not point-identified, the LATE is no longer such a relevant parameter.

I propose sample analogue estimators and uniform confidence bands for the breakdown frontiers. Monte Carlo simulations show that the estimator exhibits desirable finite-sample properties. 

Finally, I use the proposed breakdown frontier approach to revisit the results from \cite{angev}, and find that the conclusions regarding the effect of family size on unemployment are highly sensitive to violations of independence and monotonicity.

\bibliography{bibliography}

\section*{Appendix A}

\subsection*{Proof of Proposition 1}

\subsubsection*{Validity}

The proof is the same as the one from Proposition 5 in \cite{mastenpoirier2018}, adapted to the version of conditional $c$-dependence where the probability of assignment is conditional on both potential outcomes and treatments.

\subsubsection*{Sharpness}

To show sharpness of the interior, with exhibit two joint distributions $\left(Y(D(z)),D(z),Z,X\right)$ consistent with the data and Assumptions 1-3. The first one yields the element $\mathbb{P}^{u}_{c}\left(Y(D(z))=y,D(z)=d|X=x\right)$ and the second one yields the element $\mathbb{P}^{l}_{c}\left(Y(D(z))=y,D(z)=d|X=x\right)$. If both bounds are attainable from DGPs which are consistent with the data and the assumptions, then all points in the identified set can be obtained by mixtures of these DGPs, and sharpness follows.

Since the distribution of $\left(Z,X\right)$ is observed, we need to specify a distribution for $(Y(D(z)),D(z))|Z,X$. We always observe $\mathbb{P}\left(Y(D(z))=y,D(z)=d|Z=z,X=x\right)=p_{y,d|z,x}$. Hence, we only need to specify a distribution for $\mathbb{P}\left(Y(D(z))=y,D(z)=d|Z=1-z,X=x\right)$.

I begin by specifying a value of $\mathbb{P}\left(Y(D(z))=y,D(z)=d|Z=1-z,X=x\right)$ such that

\begin{enumerate}
    \item $\mathbb{P}\left(Y(D(z))=y,D(z)=d|X=x\right)=\mathbb{P}^{u}_{c}\left(Y(D(z))=y,D(z)=d|X=x\right)$
    \item $\mathbb{P}\left(Y(D(z))=y,D(z)=d|Z=1-z,X=x\right)\in\left(0,1\right)$.
    \item Conditional $c$-dependence is satisfied.
\end{enumerate}

\textbf{Proof of 1:} Choose

\vspace{-10mm}

\begin{equation*}
    \mathbb{P}\left(Y(D(z))=y,D(z)=d|Z=1-z,X=x\right)=\frac{\mathbb{P}^{u}_{c}\left(Y(D(z))=y,D(z)=d|X=x\right)-p_{y,d|z,x}p_{z|x}}{1-p_{z|x}}
\end{equation*}

It follows that 

\vspace{-10mm}

\begin{align*}
    &\mathbb{P}\left(Y(D(z))=y,D(z)=d|X=x\right)\\&=p_{y,d|z,x}p_{z|x}+\mathbb{P}\left(Y(D(z))=y,D(z)=d|Z=1-z,X=x\right)(1-p_{z|x})\\&=\mathbb{P}^{u}_{c}\left(Y(D(z))=y,D(z)=d|X=x\right)
\end{align*}

\textbf{Proof of 2:}

Note that $\mathbb{P}^{u}_{c}\left(Y(D(z))=y,D(z)=d|X=x\right)\in\left(0,1\right)$ and that $p_{y,d|z,x}p_{z|x}\in\left(0,1\right)$. Moreover, note that $\mathbb{P}^{u}_{c}\left(Y(D(z))=y,D(z)=d|X=x\right)>p_{y,d|z,x}p_{z|x}$. And therefore, it follows that $\mathbb{P}\left(Y(D(z))=y,D(z)=d|Z=1-z,X=x\right)\in\left(0,1\right)$.

\textbf{Proof of 3:}

Conditional $c$-dependence implies that for all $\left(y,d,z\right)$, 

\vspace{-10mm}

\begin{equation*}
    \mathbb{P}\left(Z=z|Y(D(z))=y,D(z)=d,X=x\right)\in\left[p_{z|x}-c,p_{z|x}+c\right]
\end{equation*}

Using Bayes' rule, write $\mathbb{P}\left(Z=z|Y(D(z))=y,D(z)=d,X=x\right)$ as

\vspace{-10mm}

\begin{equation*}
    \mathbb{P}\left(Z=z|Y(D(z))=y,D(z)=d,X=x\right)=\frac{p_{z|x}p_{y,d|z,x}}{\mathbb{P}\left(Y(D(z))=y,D(z)|X=x\right)}
\end{equation*}

Decomposing the denominator using the Law of Total Probabilities, we find that in order for $c$-dependence to hold, it must be the case that $\mathbb{P}\left(Y(D(z))=y,D(z)|Z=1-z,X=x\right)$ lies in the interval

\vspace{-10mm}

\begin{equation*}
    \left[\frac{p_{y,d|z,x}p_{z|x}(1-(p_{z|x}+c))}{(p_{z|x}+c)(1-p_{z|x})},\frac{p_{y,d|z,x}p_{z|x}(1-(p_{z|x}-c))}{(p_{z|x}-c)(1-p_{z|x})}\right]
\end{equation*}

Note that

\vspace{-10mm}

\begin{align*}
    &\mathbb{P}\left(Y(D(z))=y,D(z)=d|Z=1-z,X=x\right)=\frac{\mathbb{P}^{u}_{c}\left(Y(D(z))=y,D(z)=d|X=x\right)-p_{y,d|z,x}p_{z|x}}{1-p_{z|x}}\\&\leq \frac{\frac{p_{y,d|z,x}p_{z|x}}{p_{z|x}-c}-p_{y,d|z,x}p_{z|x}}{1-p_{z|x}}=\frac{p_{y,d|z,x}p_{z|x}(1-(p_{z|x}-c))}{(p_{z|x}-c)(1-p_{z|x})}
\end{align*}

Also, note that 

\vspace{-10mm}

\begin{align*}
    &\mathbb{P}\left(Y(D(z))=y,D(z)=d|Z=1-z,X=x\right)=\frac{\mathbb{P}^{u}_{c}\left(Y(D(z))=y,D(z)=d|X=x\right)-p_{y,d|z,x}p_{z|x}}{1-p_{z|x}}\\&\geq \frac{\mathbb{P}^{l}_{c}\left(Y(D(z))=y,D(z)=d|X=x\right)-p_{y,d|z,x}p_{z|x}}{1-p_{z|x}}\\&\geq \frac{\frac{p_{y,d|z,x}p_{z|x}}{p_{z|x}+c}-p_{y,d|z,x}p_{z|x}}{1-p_{z|x}}=\frac{p_{y,d|z,x}p_{z|x}(1-(p_{z|x}+c))}{(p_{z|x}+c)(1-p_{z|x})}
\end{align*}

And thus, Assumption 3 holds. Now, I specify a value of $\mathbb{P}\left(Y(D(z))=y,D(z)=d|Z=1-z,X=x\right)$ such that

\begin{enumerate}
    \item $\mathbb{P}\left(Y(D(z))=y,D(z)=d|X=x\right)=\mathbb{P}^{l}_{c}\left(Y(D(z))=y,D(z)=d|X=x\right)$
    \item $\mathbb{P}\left(Y(D(z))=y,D(z)=d|Z=1-z,X=x\right)\in\left(0,1\right)$.
    \item Conditional $c$-dependence is satisfied.
\end{enumerate}

\textbf{Proof of 1:} Choose

\vspace{-10mm}

\begin{equation*}
    \mathbb{P}\left(Y(D(z))=y,D(z)=d|Z=1-z,X=x\right)=\frac{\mathbb{P}^{l}_{c}\left(Y(D(z))=y,D(z)=d|X=x\right)-p_{y,d|z,x}p_{z|x}}{1-p_{z|x}}
\end{equation*}

It follows that 

\vspace{-10mm}

\begin{align*}
    &\mathbb{P}\left(Y(D(z))=y,D(z)=d|X=x\right)\\&=p_{y,d|z,x}p_{z|x}+\mathbb{P}\left(Y(D(z))=y,D(z)=d|Z=1-z,X=x\right)(1-p_{z|x})\\&=\mathbb{P}^{l}_{c}\left(Y(D(z))=y,D(z)=d|X=x\right)
\end{align*}

\textbf{Proof of 2:}

Since both $\mathbb{P}^{l}_{c}\left(Y(D(z))=y,D(z)=d|X=x\right)$ and $p_{y,d|z,x}p_{z|x}$ lie between zero and one, it follows that $\mathbb{P}\left(Y(D(z))=y,D(z)=d|Z=1-z,X=x\right)\leq 1$. Also, we have

\vspace{-10mm}

\begin{align*}
    &\frac{\mathbb{P}^{l}_{c}\left(Y(D(z))=y,D(z)=d|X=x\right)-p_{y,d|z,x}p_{z|x}}{1-p_{z|x}}\geq \frac{p_{y,d|z,x}p_{z|x}}{1-p_{z|x}}\left(\frac{1}{\min\left\{p_{z|x}+c,1\right\}}-1\right)\\&\geq 0
\end{align*}

\textbf{Proof of 3:}

Follows from the proof for the upper bound. Therefore, there are DGPs consistent with the data and Assumptions 1-3 which attain the upper and the lower bound, from which sharpness follows.

\subsection*{Proof of Proposition 2}

\subsubsection*{Validity}

By the Law fo Total Probabilities, we have

\vspace{-10mm}

\begin{equation*}
    \mathbb{P}\left(Y(D(z))=y|X=x\right)=\mathbb{P}\left(Y(D(z))=y,D(z)=1|X=x\right)+\mathbb{P}\left(Y(D(z))=y,D(z)=0|X=x\right)
\end{equation*}

Hence, it follows that

\vspace{-10mm}

\begin{equation*}
    \mathbb{P}\left(Y(D(z))=y|X=x\right)\leq\mathbb{P}_{c}^{u}\left(Y(D(z))=y,D(z)=1|X=x\right)+\mathbb{P}_{c}^{u}\left(Y(D(z))=y,D(z)=0|X=x\right)
\end{equation*}

and

\vspace{-10mm}

\begin{equation*}
    \mathbb{P}\left(Y(D(z))=y|X=x\right)\geq\mathbb{P}_{c}^{l}\left(Y(D(z))=y,D(z)=1|X=x\right)+\mathbb{P}_{c}^{l}\left(Y(D(z))=y,D(z)=0|X=x\right)
\end{equation*}

Also, note that

\vspace{-10mm}

\begin{equation*}
    \mathbb{P}\left(Y(D(z))=y|X=x\right)=p_{y|z,x}p_{z|x}+\mathbb{P}\left(Y(D(z))=y|Z=1-z,X=x\right)(1-p_{z|x})
\end{equation*}

which lies in the interval $\left[p_{y|z,x}p_{z|x},p_{y|z,x}p_{z|x}+(1-p_{z|x})\right]$, which concludes the proof.

\subsubsection*{Sharpness}

The proof is conducted the same way as in proposition 1. I begin by finding a DGP consistent with the data and Assumptions 1-3 that attains the upper bound.

Choose

\vspace{-10mm}

\begin{align*}
    &\mathbb{P}\left(Y(D(z))=y,D(z)=1|Z=1-z,X=x\right)=\frac{\mathbb{P}^{u}_{c}\left(Y(D(z))=y,D(z)=1|X=x\right)-p_{y,1|z,x}p_{z|x}}{1-p_{z|x}},\\&\mathbb{P}\left(Y(D(z))=y,D(z)=0|Z=1-z,X=x\right)=\frac{\mathbb{P}^{u}_{c}\left(Y(D(z))=y,D(z)=0|X=x\right)-p_{y,0|z,x}p_{z|x}}{1-p_{z|x}}
\end{align*}

Then we obtain

\vspace{-10mm}

\begin{align*}
    &\mathbb{P}\left(Y(D(z))=y|X=x\right)=\mathbb{P}\left(Y(D(z))=y,D(z)=1|X=x\right)+\mathbb{P}\left(Y(D(z))=y,D(z)=0|X=x\right)\\&=p_{y,1|z,x}p_{z|x}+\mathbb{P}\left(Y(D(z))=y,D(z)=1|Z=1-z,X=x\right)(1-p_{z|x})\\&+p_{y,0|z,x}p_{z|x}+\mathbb{P}\left(Y(D(z))=y,D(z)=0|Z=1-z,X=x\right)(1-p_{z|x})\\&=\mathbb{P}^{u}_{c}\left(Y(D(z))=y,D(z)=1|X=x\right)+\mathbb{P}^{u}_{c}\left(Y(D(z))=y,D(z)=0|X=x\right)\\&=\mathbb{P}^{u}_{c}\left(Y(D(z))=y|X=x\right)
\end{align*}

Also, note that under the additional restrictions, $\mathbb{P}^{u}_{c}\left(Y(D(z))=y,D(z)=d|X=x\right)=\frac{p_{y,d|z,x}p_{z|x}}{p_{z|x}-c}$, so it follows that

\vspace{-10mm}

\begin{equation*}
    \mathbb{P}\left(Y(D(z))=y|Z=1-z,X=x\right)=\frac{\frac{p_{y|z,x}p_{z|x}}{p_{z|x}-c}-p_{y|z,x}p_{z|x}}{1-p_{z|x}}\in\left(0,1\right)
\end{equation*}

Finally, $c$-dependence is satisfied because the bounds on the joint probabilities satisfy the inequality provided in \textit{Proof of 3} in Proposition 1. Therefore, there is a DGP consistent with the data and Assumptions 1-3 which attains the upper bound for potential outcomes.

A DGP consistent with the data and Assumptions 1-3 which attains the lower bound can be obtained analogously, and therefore, sharpness follows.

\subsection*{Proof of Proposition 3}

The proof is analogous to the one in Proposition 2.

\subsection*{Proof of Proposition 4}

\subsubsection*{Validity}

The ITT conditional on $X=x$ can be expressed as

\vspace{-10mm}

\begin{align*}
    &ITT_{x}=LATE_{co|x}\pi_{co|x}=\left\{\mathbb{P}\left(Y(1)=1|co,X=x\right)-\mathbb{P}\left(Y(0)=1|co,X=x\right)\right\}\pi_{co|x}\\&=\mathbb{P}\left(Y(1)=1,co|X=x\right)-\mathbb{P}\left(Y(0)=1,co|X=x\right)
\end{align*}

Using theorem 2 (i) from De Chaisemartin (2017), we write $ITT_{x}$ as

\vspace{-10mm}

\begin{equation*}
    ITT_{x}=\mathbb{P}\left(Y(D(1))=1|X=x\right)-\mathbb{P}\left(Y(D(0))=1|X=x\right)+LATE_{def|x}\pi_{def|x}
\end{equation*}

Therefore, it follows that

\vspace{-10mm}

\begin{equation*}
    ITT_{x}\leq \mathbb{P}^{u}\left(Y(D(1))=1|X=x\right)-\mathbb{P}^{l}\left(Y(D(0))=1|X=x\right)+\pi_{def|x}
\end{equation*}

and 

\vspace{-10mm}

\begin{equation*}
    ITT_{x}\geq \mathbb{P}^{l}\left(Y(D(1))=1|X=x\right)-\mathbb{P}^{u}\left(Y(D(0))=1|X=x\right)-\pi_{def|x}
\end{equation*}

\subsubsection*{Sharpness}

The proof is conducted with the same structure as the proofs of sharpness in the previous propositions. I begin by constructing a DGP that attains the upper bound.

Choose

\vspace{-10mm}

\begin{align*}
    &\mathbb{P}\left(Y(D(1))=1,D(1)=d|Z=0,X=x\right)=\frac{\mathbb{P}^{u}_{c}\left(Y(D(1))=1,D(1)=d|X=x\right)-p_{1|x}p_{1,d|1,x}}{1-p_{1|x}},\\&\mathbb{P}\left(Y(D(0))=1,D(0)=d|Z=1,X=x\right)=\frac{\mathbb{P}^{l}_{c}\left(Y(D(0))=1,D(0)=d|X=x\right)-p_{0|x}p_{1,d|0,x}}{1-p_{0|x}}
\end{align*}

It follows that from the choice, we have $\mathbb{P}\left(Y(D(1))=1|X=x\right)=\mathbb{P}^{u}_{c}\left(Y(D(1))=1|X=x\right)$ and $\mathbb{P}\left(Y(D(0))=1|X=x\right)=\mathbb{P}^{l}_{c}\left(Y(D(0))=1|X=x\right)$. Choose a defier share $\pi_{def|x}$ consistent with the Frechet bounds implies that the group shares

\vspace{-10mm}

\begin{align*}
    &\pi_{at|x}=\mathbb{P}\left(D(0)=1|X=x\right)-\pi_{def|x},\\&\pi_{nt|x}=\mathbb{P}\left(D(1)=0|X=x\right)-\pi_{def|x},\\&\pi_{c|x}=\mathbb{P}\left(D(1)=1|X=x\right)-\mathbb{P}\left(D(0)=1|X=x\right)+\pi_{def|x}
\end{align*}

\noindent are all nonnegative and sum up to one. Draw the compliance groups $g\in\left\{at,co,nt,def\right\}$ with probabilities $\pi_{g|x}$ described above.

Set potential treatments as usual:

\vspace{-10mm}

\begin{equation*}
    \left ( D(1),D(0) \right )=\left\{\begin{matrix}
\left ( 1,1 \right ), \text{for}\ $at$ 
\\\left ( 1,0 \right ), \text{for}\ $co$ 
 \\\left ( 0,0 \right ), \text{for}\ $nt$ 
 \\\left ( 0,1 \right ), \text{for}\ $def$ 
\end{matrix}\right.
\end{equation*}

We now set the potential outcomes. For the group of defiers, set $\left(Y(1),Y(0)\right)=(1,0)$ almost surely. For the remaining groups, set 

\vspace{-10mm}

{\small \begin{align*}
    &\mathbb{P}\left(Y(1)=1|at,X=x\right)=\frac{\mathbb{P}^{l}_{c}\left(Y(D(0))=1,D(0)=1|X=x\right)-\pi_{def|x}}{\pi_{at|x}},\\&\mathbb{P}\left(Y(0)=1|nt,X=x\right)=\frac{\mathbb{P}^{u}_{c}\left(Y(D(1))=1,D(1)=0|X=x\right)}{\pi_{nt|x}},\\&\mathbb{P}\left(Y(1)=1|co,X=x\right)=\frac{\mathbb{P}^{u}_{c}\left(Y(D(1))=1,D(1)=1|X=x\right)-\mathbb{P}^{l}_{c}\left(Y(D(0))=1,D(0)=1|X=x\right)+\pi_{def|x}}{\pi_{co|x}},\\&\mathbb{P}\left(Y(0)=1|co,x=X\right)=\frac{\mathbb{P}^{l}_{c}\left(Y(D(0))=1,D(0)=0|X=x\right)-\mathbb{P}^{u}_{c}\left(Y(D(1))=1,D(1)=0|X=x\right)}{\pi_{co|x}}
\end{align*}}

These are all probabilities that lie between 0 and 1. Under this construction, we obtain

\vspace{-10mm}

\begin{align*}
    &\mathbb{P}\left(Y(D(1))=1,D(1)=1|X=x\right)=\mathbb{P}^{u}_{c}\left(Y(D(1))=1,D(1)=1|X=x\right),\\&\mathbb{P}\left(Y(D(1))=1,D(1)=0|X=x\right)=\mathbb{P}^{u}_{c}\left(Y(D(1))=1,D(1)=0|X=x\right)
\end{align*}

And therefore, 

\vspace{-10mm}

\begin{align*}
    &\mathbb{P}\left(Y(D(1))=1|X=x\right)=\mathbb{P}^{u}_{c}\left(Y(D(1))=1,D(1)=1|X=x\right)+\mathbb{P}^{u}_{c}\left(Y(D(1))=1,D(1)=0|X=x\right)\\&=\mathbb{P}^{u}_{c}\left(Y(D(1))=1|X=x\right)
\end{align*}

Similarly, 

\vspace{-10mm}

\begin{align*}
    &\mathbb{P}\left(Y(D(0))=1,D(0)=1|X=x\right)=\mathbb{P}^{l}_{c}\left(Y(D(0))=1,D(0)=1|X=x\right),\\&\mathbb{P}\left(Y(D(0))=1,D(0)=0|X=x\right)=\mathbb{P}^{l}_{c}\left(Y(D(0))=1,D(0)=0|X=x\right)
\end{align*}

And therefore, 

\vspace{-10mm}

\begin{align*}
    &\mathbb{P}\left(Y(D(0))=1|X=x\right)=\mathbb{P}^{l}_{c}\left(Y(D(0))=1,D(0)=1|X=x\right)+\mathbb{P}^{l}_{c}\left(Y(D(0))=1,D(0)=0|X=x\right)\\&=\mathbb{P}^{l}_{c}\left(Y(D(0))=1|X=x\right)
\end{align*}

From which we conclude that

\vspace{-10mm}

\begin{align*}
    &ITT_{x}=\mathbb{P}\left(Y(D(1))=1|X=x\right)-\mathbb{P}\left(Y(D(0))=1|X=x\right)+LATE_{def|x}\pi_{def|x}\\&=\mathbb{P}^{u}_{c}\left(Y(D(1))=1|X=x\right)-\mathbb{P}^{l}_{c}\left(Y(D(0))=1|X=x\right)+\pi_{def|x}=ITT_{x}^{l}(c,\pi_{def|x})
\end{align*}

By construction, the joint probabilities are exactly the sharpness-attaining joint probabilities from Proposition 1, so the observed distribution and Assumptions 1-3 are respected. Assumption 4 holds because the chosen share of defiers is smaller than the implied share of compliers.

To construct a DGP which attains the lower bound, set

\vspace{-10mm}

\begin{align*}
    &\mathbb{P}\left(Y(D(1))=1,D(1)=d|Z=0,X=x\right)=\frac{\mathbb{P}^{l}_{c}\left(Y(D(1))=1,D(1)=d|X=x\right)-p_{1|x}p_{1,d|1,x}}{1-p_{1|x}},\\&\mathbb{P}\left(Y(D(0))=1,D(0)=d|Z=1,X=x\right)=\frac{\mathbb{P}^{u}_{c}\left(Y(D(0))=1,D(0)=d|X=x\right)-p_{0|x}p_{1,d|0,x}}{1-p_{0|x}}
\end{align*}

Keep the same group shares defined for the proof of the upper bound, and set $\left(Y(1),Y(0)\right)=\left(0,1\right)$ almost surely for defiers. For the remaining potential outcomes, set 

\vspace{-10mm}

{\small \begin{align*}
    &\mathbb{P}\left(Y(1)=1|at,X=x\right)=\frac{\mathbb{P}^{u}_{c}\left(Y(D(0))=1,D(0)=1|X=x\right)}{\pi_{at|x}},\\&\mathbb{P}\left(Y(0)=1|nt,X=x\right)=\frac{\mathbb{P}^{l}_{c}\left(Y(D(1))=1,D(1)=0|X=x\right)-\pi_{def|x}}{\pi_{nt|x}},\\&\mathbb{P}\left(Y(1)=1|co,X=x\right)=\frac{\mathbb{P}^{l}_{c}\left(Y(D(1))=1,D(1)=1|X=x\right)-\mathbb{P}^{u}_{c}\left(Y(D(0))=1,D(0)=1|X=x\right)}{\pi_{co|x}},\\&\mathbb{P}\left(Y(0)=1|co,x=X\right)=\frac{\mathbb{P}^{u}_{c}\left(Y(D(0))=1,D(0)=0|X=x\right)-\mathbb{P}^{l}_{c}\left(Y(D(1))=1,D(1)=0|X=x\right)+\pi_{def|x}}{\pi_{co|x}}
\end{align*}}

Under this construction, we obtain

\vspace{-10mm}

\begin{align*}
    &\mathbb{P}\left(Y(D(1))=1,D(1)=1|X=x\right)=\mathbb{P}^{l}_{c}\left(Y(D(1))=1,D(1)=1|X=x\right),\\&\mathbb{P}\left(Y(D(1))=1,D(1)=0|X=x\right)=\mathbb{P}^{l}_{c}\left(Y(D(1))=1,D(1)=0|X=x\right)
\end{align*}

And therefore, 

\vspace{-10mm}

\begin{align*}
    &\mathbb{P}\left(Y(D(1))=1|X=x\right)=\mathbb{P}^{l}_{c}\left(Y(D(1))=1,D(1)=1|X=x\right)+\mathbb{P}^{l}_{c}\left(Y(D(1))=1,D(1)=0|X=x\right)\\&=\mathbb{P}^{l}_{c}\left(Y(D(1))=1|X=x\right)
\end{align*}

Similarly, 

\vspace{-10mm}

\begin{align*}
    &\mathbb{P}\left(Y(D(0))=1,D(0)=1|X=x\right)=\mathbb{P}^{u}_{c}\left(Y(D(0))=1,D(0)=1|X=x\right),\\&\mathbb{P}\left(Y(D(0))=1,D(0)=0|X=x\right)=\mathbb{P}^{u}_{c}\left(Y(D(0))=1,D(0)=0|X=x\right)
\end{align*}

And therefore, 

\vspace{-10mm}

\begin{align*}
    &\mathbb{P}\left(Y(D(0))=1|X=x\right)=\mathbb{P}^{u}_{c}\left(Y(D(0))=1,D(0)=1|X=x\right)+\mathbb{P}^{u}_{c}\left(Y(D(0))=1,D(0)=0|X=x\right)\\&=\mathbb{P}^{u}_{c}\left(Y(D(0))=1|X=x\right)
\end{align*}

From which we conclude that

\vspace{-10mm}

\begin{align*}
    &ITT_{x}=\mathbb{P}\left(Y(D(1))=1|X=x\right)-\mathbb{P}\left(Y(D(0))=1|X=x\right)+LATE_{def|x}\pi_{def|x}\\&=\mathbb{P}^{l}_{c}\left(Y(D(1))=1|X=x\right)-\mathbb{P}^{u}_{c}\left(Y(D(0))=1|X=x\right)-\pi_{def|x}=ITT_{x}^{l}(c,\pi_{def|x})
\end{align*}

By construction, the joint probabilities are exactly the sharpness-attaining joint probabilities from Proposition 1, so the observed distribution and Assumptions 1-3 are respected. Assumption 4 holds because the chosen share of defiers is smaller than the implied share of compliers.

\subsection*{Proof of Proposition 5}

\subsubsection*{Validity}

Note that from Theorem 2(i) from De Chaisemartin (2017), we have that

\vspace{-10mm}

\begin{align*}
    &LATE_{co|x}=\frac{ITT_{x}}{\pi_{co|x}}=\frac{\mathbb{P}\left(Y(D(1))=1|X=x\right)-\mathbb{P}\left(Y(D(0))=1|X=x\right)+LATE_{def|x}\pi_{def|x}}{\mathbb{P}\left(D(1)=1|X=x\right)-\mathbb{P}\left(D(0)=1|X=x\right)+\pi_{def|x}}
\end{align*}

It follows that

Note that

\vspace{-10mm}

\begin{align*}
    &\pi_{co|x}\leq\min\left\{\mathbb{P}^{u}_{c}\left(D(1)=1|X=x\right)-\mathbb{P}^{l}_{c}\left(D(0)=1|X=x\right)+\pi_{def|x},1\right\},\\&\pi_{co|x}\geq\max\left\{\mathbb{P}^{l}_{c}\left(D(1)=1|X=x\right)-\mathbb{P}^{u}_{c}\left(D(0)=1|X=x\right)+\pi_{def|x},0\right\}
\end{align*}

Combining this inequalities with the bounds for $ITT_{x}$ from Proposition 4 yields the result.
\subsubsection*{Sharpness}

Note that if $c=0$, the joint potential probabilities $\mathbb{P}\left(Y(D(z))=y,D(z)=d|X=x\right)$ are point identified by $p_{y,d|z,x}$ for all $\left(y,d,z\right)\in\left\{0,1\right\}^{3}$. Therefore, it implies that 

\vspace{-10mm}

\begin{align*}
    &\mathbb{P}\left(Y=1|Z=1,X=x\right)-\mathbb{P}\left(Y=1|Z=0,X=x\right)-\pi_{def|x}\leq ITT_{x}\\&\leq \mathbb{P}\left(Y=1|Z=1,X=x\right)-\mathbb{P}\left(Y=1|Z=0,X=x\right)+\pi_{def|x},\\&\pi_{co|x}=\mathbb{P}\left(D=1|Z=1,X=x\right)-\mathbb{P}\left(D=1|Z=0,X=x\right)+\pi_{def|x}
\end{align*}

Define the compliance group shares by

\vspace{-10mm}

\begin{align*}
    &\pi_{at|x}=\mathbb{P}\left(D=1|Z=0,X=x\right)-\pi_{def|x},\\&\pi_{nt|x}=\mathbb{P}\left(D=0|Z=1,X=x\right)-\pi_{def|x},\\&\pi_{co|x}=\mathbb{P}\left(D=1|Z=1,X=x\right)-\mathbb{P}\left(D=1|Z=0,X=x\right)+\pi_{def|x}
\end{align*}

By the feasibility restriction on the share of the defiers, these shares are nonnegative and sum to one.

Draw the groups $g\in\left\{at,co,nt,def\right\}$ with probability $\pi_{g|x}$ independent of $Z$ conditional on $X=x$. Set potential treatments as

\vspace{-10mm}

\begin{equation*}
    \left ( D(1),D(0) \right )=\left\{\begin{matrix}
\left ( 1,1 \right ), \text{for}\ $at$ 
\\\left ( 1,0 \right ), \text{for}\ $co$ 
 \\\left ( 0,0 \right ), \text{for}\ $nt$ 
 \\\left ( 0,1 \right ), \text{for}\ $def$ 
\end{matrix}\right.
\end{equation*}

For defiers, set $\left(Y(1),Y(0)\right)=\left(1,0\right)$ almost surely. For the remaining potential outcomes for other compliance groups, set

\vspace{-10mm}

\begin{align*}
    &\mathbb{P}\left(Y(1)=1|at,X=x\right)=\frac{\mathbb{P}\left(Y=1,D=1|Z=0,X=x\right)-\pi_{def|x}}{\mathbb{P}\left(D=1|Z=0,X=x\right)-\pi_{def|x}},\\&\mathbb{P}\left(Y(0)=1|nt,X=x\right)=\frac{\mathbb{P}\left(Y=1,D=0|Z=1,X=x\right)}{\mathbb{P}\left(D=0|Z=1,X=x\right)-\pi_{def|x}},\\&\mathbb{P}\left(Y(1)=1|co,X=x\right)=\frac{\mathbb{P}\left(Y=1,D=1|Z=1,X=x\right)-\mathbb{P}\left(Y=1,D=1|Z=0,X=x\right)+\pi_{def|x}}{\mathbb{P}\left(D=1|Z=1,X=x\right)-\mathbb{P}\left(D=1|Z=0,X=x\right)+\pi_{def|x}},\\&\mathbb{P}\left(Y(0)=1|co,X=x\right)=\frac{\mathbb{P}\left(Y=1,D=0|Z=0,X=x\right)-\mathbb{P}\left(Y=1,D=0|Z=1,X=x\right)}{\mathbb{P}\left(D=1|Z=1,X=x\right)-\mathbb{P}\left(D=1|Z=0,X=x\right)+\pi_{def|x}}
\end{align*}

It is easy to see that the upper bound is attained and that the assumptions are satisfied.

To obtain the lower bound, keep the same choice for the compliance group shares and potential treatments. Set $\left(Y(1),Y(0)\right)=\left(0,1\right)$ for defier almost surely.

For the remaining potential outcomes for other compliance groups, set

\vspace{-10mm}

\begin{align*}
    &\mathbb{P}\left(Y(1)=1|at,X=x\right)=\frac{\mathbb{P}\left(Y=1,D=1|Z=0,X=x\right)}{\mathbb{P}\left(D=1|Z=0,X=x\right)-\pi_{def|x}},\\&\mathbb{P}\left(Y(0)=1|nt,X=x\right)=\frac{\mathbb{P}\left(Y=1,D=0|Z=1,X=x\right)-\pi_{def|x}}{\mathbb{P}\left(D=0|Z=1,X=x\right)-\pi_{def|x}},\\&\mathbb{P}\left(Y(1)=1|co,X=x\right)=\frac{\mathbb{P}\left(Y=1,D=1|Z=1,X=x\right)-\mathbb{P}\left(Y=1,D=1|Z=0,X=x\right)}{\mathbb{P}\left(D=1|Z=1,X=x\right)-\mathbb{P}\left(D=1|Z=0,X=x\right)+\pi_{def|x}},\\&\mathbb{P}\left(Y(0)=1|co,X=x\right)=\frac{\mathbb{P}\left(Y=1,D=0|Z=0,X=x\right)-\mathbb{P}\left(Y=1,D=0|Z=1,X=x\right)+\pi_{def|x}}{\mathbb{P}\left(D=1|Z=1,X=x\right)-\mathbb{P}\left(D=1|Z=0,X=x\right)+\pi_{def|x}}
\end{align*}

It is also to see that the lower bound is attained, and that the DGP is consistent with the data and assumptions. Hence, sharpness follows.

\subsection*{Proof of Proposition 6}

\subsubsection*{Validity}

Following Huber (2015), decompose $\mathbb{P}\left(Y(1)=1|X=x\right)$ as

\vspace{-10mm}

\begin{align*}
    &\mathbb{P}\left ( Y(1)=1|X=x \right )=\mathbb{P}\left ( Y(1)=1|X=x,at \right )\pi_{at|x}+\mathbb{P}\left ( Y(1)=1|X=x,co \right )\pi_{co|x}\\&+\mathbb{P}\left ( Y(1)=1|X=x,nt \right )\pi_{nt|x}+\mathbb{P}\left ( Y(1)=1|X=x,def \right )\pi_{def|x}\\&=\mathbb{P}(Y(D(1))=1,D(1)=1|X=x)+\mathbb{P}(Y(D(0))=1,D(0)=1|X=x)\\&+\left ( \mathbb{P}\left ( D(1)=0|X=x \right )-\pi_{def|x} \right )\mathbb{P}\left ( Y(1)=1|X=x, nt\right )\\&-\mathbb{P}\left ( Y(1)=1|X=x, at\right )\pi_{at|x}
\end{align*}

Combining this result with the worst-case upper bound for $\mathbb{P}\left(Y(1)=1|X=x\right)$ yields the proposed upper bound. Combining this result with the worst-case lower bound for $\mathbb{P}\left(Y(1)=1|X=x\right)$ yields the proposed lower bound.

Combining the \cite{huber2017jae} decomposition of $\mathbb{P}\left(Y(0)=1|X=x\right)$ with its worst-case bounds yields its identified set. Combining the bounds of potential outcomes yields the bounds for the ATE, which concludes the proof.

\subsubsection*{Sharpness}

I begin with the DGP that attains the upper bound.

Set $\pi_{at|x}=\mathbb{P}\left(D(0)=1|X=x\right)-\pi_{def|x}$, $\pi_{nt|x}=\mathbb{P}\left(D(1)=0|X=x\right)-\pi_{def|x}$ and $\pi_{co|x}=\mathbb{P}\left(D(1)=1|X=x\right)-\mathbb{P}\left(D(0)=1|X=x\right)+\pi_{def|x}$.

Set potential treatments as usual. Set the following joint probabilities:

\vspace{-10mm}

\begin{align*}
    &\mathbb{P}\left(Y(D(1))=1,D(1)=1|Z=0,X=x\right)=\frac{\mathbb{P}^{u}_{c}\left(Y(D(1))=1,D(1)=1|X=x\right)-p_{1,1|1,x}p_{1|x}}{1-p_{1|x}},\\&\mathbb{P}\left(Y(D(1))=1,D(1)=0|Z=0,X=x\right)=\frac{\mathbb{P}^{l}_{c}\left(Y(D(1))=1,D(1)=0|X=x\right)-p_{1,0|1,x}p_{1|x}}{1-p_{1|x}},\\&\mathbb{P}\left(Y(D(0))=1,D(0)=1|Z=1,X=x\right)=\frac{\mathbb{P}^{u}_{c}\left(Y(D(0))=1,D(0)=1|X=x\right)-p_{1,1|0,x}p_{0|x}}{1-p_{0|x}},\\&\mathbb{P}\left(Y(D(0))=1,D(0)=0|Z=1,X=x\right)=\frac{\mathbb{P}^{l}_{c}\left(Y(D(0))=1,D(0)=0|X=x\right)-p_{1,0|0,x}p_{0|x}}{1-p_{0|x}}
\end{align*}

Now, set potential outcomes. For the group of defiers, set $\left(Y(1),Y(0)\right)=\left(1,0\right)$ almost surely. For always-takers and never-takers, set $\mathbb{P}\left(Y(0)=1|at,X=x\right)=0$, and $\mathbb{P}\left(Y(1)=1|nt,X=x\right)=1$. For the remaining potential outcomes, set

\vspace{-10mm}

{\small \begin{align*}
    &\mathbb{P}\left(Y(1)=1|at,X=x\right)=\frac{\mathbb{P}^{u}_{c}\left(Y(D(0))=1,D(0)=1|X=x\right)-\pi_{def|x}}{\pi_{at|x}},\\&\mathbb{P}\left(Y(1)=1|co,X=x\right)=\frac{\mathbb{P}^{u}_{c}\left(Y(D(1))=1,D(1)=1|X=x\right)-\mathbb{P}^{u}_{c}\left(Y(D(0))=1,D(0)=1|X=x\right)+\pi_{def|x}}{\pi_{co|x}},\\&\mathbb{P}\left(Y(0)=1|nt,X=x\right)=\frac{\mathbb{P}^{l}_{c}\left(Y(D(1))=1,D(1)=0|X=x\right)}{\pi_{nt|x}},\\&\mathbb{P}\left(Y(0)=1|co,X=x\right)=\frac{\mathbb{P}^{l}_{c}\left(Y(D(0))=1,D(0)=0|X=x\right)-\mathbb{P}^{l}_{c}\left(Y(D(1))=1,D(1)=0|X=x\right)}{\pi_{co|x}}
\end{align*}}

Substituting these quantities in the expression of $ATE_{x}$ yields $ATE_{x}=ATE^{u}_{x}(c,\pi_{def|x})$. In order to construct the DGP that attains the lower bounds, set potential treatments and the compliance group shares the same way. Set the following joint probabilities:

\vspace{-10mm}

\begin{align*}
    &\mathbb{P}\left(Y(D(1))=1,D(1)=1|Z=0,X=x\right)=\frac{\mathbb{P}^{l}_{c}\left(Y(D(1))=1,D(1)=1|X=x\right)-p_{1,1|1,x}p_{1|x}}{1-p_{1|x}},\\&\mathbb{P}\left(Y(D(1))=1,D(1)=0|Z=0,X=x\right)=\frac{\mathbb{P}^{u}_{c}\left(Y(D(1))=1,D(1)=0|X=x\right)-p_{1,0|1,x}p_{1|x}}{1-p_{1|x}},\\&\mathbb{P}\left(Y(D(0))=1,D(0)=1|Z=1,X=x\right)=\frac{\mathbb{P}^{l}_{c}\left(Y(D(0))=1,D(0)=1|X=x\right)-p_{1,1|0,x}p_{0|x}}{1-p_{0|x}},\\&\mathbb{P}\left(Y(D(0))=1,D(0)=0|Z=1,X=x\right)=\frac{\mathbb{P}^{u}_{c}\left(Y(D(0))=1,D(0)=0|X=x\right)-p_{1,0|0,x}p_{0|x}}{1-p_{0|x}}
\end{align*}

Now, set potential outcomes. For the group of defiers, set $\left(Y(1),Y(0)\right)=\left(0,1\right)$ almost surely. For always-takers and never-takers, set $\mathbb{P}\left(Y(0)=1|at,X=x\right)=1$, and $\mathbb{P}\left(Y(1)=1|nt,X=x\right)=0$. For the remaining potential outcomes, set

\vspace{-10mm}

{\small \begin{align*}
    &\mathbb{P}\left(Y(1)=1|at,X=x\right)=\frac{\mathbb{P}^{l}_{c}\left(Y(D(0))=1,D(0)=1|X=x\right)}{\pi_{at|x}},\\&\mathbb{P}\left(Y(1)=1|co,X=x\right)=\frac{\mathbb{P}^{l}_{c}\left(Y(D(1))=1,D(1)=1|X=x\right)-\mathbb{P}^{l}_{c}\left(Y(D(0))=1,D(0)=1|X=x\right)}{\pi_{co|x}},\\&\mathbb{P}\left(Y(0)=1|nt,X=x\right)=\frac{\mathbb{P}^{u}_{c}\left(Y(D(1))=1,D(1)=0|X=x\right)-\pi_{def|x}}{\pi_{nt|x}},\\&\mathbb{P}\left(Y(0)=1|co,X=x\right)=\frac{\mathbb{P}^{u}_{c}\left(Y(D(0))=1,D(0)=0|X=x\right)-\mathbb{P}^{u}_{c}\left(Y(D(1))=1,D(1)=0|X=x\right)+\pi_{def|x}}{\pi_{co|x}}
\end{align*}}

Substituting these quantities in the expression of $ATE_{x}$ yields $ATE_{x}=ATE^{l}_{x}(c,\pi_{def|x})$. Therefore, sharpness follows.

\subsection*{Proof of Corollary 1}

I show that the bounds for the ATE coincide with the bounds from  \cite{Balke01091997} and \cite{Chen1015-7483R1} when the sensitivity parameters are set to 0. I begin with the upper bound. Note that if $c=0$, then

\vspace{-10mm}

\begin{align*}
    &ATE^{u}_{x}=\mathbb{P}\left(Y=1,D=1|Z=1,X=x\right)+\mathbb{P}\left(Y=1,D=1|Z=0,X=x\right)\\&-\mathbb{P}\left(Y=1,D=0|Z=0,X=x\right)-\mathbb{P}\left(Y=1,D=0|Z=1,X=x\right)\\&+\min\left\{\mathbb{P}\left(Y=1,D=0|Z=0,X=x\right),\mathbb{P}\left(Y=1,D=0|Z=0,X=x\right)\right\}\\&+\left(\mathbb{P}\left(D=0|Z=1,X=x\right)-\pi_{d|x}\right)
\end{align*}

If we further assume that $\pi_{d|x}=0$, then it follows from \cite{toru} that

\vspace{-10mm}

\begin{equation*}
    \min\left\{\mathbb{P}\left(Y=1,D=0|Z=0,X=x\right),\mathbb{P}\left(Y=1,D=1|Z=0,X=x\right)\right\}=\mathbb{P}\left(Y=1,D=0|Z=1,X=x\right)
\end{equation*}

and because the upper bound of $\mathbb{P}\left(Y(1)=1|X=x\right)$ is achieved by setting $\mathbb{P}\left(Y(1)=1|at,X=x\right)=0$, it follows that 

\vspace{-10mm}

\begin{equation*}
    \mathbb{P}\left(Y=1,D=1|Z=0,X=x\right)=0
\end{equation*}

And thus, the upper bound becomes

\vspace{-10mm}

\begin{equation*}
    ATE^{u}_{x}=\mathbb{P}\left(Y=1,D=1|Z=1,X=x\right)-\mathbb{P}\left(Y=1,D=0|Z=0,X=x\right)+\mathbb{P}\left(D=0|Z=1,X=x\right)
\end{equation*}

In the case of the lower bound we apply the same reasoning, but this time we use the fact that

\vspace{-10mm}

\begin{align*}
    & \min\left\{\mathbb{P}\left(Y=1,D=1|Z=1,X=x\right),\mathbb{P}\left(Y=1,D=1|Z=0,X=x\right)\right\}=\mathbb{P}\left(Y=1,D=1|Z=0,X=x\right),\\&\mathbb{P}\left(Y=1,D=0|Z=1,X=x\right)=0
\end{align*}

where the first result follows from \cite{toru} and the second from the fact that the lower bound of $\mathbb{P}\left(Y(0)=1|X=x\right)$ is achieved by setting $\mathbb{P}\left(Y(0)=1|nt,X=x\right)=0$. And therefore, 

\vspace{-10mm}

\begin{equation*}
    ATE^{u}_{x}=\mathbb{P}\left(Y=1,D=1|Z=1,X=x\right)-\mathbb{P}\left(Y=1,D=0|Z=0,X=x\right)-\mathbb{P}\left(D=1|Z=0,X=x\right)
\end{equation*}

\noindent which concludes the proof.

\subsection*{Proof of Theorem 1}

Recall that $\widehat{bf}(c,\mu)=\widehat{\mathbb{P}}^{l}\left(Y(D(1))=1\right)-\widehat{\mathbb{P}}^{u}\left(Y(D(0))=1\right)-\mu$. By lemma 3, we know that $\widehat{\mathbb{P}}^{l}\left(Y(D(1))=1\right)-\widehat{\mathbb{P}}^{u}\left(Y(D(0))=1\right)$ converges uniformly over $c\in\mathcal{C}$. Lemma 3 further implies that 

\vspace{-10mm}

\begin{align*}
    &\sqrt{N}\left(\widehat{\mathbb{P}}^{l}\left(Y(D(1))=1\right)-\widehat{\mathbb{P}}^{u}\left(Y(D(0))=1\right)-\left(\mathbb{P}^{l}\left(Y(D(1))=1\right)-\mathbb{P}^{u}\left(Y(D(0))=1\right)\right)\right)\\&\xrightarrow[d]{}\sum_{k=1}^{K}q_{x_{k}}\left(\textbf{Z}_{Y}^{(2)}(1,1,x_{k},c)-\textbf{Z}_{Y}^{(1)}(1,0,x_{k},c)\right)+\sum_{k=1}^{K}\left(\mathbb{P}^{l}\left(Y(D(1))=1\right)-\mathbb{P}^{u}\left(Y(D(0))=1\right)\right)\textbf{Z}_{1}^{(3)}(0,0,0,x_{k})\\&\equiv\tilde{\textbf{Z}}(c)
\end{align*}

\noindent where $\tilde{\textbf{Z}}(c)$ is  a random element of $l^{\infty}\left(\mathcal{C}\right)$. And thus, $\sqrt{N}\left(\widehat{bf}(c,\mu)-bf(c,\mu)\right)$ converges to a random element in $l^{\infty}\left(\mathcal{C}\times\mathcal{M}\right)$. Therefore, by the delta method for
Hadamard directionally differentiable functions, $\sqrt{N}\left(\widehat{BF}(c,\mu)-BF(c,\mu)\right)$ converges in process, which concludes the proof. 

\subsection*{Proof of Theorem 2}

The proof is analogous to Theorem 1.

\section*{Appendix B}

\begin{lemma}
Suppose Assumptions 5 and 6 hold. Then, 

\vspace{-5mm}

\begin{equation*}
    \sqrt{N}\begin{pmatrix}
\widehat{p}_{y,d|z,x}-p_{y,d|z,x}
 \\\widehat{p}_{z|x}-p_{z|x}
 \\\widehat{q}_{x}-q_{x}
\end{pmatrix} \xrightarrow[d]{}\textbf{Z}_{1}(y,d,z,x)
\end{equation*}

\noindent a mean-zero Gaussian process in $l^{\infty}\left(\left\{0,1\right\}^{3}\times\mathcal{S}(X),\mathbb{R}^{3}\right)$. with covariance kernel $\Sigma_{1}$ defined in the proof.
    
\end{lemma}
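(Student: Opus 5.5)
The plan is to reduce everything to a finite-dimensional central limit theorem followed by a standard delta method. Under Assumption 6 the index set $\{0,1\}^{3}\times\mathcal{S}(X)$ is finite, so $l^{\infty}$ of it is just a Euclidean space. Weak convergence there is ordinary convergence in distribution of a finite vector, and tightness is automatic.

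First, I will define the vector of cell frequencies. For each $(y,d,z,x)$, let
\[
\widehat{r}_{y,d,z,x}=\frac{1}{N}\sum_{i=1}^{N}\mathbf{1}\left\{Y_{i}=y,D_{i}=d,Z_{i}=z,X_{i}=x\right\},
\]
with population counterpart $r_{y,d,z,x}=\mathbb{P}(Y=y,D=d,Z=z,X=x)$. Under Assumption 5, this is a sample mean of i.i.d. bounded multinomial indicator vectors. The multivariate Lindeberg--L\'evy CLT then gives $\sqrt{N}(\widehat{r}-r)\xrightarrow[d]{}N(0,\Omega)$, where $\Omega=\mathrm{diag}(r)-rr'$ is the multinomial covariance.

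Second, I will express each estimator as a smooth map of $\widehat{r}$:
\[
\widehat{q}_{x}=\sum_{y,d,z}\widehat{r}_{y,d,z,x},\qquad
\widehat{p}_{z|x}=\frac{\sum_{y,d}\widehat{r}_{y,d,z,x}}{\widehat{q}_{x}},\qquad
\widehat{p}_{y,d|z,x}=\frac{\widehat{r}_{y,d,z,x}}{\sum_{y',d'}\widehat{r}_{y',d',z,x}}.
\]
Call this map $g$. It is a ratio of linear functions, so it is continuously (indeed Fr\'echet, hence Hadamard) differentiable at $r$ whenever the denominators are positive. This holds because $q_{x}>0$ for every support point $x$, and $\mathbb{P}(Z=z,X=x)=p_{z|x}q_{x}>0$ by Assumption 1.4.

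Third, the ordinary delta method yields
\[
\sqrt{N}\left(g(\widehat{r})-g(r)\right)\xrightarrow[d]{}\nabla g(r)\,N(0,\Omega),
\]
a mean-zero Gaussian vector with covariance $\Sigma_{1}=\nabla g(r)\,\Omega\,\nabla g(r)'$. I will write out the Jacobian entries explicitly; they are the usual ratio derivatives. Reindexing the components by $(y,d,z,x)$ then gives $\textbf{Z}_{1}(y,d,z,x)$ as a Gaussian element of $l^{\infty}(\{0,1\}^{3}\times\mathcal{S}(X),\mathbb{R}^{3})$, which proves the lemma.

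I expect no real obstacle here. The only care needed is bookkeeping, namely stating $\Sigma_{1}$ explicitly. It may help to note the familiar structure: the blocks for $\widehat{p}_{y,d|z,x}$, $\widehat{p}_{z|x}$ and $\widehat{q}_{x}$ are asymptotically mutually independent, and the within-block covariances are multinomial scaled by $1/(p_{z|x}q_{x})$, $1/q_{x}$ and $1$ respectively. I would verify this from the Jacobian rather than assert it.
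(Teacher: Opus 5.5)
Your proposal is correct and is essentially the paper's argument: the paper linearizes each ratio estimator by a Taylor expansion into a sample average of influence functions plus $o_{p}(N^{-1/2})$ and then invokes the CLT, which is your multinomial CLT plus delta method carried out in the opposite order. Your claimed structure of $\Sigma_{1}$ (mutually independent blocks, with multinomial within-block covariances scaled by $1/(p_{z|x}q_{x})$, $1/q_{x}$ and $1$) is right, and it is more accurate than the paper's display. That display lists only the variances $p_{y,d|z,x}(1-p_{y,d|z,x})/(p_{z|x}q_{x})$ for the first block and sets the other entries to zero, and it writes $q_{x}q_{\tilde{x}}$ where $q_{x}\mathbf{1}\{x=\tilde{x}\}-q_{x}q_{\tilde{x}}$ is needed.
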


\textbf{Proof:}

By a second-order Taylor Expansion, we obtain

\vspace{-10mm}

\begin{align*}
    &\widehat{p}_{y,d|z,x}-p_{y,d|z,x}=\frac{\frac{1}{N}\sum_{i=1}^{N}\mathbf{1}\left\{Y_{i}=y,D_{i}=d\right\}\mathbf{1}\left\{Z_{i}=z,X_{i}=x\right\}}{\frac{1}{N}\sum_{i=1}^{N}\mathbf{1}\left\{Z_{i}=z,X_{i}=x\right\}}-\frac{\mathbb{P}\left(Y=y,D=d,Z=z,X=x\right)}{\mathbb{P}\left(Z=z,X=x\right)}\\&=\frac{\frac{1}{N}\sum_{i=1}^{N}\mathbf{1}\left\{Y_{i}=y,D_{i}=d\right\}\mathbf{1}\left\{Z_{i}=z,X_{i}=x\right\}-\mathbb{P}\left(Y=y,D=d,Z=z,X=x\right)}{\mathbb{P}\left(Z=z,X=x\right)}\\&-\frac{\mathbb{P}\left(Y=y,D=d|Z=z,X=x\right)}{\mathbb{P}\left(Z=z,X=x\right)}\left(\frac{1}{N}\sum_{i=1}^{N}\mathbf{1}\left\{Z_{i}=z,X_{i}=x\right\}-\mathbb{P}\left(Z_{i}=z,X_{i}=x\right)\right)+O_{p}\left(N^{-1}\right)\\&=\frac{1}{N}\sum_{i=1}^{N}\frac{\mathbf{1}\left\{Z_{i}=z,X_{i}=x\right\}\left(\mathbf{1}\left\{Y_{i}=y,D_{i}=d\right\}-p_{y,d|z,x}\right)}{\mathbb{P}\left(Z=z,X=x\right)}+o_{p}\left(N^{-1/2}\right)
\end{align*}

\noindent and hence, $\sqrt{N}\left(\widehat{p}_{y,d|z,x}-p_{y,d|z,x}\right)$ converges in distribution to a mean-zero Gaussian process with continuous paths. Similarly, one obtains the following linear representations:

\vspace{-10mm}

\begin{align*}
   &\widehat{p}_{z|x}-p_{z|x}=\frac{1}{N}\sum_{i=1}^{N}\frac{\mathbf{1}\left\{X_{i}=x\right\}\left(\mathbf{1}\left\{Z_{i}=z\right\}-p_{z|x}\right)}{q_{x}}+o_{p}\left(N^{-1/2}\right),\\&\widehat{q}_{x}-q_{x}=\frac{1}{N}\sum_{i=1}^{N}\left(\mathbf{1}\left\{X_{i}=x\right\}-q_{x}\right)
\end{align*}

The covariance kernel $\Sigma_{1}$ has diagonal elements respectively equal to

\vspace{-10mm}

\begin{align*}
    &\frac{p_{y,d|z,x}(1-p_{y,d|z,x})}{p_{z|x}q_{x}},\\&\mathbb{E}\left [ \frac{\mathbf{1}\left\{X_{i}=x \right\}\mathbf{1}\left\{ X_{i}=\tilde{x}\right\}\left ( \mathbf{1}\left\{ Z_{i}=z\right\}-p_{z|x} \right )\left ( \mathbf{1}\left\{ Z_{i}=\tilde{z}\right\}-p_{\tilde{z}|\tilde{x}} \right )}{q_{x}q_{\tilde{x}}} \right ],\\&q_{x}q_{\tilde{x}}
\end{align*}

and all remaining elements equal to zero, which completes the proof.

\begin{lemma}
    Suppose Assumptions 1-7 hold. Then,
\begin{equation*}
     \sqrt{N}\begin{pmatrix}
\widehat{\mathbb{P}}^{u}\left(Y(D(z))=y,D(z)=d|X=x\right)-\mathbb{P}^{u}\left(Y(D(z))=y,D(z)=d|X=x\right) 
\\\widehat{\mathbb{P}}^{l}\left(Y(D(z))=y,D(z)=d|X=x\right)-\mathbb{P}^{l}\left(Y(D(z))=y,D(z)=d|X=x\right)
\end{pmatrix}\xrightarrow[d]{}\textbf{Z}_{Y,D}(y,d,z,x,c)
 \end{equation*}

\noindent  a tight element of $l^{\infty}\left ( \left\{ 0,1\right\}^{3}\times\mathcal{S}(X),\mathbb{R}^{2} \right )$.
\end{lemma}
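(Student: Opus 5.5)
The plan is to write each bound as a fixed map of the finite-dimensional parameter $\theta_{0}=(p_{y,d|z,x},p_{z|x},q_{x})$ and apply the functional delta method for Hadamard directionally differentiable maps \citep{fangsantos}. Lemma 1 supplies the input: $\sqrt{N}(\widehat{\theta}-\theta_{0})\xrightarrow[d]{}\textbf{Z}_{1}$, a tight Gaussian element. Since $\{0,1\}^{3}\times\mathcal{S}(X)$ is finite (Assumption 6) and $c$ ranges over a finite grid $\mathcal{C}$, every $l^{\infty}$ space involved is finite-dimensional. Tightness of the limit is therefore automatic, and directional differentiability can be checked coordinate by coordinate.

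The first step is to simplify the bounds using Assumption 7. Because $c<\min\{p_{1|x},p_{0|x}\}$, the indicator $\mathbf{1}(p_{z|x}>c)$ equals one and $\mathbf{1}(p_{z|x}\le c)$ equals zero at $\theta_{0}$. By continuity this also holds for $\widehat{\theta}$ with probability approaching one. On this event the bounds of Proposition 1 become
\begin{align*}
\mathbb{P}^{u}_{c}&=\min\left\{\frac{p_{y,d|z,x}p_{z|x}}{p_{z|x}-c},\ \frac{p_{y,d|z,x}p_{z|x}+c}{p_{z|x}+c},\ p_{y,d|z,x}p_{z|x}+1-p_{z|x}\right\},\\
\mathbb{P}^{l}_{c}&=\max\left\{\frac{p_{y,d|z,x}p_{z|x}}{p_{z|x}+c},\ \frac{p_{y,d|z,x}p_{z|x}-c}{p_{z|x}-c},\ p_{y,d|z,x}p_{z|x}\right\}.
\end{align*}
Restricting to this event changes nothing asymptotically.

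Each argument of the min or max is a rational function of $(p_{y,d|z,x},p_{z|x})$. Its denominators are bounded away from zero, since $p_{z|x}-c>0$ on the grid, so each argument is continuously, hence Hadamard, differentiable. Call these derivatives $\nabla g_{j}$.

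The second step handles the outer min and max. The map $(a_{1},a_{2},a_{3})\mapsto\min_{j}a_{j}$ is Hadamard directionally differentiable, with derivative $h\mapsto\min_{j\in J(\theta_{0})}h_{j}$, where $J(\theta_{0})$ is the set of arguments attaining the minimum. The max is handled symmetrically. The chain rule for Hadamard directionally differentiable maps (Proposition 3.6 in \citealp{fangsantos}) then gives that $\theta\mapsto(\mathbb{P}^{u}_{c},\mathbb{P}^{l}_{c})$, stacked over $(y,d,z,x,c)$, is Hadamard directionally differentiable at $\theta_{0}$. The delta method then yields
\[
\textbf{Z}_{Y,D}(y,d,z,x,c)=\Big(\min_{j\in J^{u}}\nabla g^{u}_{j}\,\textbf{Z}_{1},\ \max_{j\in J^{l}}\nabla g^{l}_{j}\,\textbf{Z}_{1}\Big).
\]
This limit is generally non-Gaussian when ties occur. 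It is tight as a continuous image of a tight element in a finite-dimensional space.

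I expect the main obstacle to be the indicator terms in Proposition 1. They make the map discontinuous at $c=p_{z|x}$, and the whole argument depends on Assumption 7 and the grid $\mathcal{C}$ keeping $c$ strictly away from $p_{z|x}$ for every $x$. I would isolate this explicitly: first show that $\mathbb{P}(\widehat{p}_{z|x}>\max\mathcal{C}\ \forall x)\to1$, using the consistency implied by Lemma 1; then argue that on this event the estimator equals the smooth min/max map. A second point to check is that the lower-bound estimator in the text is written with $\min$. I would treat this as a typo for $\max$, since otherwise the estimator would not converge to $\mathbb{P}^{l}_{c}$ as defined in Proposition 1.
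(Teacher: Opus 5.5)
Your proposal is correct and follows essentially the same route as the paper. Both write each bound as a min/max of smooth functions of $\theta_{0}$, with Assumption 7 removing the indicators, differentiate each argument, combine them through the chain rule for the directionally differentiable min/max, and apply the Fang--Santos delta method to the limit from Lemma 1. Your active-set form of the directional derivative (minimum or maximum over the binding arguments only) is cleaner than the paper's indicator expansion, whose inequality directions are garbled. You are also right that the $\min$ in the displayed estimator of $\mathbb{P}^{l}_{c}$ should be a $\max$.
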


\textbf{Proof:}

Let $\theta_{0}=\left ( p_{y,d|z,x},p_{d|z,x},p_{z|x},q_{x} \right )$ and $\widehat{\theta}=\left ( \widehat{p}_{y,d|z,x},\widehat{p}_{d|z,x},\widehat{p}_{z|x},\widehat{q}_{x} \right )$. For fixed $y$, $d$ and fixed $c$, define the mapping

\vspace{-10mm}

\begin{equation*}
    \phi:l^{\infty}\left ( \left\{ 0,1\right\}^{3}\times \mathcal{S}(X) \right )\times l^{\infty}\left ( \left\{ 0,1\right\}\times \mathcal{S}(X) \right )\times l^{\infty}\left ( \mathcal{S}(X) \right )\rightarrow l^{\infty}\left ( \left\{ 0,1\right\}^{2},\mathcal{S}(X),\mathbb{R}^{2} \right )
\end{equation*}

by

\vspace{-10mm}

\begin{equation*}
    \left [ \phi(\theta) \right ](z,x)=\begin{pmatrix}
\min\left\{ \frac{\theta^{(1)}(y,d,z,x)\theta^{(3)}(z,x)}{\theta^{(3)}(z,x)-c},\frac{\theta^{(1)}(y,d,z,x)\theta^{(3)}(z,x)+c}{\theta^{(3)}(z,x)+c},\theta^{(1)}(y,d,z,x)\theta^{(3)}(z,x)+(1-\theta^{(3)}(z,x))\right\} \\
\max\left\{ \frac{\theta^{(1)}(y,d,z,x)\theta^{(3)}(z,x)}{\theta^{(3)}(z,x)+c},\frac{\theta^{(1)}(y,d,z,x)\theta^{(3)}(z,x)-c}{\theta^{(3)}(z,x)-c},\theta^{(1)}(y,d,z,x)\theta^{(3)}(z,x)\right\}
\end{pmatrix}
\end{equation*}

where $\theta^{(j)}$ is the j-th component of $\theta$. Note that

\vspace{-10mm}

\begin{equation*}
    \left [ \phi(\theta_{0}) \right ](z,x)=\begin{pmatrix}
\mathbb{P}^{u}\left(Y(D(z))=y,D(z)=d|X=x\right) \\
\mathbb{P}^{l}\left(Y(D(z))=y,D(z)=d|X=x\right)
\end{pmatrix}
\end{equation*}

The mapping $\phi$ is comprised with max and min operators, along with six other functions. We begin by computing the Hadamard derivative of these functions with respect to $\theta$ using \cite{fangsantos} and the Chain rule for Hadamard differentiable functions to obtain the derivative of $\phi$.

Let $h\in\mathbb{R}^{2}$. First, consider $\left [ \delta_{1}(\theta) \right ](z,x)=\frac{\theta^{(1)}(y,d,z,x)\theta^{(3)}(z,x)}{\theta^{(3)}(z,x)+c}$, which has Hadamard derivative equal to

\vspace{-10mm}

\begin{equation*}
    \left [ \delta_{1,\theta}^{'}(h) \right ](z,x)=\frac{\theta^{(1)}(y,d,z,x)h^{(3)}(z,x)+h^{(1)}(y,d,z,x)\theta^{(3)}(z,x)}{\theta^{(3)}(z,x)+c}-\frac{\theta^{(1)}(y,d,z,x)\theta^{(3)}(z,x)h^{(3)}(z,x)}{\left ( \theta^{(3)}(z,x)+c \right )^{2}}
\end{equation*}

Next, $\left [ \delta_{2}(\theta) \right ](z,x)=\frac{\theta^{(1)}(y,d,z,x)\theta^{(3)}(z,x)-c}{\theta^{(3)}(z,x)-c}$ has Hadamard derivative equal to

\vspace{-10mm}

\begin{equation*}
    \left [ \delta_{2,\theta}^{'}(h) \right ](z,x)=\frac{\theta^{(1)}(y,d,z,x)h^{(3)}(z,x)+h^{(1)}(y,d,z,x)\theta^{(3)}(z,x)}{\theta^{(3)}(z,x)-c}-\frac{\left ( \theta^{(1)}(y,d,z,x)\theta^{(3)}(z,x)-c_{} \right )h^{(3)}(z,x)}{\left ( \theta^{(3)}(z,x)-c_{} \right )^{2}}
\end{equation*}

Next, $\left [ \delta_{3}(\theta) \right ](z,x)=\theta^{(1)}(y,d,z,x)\theta^{(3)}(z,x)$ has Hadamard derivative equal to

\vspace{-10mm}

\begin{equation*}
    \left [ \delta_{3,\theta}^{'}(h) \right ](z,x)=h^{(1)}(y,d,z,x)\theta^{(3)}(z,x)+\theta^{(1)}(y,d,z,x)h^{(3)}(z,x)
\end{equation*}

Now, we turn to the functionals inside the $\min$ operator. First, we have $\left [ \delta_{4}(\theta) \right ](z,x)=\frac{\theta^{(1)}(y,d,z,x)\theta^{(3)}(z,x)}{\theta^{(3)}(z,x)-c_{}}$, which has Hadamard derivative equal to

\vspace{-10mm}

\begin{equation*}
    \left [ \delta_{4,\theta}^{'}(h) \right ](z,x)=\frac{\theta^{(1)}(y,d,z,x)h^{(3)}(z,x)+h^{(1)}(y,d,z,x)\theta^{(3)}(z,x)}{\theta^{(3)}(z,x)-c_{}}-\frac{\theta^{(1)}(y,d,z,x)\theta^{(3)}(z,x)h^{(3)}(z,x)}{\left ( \theta^{(3)}(z,x)-c_{} \right )^{2}}
\end{equation*}

Next, $\left [ \delta_{5}(\theta) \right ](z,x)=\frac{\theta^{(1)}(y,d,z,x)\theta^{(3)}(z,x)+c_{}}{\theta^{(3)}(z,x)+c_{}}$ has Hadamard derivative equal to

\vspace{-10mm}

\begin{equation*}
    \left [ \delta_{5,\theta}^{'}(h) \right ](z,x)=\frac{\theta^{(1)}(y,d,z,x)h^{(3)}(z,x)+h^{(1)}(y,d,z,x)\theta^{(3)}(z,x)}{\theta^{(3)}(z,x)+c_{}}-\frac{\left ( \theta^{(1)}(y,d,z,x)\theta^{(3)}(z,x)+c_{} \right )h^{(3)}(z,x)}{\left ( \theta^{(3)}(z,x)+c_{} \right )^{2}}
\end{equation*}

Finally, $\left [ \delta_{6,\theta}^{'}(h) \right ](z,x)=h^{(1)}(y,d,z,x)\theta^{(3)}(z,x)+h^{(3)}(z,x)(\theta^{(1)}(y,d,z,x)-1)$.

Using this notation, we write the functional $\phi_{}$ as

\vspace{-10mm}

\begin{equation*}
    \phi_{}(\theta)=\begin{pmatrix}
\min\left\{ \delta_{4}(\theta),\delta_{5}(\theta),\delta_{6}(\theta\right\} \\
\max\left\{ \delta_{1}(\theta),\delta_{2}(\theta),\delta_{3}(\theta)\right\}
\end{pmatrix}
\end{equation*}

Using the chain rule \citep{mastenpoirier2020}, the Hadamard derivative of $\phi_{}$ at $\theta_{0}$ is

\vspace{-10mm}

\begin{equation*}
    \phi_{\theta_{0}}^{'}(h)=\begin{pmatrix}
\mathbf{1}\left ( \delta_{6}(\theta_{0})>\min\left\{\delta_{4}(\theta_{0}),\delta_{5}(\theta_{0}) \right\} \right )\delta_{6,\theta_{0}}^{'}(h)\\
+\mathbf{1}\left ( \delta_{5}(\theta_{0})>\min\left\{\delta_{4}(\theta_{0}),\delta_{6}(\theta_{0}) \right\} \right )\delta_{5,\theta_{0}}^{'}(h) \\
+\mathbf{1}\left ( \delta_{4}(\theta_{0})>\max\left\{\delta_{5}(\theta_{0}),\delta_{6}(\theta_{0}) \right\} \right )\delta_{4,\theta_{0}}^{'}(h)\\
+\mathbf{1}\left ( \delta_{6}(\theta_{0})=\delta_{5}(\theta_{0})>\delta_{4}(\theta_{0})  \right )\min\left\{ \delta_{6,\theta_{0}}^{'}(h),\delta_{5,\theta_{0}}^{'}(h)\right\}\\
+\mathbf{1}\left ( \delta_{6}(\theta_{0})=\delta_{4}(\theta_{0})>\delta_{5}(\theta_{0})  \right )\min\left\{ \delta_{6,\theta_{0}}^{'}(h),\delta_{4,\theta_{0}}^{'}(h)\right\}\\
+\mathbf{1}\left ( \delta_{4}(\theta_{0})=\delta_{5}(\theta_{0})>\delta_{6}(\theta_{0})  \right )\min\left\{ \delta_{4,\theta_{0}}^{'}(h),\delta_{5,\theta_{0}}^{'}(h)\right\}\\
+\mathbf{1}\left ( \delta_{6}(\theta_{0})=\delta_{5}(\theta_{0})=\delta_{4}(\theta_{0})  \right )\min\left\{ \delta_{6,\theta_{0}}^{'}(h),\delta_{5,\theta_{0}}^{'}(h),\delta_{4,\theta_{0}}^{'}(h)\right\}\\
 \\
\mathbf{1}\left ( \delta_{1}(\theta_{0})<\max\left\{\delta_{1}(\theta_{0}),\delta_{2}(\theta_{0}) \right\} \right )\delta_{1,\theta_{0}}^{'}(h)\\
+\mathbf{1}\left ( \delta_{2}(\theta_{0})<\max\left\{\delta_{1}(\theta_{0}),\delta_{3}(\theta_{0}) \right\} \right )\delta_{2,\theta_{0}}^{'}(h) \\
+\mathbf{1}\left ( \delta_{3}(\theta_{0})<\max\left\{\delta_{1}(\theta_{0}),\delta_{2}(\theta_{0}) \right\} \right )\delta_{3,\theta_{0}}^{'}(h)\\
+\mathbf{1}\left ( \delta_{3}(\theta_{0})=\delta_{2}(\theta_{0})<\delta_{1}(\theta_{0})  \right )\max\left\{ \delta_{3,\theta_{0}}^{'}(h),\delta_{2,\theta_{0}}^{'}(h)\right\}\\
+\mathbf{1}\left ( \delta_{3}(\theta_{0})=\delta_{1}(\theta_{0})<\delta_{2}(\theta_{0})  \right )\max\left\{ \delta_{3,\theta_{0}}^{'}(h),\delta_{1,\theta_{0}}^{'}(h)\right\}\\
+\mathbf{1}\left ( \delta_{1}(\theta_{0})=\delta_{2}(\theta_{0})<\delta_{3}(\theta_{0})  \right )\max\left\{ \delta_{1,\theta_{0}}^{'}(h),\delta_{2,\theta_{0}}^{'}(h)\right\}\\
+\mathbf{1}\left ( \delta_{3}(\theta_{0})=\delta_{2}(\theta_{0})=\delta_{1}(\theta_{0})  \right )\max\left\{ \delta_{3,\theta_{0}}^{'}(h),\delta_{2,\theta_{0}}^{'}(h),\delta_{1,\theta_{0}}^{'}(h)\right\}\\
\end{pmatrix}
\end{equation*}

By Lemma 1, $\sqrt{N}\left ( \widehat{\theta}-\theta_{0} \right )\xrightarrow[d]{}\textbf{Z}_{1}(y,d,z,x)$. Using the Delta Method for Hadamard differentiable functions, we obtain

\vspace{-10mm}

\begin{equation*}
    \left [ \sqrt{N}\left ( \phi(\widehat{\theta})-\phi(\theta_{0}) \right ) \right ](z,x)\xrightarrow[d]{}\left [ \phi_{\theta_{0}}^{'}(\textbf{Z}_{1}) \right ](z,x)\equiv \tilde{\textbf{Z}}_{Y,D}(z,x)
\end{equation*}

This result holds uniformly over any finite grid of values for $d\in\left\{0,1\right\}^{}$ and $c\in\mathcal{C}$ by considering the Hadamard directional differentiability of a vector of these mappings indexed at
different values of $y$ and $c$, which yields the process $\textbf{Z}_{Y,D}(y,d,z,x,c)$.

\begin{lemma}
    Suppose Assumptions 1-7 hold. Then,

    \vspace{-10mm}

 \begin{equation*}
     \sqrt{N}\begin{pmatrix}
\widehat{\mathbb{P}}^{u}\left(Y(D(z))=y|X=x\right)-\mathbb{P}^{u}\left(Y(D(z))=y|X=x\right) 
\\\widehat{\mathbb{P}}^{l}\left(Y(D(z))=y|X=x\right)-\mathbb{P}^{l}\left(Y(D(z))=y|X=x\right)
\end{pmatrix}\xrightarrow[d]{}\textbf{Z}_{Y}(y,z,x,c)
 \end{equation*}

\noindent  a tight element of $l^{\infty}\left ( \left\{ 0,1\right\}^{2}\times\mathcal{S}(X)\times\mathcal{C},\mathbb{R}^{2} \right )$.
\end{lemma}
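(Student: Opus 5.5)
The plan is to write the bounds of Proposition 2 as a composition of the mapping already handled in Lemma 2 with sums, and one more min/max operation. Then the functional delta method for Hadamard directionally differentiable maps \citep{fangsantos}, together with the chain rule, gives the result. Concretely, for fixed $(y,z,x,c)$ define
\begin{equation*}
\psi^{u}(\theta)=\min\left\{\left[\phi(\theta)\right]^{(1)}_{y,1}+\left[\phi(\theta)\right]^{(1)}_{y,0},\ \delta_{7}(\theta)+1-\theta^{(3)}(z,x)\right\},\quad \psi^{l}(\theta)=\max\left\{\left[\phi(\theta)\right]^{(2)}_{y,1}+\left[\phi(\theta)\right]^{(2)}_{y,0},\ \delta_{7}(\theta)\right\}.
\end{equation*}
Here $[\phi(\theta)]_{y,d}$ is the Lemma 2 mapping evaluated at $(y,d)$, and $\delta_{7}(\theta)=(\theta^{(1)}(y,1,z,x)+\theta^{(1)}(y,0,z,x))\theta^{(3)}(z,x)$. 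The last identity uses $p_{y|z,x}=p_{y,1|z,x}+p_{y,0|z,x}$. At $\theta_{0}$ these maps equal $\mathbb{P}^{u}_{c}(Y(D(z))=y|X=x)$ and $\mathbb{P}^{l}_{c}(Y(D(z))=y|X=x)$. By Assumption 7, $c<p_{z|x}$, so the indicators $\mathbf{1}(p_{z|x}>c)$ in Proposition 1 equal one in a neighborhood of $\theta_{0}$. The maps therefore contain no discontinuous pieces, and the denominators $\theta^{(3)}\pm c$ stay bounded away from zero.

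The key steps, in order, are as follows.

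First, invoke Lemma 1 for $\sqrt{N}(\widehat{\theta}-\theta_{0})\to\textbf{Z}_{1}$. The finite support of $X$ (Assumption 6) makes all index sets finite, so $l^{\infty}$ is a Euclidean space. Tightness is then automatic, and uniformity over $\{0,1\}^{2}\times\mathcal{S}(X)\times\mathcal{C}$ reduces to joint convergence of finitely many coordinates.

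Second, use Lemma 2, which already gives Hadamard directional differentiability of $\phi$ at $\theta_{0}$ with derivative $\phi'_{\theta_{0}}$. Sums of directionally differentiable maps are directionally differentiable, with derivative equal to the sum of the derivatives. The map $\delta_{7}$ is a polynomial, hence fully Hadamard differentiable, with derivative $h\mapsto (h^{(1)}_{y,1}+h^{(1)}_{y,0})\theta^{(3)}+(\theta^{(1)}_{y,1}+\theta^{(1)}_{y,0})h^{(3)}$.

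Third, the outer $\min\{a,b\}$ and $\max\{a,b\}$ on $\mathbb{R}^{2}$ are Hadamard directionally differentiable. The derivative of $\min$ is $h_{a}$ if $a<b$, $h_{b}$ if $b<a$, and $\min\{h_{a},h_{b}\}$ if $a=b$; the derivative of $\max$ is analogous. Apply the chain rule for directionally differentiable maps (Proposition 3.6 of \cite{fangsantos}, as used in \cite{mastenpoirier2020}) to get $\psi'_{\theta_{0}}$. Stack $(\psi^{u},\psi^{l})$ over all $(y,z,x)$ and over $c\in\mathcal{C}$, and apply the delta method. This yields $\sqrt{N}(\psi(\widehat{\theta})-\psi(\theta_{0}))\to\psi'_{\theta_{0}}(\textbf{Z}_{1})\equiv\textbf{Z}_{Y}(y,z,x,c)$. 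The limit is tight because it is a continuous image of a Gaussian vector in finite dimensions.

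The main obstacle is the second step. Lemma 2 is stated pointwise in $c$ with a fixed $(y,d)$, whereas here the two joint bounds for $d=1$ and $d=0$ enter the same sum. I therefore need joint directional differentiability of the stacked map $\theta\mapsto([\phi(\theta)]_{y,1},[\phi(\theta)]_{y,0})_{c\in\mathcal{C}}$, not just of each coordinate separately. To handle this, I will note that a map into a finite product is Hadamard directionally differentiable exactly when each coordinate is. Each coordinate of $\phi$ has the form $\min$ or $\max$ of finitely many smooth functions of $\theta$, and its derivative is positively homogeneous and continuous in $h$. Joint differentiability therefore follows coordinatewise, and the derivative is the vector of the coordinate derivatives. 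This lets the chain rule go through.
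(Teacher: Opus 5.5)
Your proposal is correct and follows essentially the same route as the paper: write each bound as a $\min$ or $\max$ of the sum of the two Lemma 2 joint bounds ($d=1$ and $d=0$) and the worst-case term ($p_{y|z,x}p_{z|x}+(1-p_{z|x})$ or $p_{y|z,x}p_{z|x}$), then apply the Fang--Santos delta method for directionally differentiable maps. The one difference is that you build a single map of $\widehat{\theta}$ and use the chain rule, which makes explicit the joint convergence of the two arguments of the $\min$/$\max$; the paper uses this joint convergence tacitly when it combines the separately obtained limits from Lemmas 1 and 2.
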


\textbf{Proof:}

Recall that

\vspace{-10mm}

{\small \begin{align*}
    &\widehat{\mathbb{P}}_{c}^{u}\left ( Y(D(z))=y|X=x \right )\\&=\min\left\{\widehat{\mathbb{P}}^{u}_{c}\left(Y(D(z))=y,D(z)=1|X=x\right)+\widehat{\mathbb{P}}^{u}_{c}\left(Y(D(z))=y,D(z)=0|X=x\right),\widehat{p}_{y|z,x}\widehat{p}_{z|x}+(1-\widehat{p}_{z|x})\right\}\\&\equiv\min\left\{\widehat{\psi}_{Y}^{(1)}(y,z,x,c),\widehat{\psi}_{Y}^{(2)}(y,z,x)\right\}
\end{align*}}

From Lemma 2, we have that

\vspace{-10mm}

\begin{align*}
    &\sqrt{N}\left(\widehat{\psi}_{Y}^{(1)}(y,z,x,c)-\psi_{Y}^{(1)}(y,z,x,c)\right)\xrightarrow[d]{}\textbf{Z}_{Y,D}^{(1)}(y,1,z,x,c)+\textbf{Z}_{Y,D}^{(1)}(y,0,z,x,c)\\&\equiv\textbf{Z}_{Y}^{(1,1)}(y,z,x,c)
\end{align*}

From Lemma 1, we have 

\vspace{-10mm}

\begin{align*}
    &\sqrt{N}\left(\widehat{\psi}_{Y}^{(2)}(y,z,x)-\psi_{Y}^{(2)}(y,z,x)\right)\xrightarrow[d]{}p_{z|x}\left(\textbf{Z}_{1}^{(1)}(y,1,z,x)+\textbf{Z}_{1}^{(1)}(y,0,z,x)\right)+\left(p_{y|z,x}-1\right)\textbf{Z}_{1}^{(2)}(z,x)\\&\equiv\textbf{Z}_{Y}^{(1,2)}(y,z,x)
\end{align*}

Applying Theorem 2.1 from \cite{fangsantos}, we find that 

\vspace{-10mm}

\begin{align*}
    &\sqrt{N}\left(\widehat{\mathbb{P}}^{u}_{c}\left(Y(D(z))=y|X=x\right)-\mathbb{P}^{u}_{c}\left(Y(D(z))=y|X=x\right)\right)\\&\xrightarrow[d]{}\left\{\begin{matrix}
\textbf{Z}_{Y}^{(1,1)}(y,z,x,c)\mathbf{1}\left\{ \psi_{Y}^{(1)}(y,z,x,c)<\psi_{Y}^{(2)}(y,z,x)\right\}
 \\\min\left\{ \textbf{Z}_{Y}^{(1,1)}(y,z,x,c),\textbf{Z}_{Y}^{(1,2)}(y,z,x)\right\}\mathbf{1}\left\{ \psi_{Y}^{(1)}(y,z,x,c)=\psi_{Y}^{(2)}(y,z,x)\right\}
 \\\textbf{Z}_{Y}^{(1,2)}(y,z,x,c)\mathbf{1}\left\{ \psi_{Y}^{(1)}(y,z,x,c)>\varphi_{Y}^{(2)}(y,z,x)\right\}
\end{matrix}\right.\equiv\textbf{Z}_{Y}^{(1)}(y,z,x,c)
\end{align*}

For the lower bound, we have

\vspace{-10mm}

{\small \begin{align*}
    &\widehat{\mathbb{P}}_{c}^{l}\left ( Y(D(z))=y|X=x \right )\\&=\max\left\{\widehat{\mathbb{P}}^{l}_{c}\left(Y(D(z))=y,D(z)=1|X=x\right)+\widehat{\mathbb{P}}^{l}_{c}\left(Y(D(z))=y,D(z)=0|X=x\right),\widehat{p}_{y|z,x}\widehat{p}_{z|x}\right\}\\&\equiv\max\left\{\widehat{\varphi}_{Y}^{(1)}(y,z,x,c),\widehat{\varphi}_{Y}^{(2)}(y,z,x)\right\}
\end{align*}}

From Lemma 2, we have that

\vspace{-10mm}

\begin{align*}
    &\sqrt{N}\left(\widehat{\varphi}_{Y}^{(1)}(y,z,x,c)-\varphi_{Y}^{(1)}(y,z,x,c)\right)\xrightarrow[d]{}\textbf{Z}_{Y,D}^{(2)}(y,1,z,x,c)+\textbf{Z}_{Y,D}^{(2)}(y,0,z,x,c)\\&\equiv\textbf{Z}_{Y}^{(2,1)}(y,z,x,c)
\end{align*}

From Lemma 1, we have 

\vspace{-10mm}

\begin{align*}
    &\sqrt{N}\left(\widehat{\varphi}_{Y}^{(2)}(y,z,x)-\varphi_{Y}^{(2)}(y,z,x)\right)\xrightarrow[d]{}p_{z|x}\left(\textbf{Z}_{1}^{(1)}(y,1,z,x)+\textbf{Z}_{1}^{(1)}(y,0,z,x)\right)+p_{y|z,x}\textbf{Z}_{1}^{(2)}(z,x)\\&\equiv\textbf{Z}_{Y}^{(2,2)}(y,z,x)
\end{align*}

Applying Theorem 2.1 from \cite{fangsantos}, we find that 

\vspace{-10mm}

\begin{align*}
    &\sqrt{N}\left(\widehat{\mathbb{P}}^{l}_{c}\left(Y(D(z))=y|X=x\right)-\mathbb{P}^{l}_{c}\left(Y(D(z))=y|X=x\right)\right)\\&\xrightarrow[d]{}\left\{\begin{matrix}
\textbf{Z}_{Y}^{(2,1)}(y,z,x,c)\mathbf{1}\left\{ \varphi_{Y}^{(1)}(y,z,x,c)>\varphi_{Y}^{(2)}(y,z,x)\right\}
 \\\max\left\{ \textbf{Z}_{Y}^{(2,1)}(y,z,x,c),\textbf{Z}_{Y}^{(2,2)}(y,z,x)\right\}\mathbf{1}\left\{ \varphi_{Y}^{(1)}(y,z,x,c)=\varphi_{Y}^{(2)}(y,z,x)\right\}
 \\\textbf{Z}_{Y}^{(2,2)}(y,z,x,c)\mathbf{1}\left\{ \varphi_{Y}^{(1)}(y,z,x,c)<\varphi_{Y}^{(2)}(y,z,x)\right\}
\end{matrix}\right.\equiv\textbf{Z}_{Y}^{(2)}(y,z,x,c)
\end{align*}

And therefore, we have

\vspace{-10mm}

\begin{align*}
    &\sqrt{N}\begin{pmatrix}
\widehat{\mathbb{P}}^{u}_{c}\left ( Y(D(z))=y|X=x \right )-\mathbb{P}^{u}_{c}\left ( Y(D(z))=y|X=x \right )
 \\\widehat{\mathbb{P}}^{l}_{c}\left ( Y(D(z))=y|X=x \right )-\mathbb{P}^{l}_{c}\left ( Y(D(z))=y|X=x \right )
\end{pmatrix}\xrightarrow[d]{}\begin{pmatrix}
\textbf{Z}_{Y}^{(1)}(y,z,x,c)
 \\\textbf{Z}_{Y}^{(2)}(y,z,x,c)
\end{pmatrix}\\&\equiv \textbf{Z}_{Y}(y,z,x,c)
\end{align*}

which concludes the proof.

\begin{lemma}
    Suppose Assumptions 1-7 hold. Then,

    \vspace{-5mm}

 \begin{equation*}
     \sqrt{N}\begin{pmatrix}
\widehat{\mathbb{P}}^{u}\left(D(z)=d|X=x\right)-\mathbb{P}^{u}\left(D(z)=d|X=x\right) 
\\\widehat{\mathbb{P}}^{l}\left(D(z)=d|X=x\right)-\mathbb{P}^{l}\left(D(z)=d|X=x\right)
\end{pmatrix}\xrightarrow[d]{}\textbf{Z}_{D}(d,z,x,c)
 \end{equation*}

\noindent  a tight element of $l^{\infty}\left ( \left\{ 0,1\right\}^{2}\times\mathcal{S}(X)\times\mathcal{C},\mathbb{R}^{2} \right )$.
\end{lemma}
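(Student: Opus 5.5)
The plan is to mirror the proof of Lemma 3, replacing the marginalisation over $D(z)$ with a marginalisation over $Y(D(z))$, as in Proposition 3. Write
\begin{align*}
&\widehat{\mathbb{P}}^{u}_{c}\left(D(z)=d|X=x\right)=\min\left\{\widehat{\psi}_{D}^{(1)}(d,z,x,c),\widehat{\psi}_{D}^{(2)}(d,z,x)\right\},\\
&\widehat{\mathbb{P}}^{l}_{c}\left(D(z)=d|X=x\right)=\max\left\{\widehat{\varphi}_{D}^{(1)}(d,z,x,c),\widehat{\varphi}_{D}^{(2)}(d,z,x)\right\}.
\end{align*}
Here $\widehat{\psi}_{D}^{(1)}$ is the sum over $y\in\{0,1\}$ of the estimated joint upper bounds $\widehat{\mathbb{P}}^{u}_{c}(Y(D(z))=y,D(z)=d|X=x)$. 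The other terms are $\widehat{\psi}_{D}^{(2)}=\widehat{p}_{d|z,x}\widehat{p}_{z|x}+(1-\widehat{p}_{z|x})$, $\widehat{\varphi}_{D}^{(1)}$ the sum over $y$ of the joint lower bounds, and $\widehat{\varphi}_{D}^{(2)}=\widehat{p}_{d|z,x}\widehat{p}_{z|x}$. In the last two expressions, $\widehat{p}_{d|z,x}=\widehat{p}_{1,d|z,x}+\widehat{p}_{0,d|z,x}$ is a linear map of the estimators in Lemma 1.

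The first step handles the inner terms. By Lemma 2, which holds jointly over $(y,d,z,x,c)$ on the finite index set, and the continuous mapping theorem (summation over $y$ is linear), we get
\begin{equation*}
\sqrt{N}\left(\widehat{\psi}_{D}^{(1)}-\psi_{D}^{(1)}\right)\xrightarrow[d]{}\textbf{Z}_{Y,D}^{(1)}(1,d,z,x,c)+\textbf{Z}_{Y,D}^{(1)}(0,d,z,x,c)\equiv\textbf{Z}_{D}^{(1,1)}(d,z,x,c).
\end{equation*}
Similarly, $\sqrt{N}(\widehat{\varphi}_{D}^{(1)}-\varphi_{D}^{(1)})\xrightarrow[d]{}\textbf{Z}_{D}^{(2,1)}$ using the second components. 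Lemma 1 and the ordinary delta method, since the maps are polynomial and hence fully Hadamard differentiable, give the remaining two limits:
\begin{align*}
&\textbf{Z}_{D}^{(1,2)}=p_{z|x}\left(\textbf{Z}_{1}^{(1)}(1,d,z,x)+\textbf{Z}_{1}^{(1)}(0,d,z,x)\right)+(p_{d|z,x}-1)\textbf{Z}_{1}^{(2)}(z,x),\\
&\textbf{Z}_{D}^{(2,2)}=p_{z|x}\left(\textbf{Z}_{1}^{(1)}(1,d,z,x)+\textbf{Z}_{1}^{(1)}(0,d,z,x)\right)+p_{d|z,x}\textbf{Z}_{1}^{(2)}(z,x).
\end{align*}
All four terms are functions of the same underlying $\widehat{\theta}$, so I will obtain them jointly by one application of the directional delta method to the stacked map $\theta\mapsto(\psi_{D}^{(1)},\psi_{D}^{(2)},\varphi_{D}^{(1)},\varphi_{D}^{(2)})$. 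This matters because the next step takes a min/max of two of these components, which requires their joint limit and not just the marginal ones.

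The second step applies the Hadamard directional derivative of $(a,b)\mapsto\min\{a,b\}$ and $\max\{a,b\}$ (Theorem 2.1 and the chain rule in \cite{fangsantos}). This gives the piecewise limits: $\textbf{Z}_{D}^{(1,1)}$ on $\{\psi^{(1)}_{D}<\psi^{(2)}_{D}\}$, $\min\{\textbf{Z}_{D}^{(1,1)},\textbf{Z}_{D}^{(1,2)}\}$ on the tie set, and $\textbf{Z}_{D}^{(1,2)}$ otherwise. The lower bound is analogous, with $\max$ in place of $\min$ and the inequalities reversed. Stacking these yields $\textbf{Z}_{D}(d,z,x,c)$. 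Tightness follows because the index set $\{0,1\}^{2}\times\mathcal{S}(X)\times\mathcal{C}$ is finite under Assumption 6 and the finite grid $\mathcal{C}$, so $l^{\infty}$ reduces to a Euclidean space. Any Borel-measurable limit there is automatically tight.

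I expect the main obstacle to be verifying that the inner maps from Lemma 2 remain Hadamard directionally differentiable at $\theta_{0}$ uniformly in $c\in\mathcal{C}$. This is where Assumption 7 ($c<\min\{p_{1|x},p_{0|x}\}$) is invoked. It removes the indicator $\mathbf{1}(p_{z|x}>c)$, and it keeps the denominators $\theta^{(3)}-c$ bounded away from zero in a neighbourhood of $\theta_{0}$, so each $\delta_{j}$ is genuinely differentiable there. With that in hand, the composition of Hadamard differentiable maps with the directionally differentiable min/max is again directionally differentiable, and the remaining steps are routine.
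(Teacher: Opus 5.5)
Your proposal is correct and follows essentially the same route as the paper's proof. It uses the same decomposition into $\psi_D^{(1)},\psi_D^{(2)},\varphi_D^{(1)},\varphi_D^{(2)}$, the same limits $\textbf{Z}_D^{(1,1)},\textbf{Z}_D^{(1,2)},\textbf{Z}_D^{(2,1)},\textbf{Z}_D^{(2,2)}$ from Lemmas 2 and 1, and Theorem 2.1 of Fang and Santos for the min and max. Your explicit handling of the joint convergence of the stacked components and of tightness on the finite index set adds care that the paper leaves implicit.
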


\textbf{Proof:}

\vspace{-10mm}

{\small \begin{align*}
    &\widehat{\mathbb{P}}_{c}^{u}\left ( D(z)=d|X=x \right )\\&=\min\left\{\widehat{\mathbb{P}}^{u}_{c}\left(Y(D(z))=1,D(z)=d|X=x\right)+\widehat{\mathbb{P}}^{u}_{c}\left(Y(D(z))=0,D(z)=d|X=x\right),\widehat{p}_{d|z,x}\widehat{p}_{z|x}+(1-\widehat{p}_{z|x})\right\}\\&\equiv\min\left\{\widehat{\psi}_{D}^{(1)}(d,z,x,c),\widehat{\psi}_{D}^{(2)}(d,z,x)\right\}
\end{align*}}

From Lemma 2, we have that

\vspace{-10mm}

\begin{align*}
    &\sqrt{N}\left(\widehat{\psi}_{D}^{(1)}(d,z,x,c)-\psi_{D}^{(1)}(d,z,x,c)\right)\xrightarrow[d]{}\textbf{Z}_{Y,D}^{(1)}(1,d,z,x,c)+\textbf{Z}_{Y,D}^{(1)}(0,d,z,x,c)\\&\equiv\textbf{Z}_{D}^{(1,1)}(d,z,x,c)
\end{align*}

From Lemma 1, we have 

\vspace{-10mm}

\begin{align*}
    &\sqrt{N}\left(\widehat{\psi}_{D}^{(2)}(d,z,x)-\psi_{D}^{(2)}(d,z,x)\right)\xrightarrow[d]{}p_{z|x}\left(\textbf{Z}_{1}^{(1)}(1,d,z,x)+\textbf{Z}_{1}^{(1)}(0,d,z,x)\right)+\left(p_{d|z,x}-1\right)\textbf{Z}_{1}^{(2)}(z,x)\\&\equiv\textbf{Z}_{D}^{(1,2)}(d,z,x)
\end{align*}

Applying Theorem 2.1 from \cite{fangsantos}, we find that 

\vspace{-10mm}

\begin{align*}
    &\sqrt{N}\left(\widehat{\mathbb{P}}^{u}_{c}\left(D(z)=d|X=x\right)-\mathbb{P}^{u}_{c}\left(D(z)=d|X=x\right)\right)\\&\xrightarrow[d]{}\left\{\begin{matrix}
\textbf{Z}_{D}^{(1,1)}(d,z,x,c)\mathbf{1}\left\{ \psi_{D}^{(1)}(d,z,x,c)<\psi_{D}^{(2)}(d,z,x)\right\}
 \\\min\left\{ \textbf{Z}_{D}^{(1,1)}(d,z,x,c),\textbf{Z}_{D}^{(1,2)}(d,z,x)\right\}\mathbf{1}\left\{ \psi_{Y}^{(1)}(d,z,x,c)=\psi_{D}^{(2)}(d,z,x)\right\}
 \\\textbf{Z}_{D}^{(1,2)}(d,z,x,c)\mathbf{1}\left\{ \psi_{D}^{(1)}(d,z,x,c)>\varphi_{D}^{(2)}(d,z,x)\right\}
\end{matrix}\right.\equiv\textbf{Z}_{D}^{(1)}(d,z,x,c)
\end{align*}

For the lower bound, we have

\vspace{-10mm}

{\small \begin{align*}
    &\widehat{\mathbb{P}}_{c}^{l}\left ( D(z)=d|X=x \right )\\&=\max\left\{\widehat{\mathbb{P}}^{l}_{c}\left(Y(D(z))=1,D(z)=d|X=x\right)+\widehat{\mathbb{P}}^{l}_{c}\left(Y(D(z))=0,D(z)=d|X=x\right),\widehat{p}_{d|z,x}\widehat{p}_{z|x}\right\}\\&\equiv\max\left\{\widehat{\varphi}_{D}^{(1)}(d,z,x,c),\widehat{\varphi}_{D}^{(2)}(d,z,x)\right\}
\end{align*}}

From Lemma 2, we have that

\vspace{-10mm}

\begin{align*}
    &\sqrt{N}\left(\widehat{\varphi}_{D}^{(1)}(d,z,x,c)-\varphi_{D}^{(1)}(d,z,x,c)\right)\xrightarrow[d]{}\textbf{Z}_{Y,D}^{(2)}(1,d,z,x,c)+\textbf{Z}_{Y,D}^{(2)}(0,d,z,x,c)\\&\equiv\textbf{Z}_{D}^{(2,1)}(d,z,x,c)
\end{align*}

From Lemma 1, we have 

\vspace{-10mm}

\begin{align*}
    &\sqrt{N}\left(\widehat{\varphi}_{D}^{(2)}(d,z,x)-\varphi_{D}^{(2)}(d,z,x)\right)\xrightarrow[d]{}p_{z|x}\left(\textbf{Z}_{1}^{(1)}(1,d,z,x)+\textbf{Z}_{1}^{(1)}(0,d,z,x)\right)+p_{d|z,x}\textbf{Z}_{1}^{(2)}(z,x)\\&\equiv\textbf{Z}_{D}^{(2,2)}(d,z,x)
\end{align*}

Applying Theorem 2.1 from \cite{fangsantos}, we find that 

\vspace{-10mm}

\begin{align*}
    &\sqrt{N}\left(\widehat{\mathbb{P}}^{l}_{c}\left(D(z)=d|X=x\right)-\mathbb{P}^{l}_{c}\left(D(z)=d|X=x\right)\right)\\&\xrightarrow[d]{}\left\{\begin{matrix}
\textbf{Z}_{D}^{(2,1)}(d,z,x,c)\mathbf{1}\left\{ \varphi_{Y}^{(1)}(d,z,x,c)>\varphi_{Y}^{(2)}(d,z,x)\right\}
 \\\max\left\{ \textbf{Z}_{D}^{(2,1)}(d,z,x,c),\textbf{Z}_{D}^{(2,2)}(d,z,x)\right\}\mathbf{1}\left\{ \varphi_{D}^{(1)}(d,z,x,c)=\varphi_{D}^{(2)}(d,z,x)\right\}
 \\\textbf{Z}_{D}^{(2,2)}(d,z,x,c)\mathbf{1}\left\{ \varphi_{D}^{(1)}(d,z,x,c)<\varphi_{D}^{(2)}(d,z,x)\right\}
\end{matrix}\right.\equiv\textbf{Z}_{D}^{(2)}(d,z,x,c)
\end{align*}

And therefore, we have

\vspace{-10mm}

\begin{align*}
    &\sqrt{N}\begin{pmatrix}
\widehat{\mathbb{P}}^{u}_{c}\left ( D(z)=d|X=x \right )-\mathbb{P}^{u}_{c}\left ( D(z)=d|X=x \right )
 \\\widehat{\mathbb{P}}^{l}_{c}\left ( D(z)=d|X=x \right )-\mathbb{P}^{l}_{c}\left ( D(z)=d|X=x \right )
\end{pmatrix}\xrightarrow[d]{}\begin{pmatrix}
\textbf{Z}_{D}^{(1)}(d,z,x,c)
 \\\textbf{Z}_{D}^{(2)}(d,z,x,c)
\end{pmatrix}\\&\equiv \textbf{Z}_{D}(d,z,x,c)
\end{align*}

which concludes the proof.

\begin{lemma}
    Suppose Assumptions 1-7 hold. Then, 

\vspace{-10mm}

\begin{equation*}
    \sqrt{N}\begin{pmatrix}
\widehat{ITT}^{u}(c,\pi_{def})-ITT_{}^{u}(c,\pi_{def})
\\\widehat{ITT}_{}^{l}(c,\pi_{def})-ITT_{}^{l}(c,\pi_{def})
\end{pmatrix}
\xrightarrow[d]{}\Breve{\textbf{Z}}_{ITT}(y,z,x,c,\pi_{def})
\end{equation*}

\noindent a tight element of $l^{\infty}\left ( \left\{ 0,1\right\}^{2}\times\mathcal{S}(X)\times\mathcal{C},\mathbb{R}^{2} \right )$.

\end{lemma}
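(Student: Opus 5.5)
The plan is to follow the pattern of Lemmas 3 and 4: write the unconditional ITT bounds as a composition of directionally differentiable maps applied to $\widehat{\theta}$, and conclude with the delta method of \cite{fangsantos}.

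\textbf{Step 1: reduce the estimator to a finite sum.} Under Assumption 6 the support of $X$ is $\{x_1,\dots,x_K\}$. Hence
\begin{equation*}
\widehat{ITT}^{u}(c,\pi_{def})=\sum_{k=1}^{K}\widehat{q}_{x_k}\,\widehat{ITT}^{u}_{x_k}(c,\pi_{def}),
\end{equation*}
and the same identity holds for the lower bound. Each conditional bound is
\begin{equation*}
\widehat{ITT}^{u}_{x}=\min\{\widehat{\mathbb{P}}^{u}_{c}(Y(D(1))=1|X=x)-\widehat{\mathbb{P}}^{l}_{c}(Y(D(0))=1|X=x)+\pi_{def},1\},
\end{equation*}
with the analogous max expression for $\widehat{ITT}^{l}_{x}$.

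\textbf{Step 2: joint convergence of the inner differences.} Lemma 3 gives joint weak convergence of
\begin{equation*}
\sqrt{N}\big(\widehat{\mathbb{P}}^{u}_{c}-\mathbb{P}^{u}_{c},\ \widehat{\mathbb{P}}^{l}_{c}-\mathbb{P}^{l}_{c}\big)
\end{equation*}
to $\textbf{Z}_Y$, uniformly over $(y,z,x,c)$ with $c$ in the finite grid $\mathcal{C}$. Every component is a Hadamard directionally differentiable functional of the same $\widehat\theta$. Lemma 1 gives joint convergence of $\widehat\theta$, which includes $\widehat q_x$, so the limits hold jointly. The difference $\widehat{\mathbb{P}}^{u}_{c}(Y(D(1))=1|x)-\widehat{\mathbb{P}}^{l}_{c}(Y(D(0))=1|x)+\pi_{def}$ is a linear map of these processes. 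It therefore converges to $\textbf{Z}^{(1)}_Y(1,1,x,c)-\textbf{Z}^{(2)}_Y(1,0,x,c)$, and $\pi_{def}$ drops out because it is a known constant.

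\textbf{Step 3: the truncation at $\pm 1$.} I apply the directional derivative of $\min\{\cdot,1\}$ (respectively $\max\{\cdot,-1\}$) via Theorem 2.1 of \cite{fangsantos}, as in Lemma 3. The derivative equals $h$ where the inner term is strictly below $1$, equals $\min\{h,0\}$ where it equals $1$, and equals $0$ where it exceeds $1$. By the chain rule for Hadamard directionally differentiable maps, this gives a limit $\textbf{Z}_{ITT,x}(c,\pi_{def})$ for each $x$.

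\textbf{Step 4: integrate over covariates.} Writing $\widehat q_{x}\widehat{ITT}_x-q_xITT_x=q_x(\widehat{ITT}_x-ITT_x)+ITT_x(\widehat q_x-q_x)+o_p(N^{-1/2})$, the limit is
\begin{equation*}
\breve{\textbf{Z}}_{ITT}=\sum_{k}\big(q_{x_k}\textbf{Z}_{ITT,x_k}+ITT_{x_k}\textbf{Z}^{(3)}_1(x_k)\big).
\end{equation*}
This is a finite sum of tight components, so it is tight in $l^\infty$ over the finite index set. The same holds uniformly over any finite grid of $\pi_{def}$ values.

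\textbf{Main obstacle.} The delicate step is the joint weak convergence of all the pieces. Lemmas 1–3 are stated componentwise, while the delta method requires a single directionally differentiable map $\theta\mapsto(ITT^u,ITT^l)$ evaluated at the single process $\textbf{Z}_1$. I would handle this by defining that composite map explicitly on $\theta$ and checking that each link in the chain is Hadamard directionally differentiable tangentially to the support of $\textbf{Z}_1$. The links are the rational maps $\delta_j$, which are differentiable on $\{c<\min p_{z|x}\}$ by Assumption 7, together with finite min, max and sums. The chain rule of \cite{fangsantos} then yields joint convergence automatically. Because the index set is finite, tightness is immediate.
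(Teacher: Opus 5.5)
Your proposal is correct and follows the paper's proof essentially step for step. Both arguments take the joint convergence of the potential-outcome bounds from Lemma 3, apply the directional derivative of the truncations $\min\{\cdot,1\}$ and $\max\{\cdot,-1\}$ via the Fang--Santos delta method, and integrate over the finite support of $X$ using the linearization $q_{x}(\widehat{ITT}_{x}-ITT_{x})+ITT_{x}(\widehat{q}_{x}-q_{x})$. Two points you make explicit are left implicit in the paper: the branch where the inner term exceeds $1$ (derivative $0$), and obtaining joint convergence from a single composite map of $\widehat{\theta}$ through the chain rule.
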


\textbf{Proof:} From Lemma 3, it follows that

\vspace{-10mm}

\begin{equation*}
    \sqrt{N}\begin{pmatrix}
\widehat{\mathbb{P}}^{u}\left ( Y(D(1))=1|X=x \right )-\mathbb{P}^{u}\left ( Y(D(1))=1|X=x \right )
 \\\widehat{\mathbb{P}}^{l}\left ( Y(D(1))=1|X=x \right )-\mathbb{P}^{l}\left ( Y(D(1))=1|X=x \right )
 \\\widehat{\mathbb{P}}^{u}\left ( Y(D(0))=1|X=x \right )-\mathbb{P}^{u}\left ( Y(D(0))=1|X=x \right )
 \\\widehat{\mathbb{P}}^{l}\left ( Y(D(0))=1|X=x \right )-\mathbb{P}^{l}\left ( Y(D(0))=1|X=x \right )
\end{pmatrix}\xrightarrow[d]{}\begin{pmatrix}
 \textbf{Z}_{Y}^{(1)}(1,1,x,c)
\\ \textbf{Z}_{Y}^{(2)}(1,1,x,c)
\\ \textbf{Z}_{Y}^{(1)}(1,0,x,c)
\\ \textbf{Z}_{Y}^{(2)}(1,0,x,c)
\end{pmatrix}\equiv \ddot{\textbf{Z}}_{Y}(y,z,x,c)
\end{equation*}

Let $\widehat{\gamma}$ denote the estimates for the bounds of potential outcomes and $\gamma_{0}$ their population values. For fixed $y$, $c$ and $\pi_{def|x}$, define the mapping 

\vspace{-10mm}

\begin{equation*}
    \phi_{ITT_{x}}:l^{\infty}\left ( \left\{ 0,1\right\}^{2}\times \mathcal{S}(X) \right )\times l^{\infty}\left ( \left\{ 0,1\right\}\times \mathcal{S}(X) \right )\times l^{\infty}\left ( \mathcal{S}(X) \right )\rightarrow l^{\infty}\left ( \left\{ 0,1\right\},\mathcal{S}(X),\mathbb{R}^{2} \right )
\end{equation*}

by

\vspace{-10mm}

\begin{equation*}
    \left [ \phi_{ITT_{x}}(\theta) \right ](z,x)=\begin{pmatrix}
\min\left\{\gamma^{(1)}(y,z,x,c)-\gamma^{(4)}(y,z,x,c)+\pi_{d|x},1 \right\}
\\\max\left\{\gamma^{(2)}(y,z,x,c)-\gamma^{(3)}(y,z,x,c)-\pi_{d|x},-1 \right\}
\end{pmatrix}
\end{equation*}

The Hadamard derivative for $\left [ \delta_{1}(\gamma)  \right ](z,x)=\gamma^{(1)}(y,z,x,c)-\gamma^{(4)}(y,z,x,c)+\pi_{d|x}$ is

\vspace{-10mm}

\begin{equation*}
    \left [ \delta^{'}_{1,\gamma}(h)  \right ](z,x)=h^{(1)}(y,z,x,c)-h^{(4)}(y,z,x,c)
\end{equation*}

The Hadamard derivative for $\left [ \delta_{2}(\gamma)  \right ](z,x)=1$ is equal to 0. The Hadamard derivative for $\left [ \delta_{3}(\gamma)  \right ](z,x)=\gamma^{(2)}(y,z,x,c)-\gamma^{(3)}(y,z,x,c)+\pi_{def|x}$ is

\vspace{-10mm}

\begin{equation*}
    \left [ \delta^{'}_{1,\gamma}(h)  \right ](z,x)=h^{(2)}(y,z,x,c)-h^{(3)}(y,z,x,c)
\end{equation*}

\noindent and finally, the The Hadamard derivative for $\left [ \delta_{4}(\gamma)  \right ](z,x)=-1$ is equal to 0. Hence, the Hadamard directional derivative of $\phi_{ITT}$ evaluated at $\gamma_{0}$ is

\vspace{-10mm}

\begin{equation*}
    \phi_{ITT_{x},\gamma_{0}}^{'}(h)=\begin{pmatrix}
\mathbf{1}\left ( \delta_{1}(\gamma_{0})=1 \right )\min\left\{ \delta^{'}_{1,\gamma_{0}}(h),0\right\} \\
+\mathbf{1}\left ( \delta_{1}(\gamma_{0})<1 \right )\delta^{'}_{1,\gamma_{0}}(h) \\ 
 \\\mathbf{1}\left ( \delta_{3}(\gamma_{0})=-1 \right )\max\left\{ \delta^{'}_{3,\gamma_{0}}(h),0\right\}
 \\+\mathbf{1}\left ( \delta_{,3}(\gamma_{0})>-1 \right )\delta^{'}_{3,\gamma_{0}}(h) 
\end{pmatrix}
\end{equation*}

By Lemma 3 and the Delta Method for Hadamard directionally differentiable functions,

\vspace{-10mm}

\begin{equation*}
    \sqrt{N}\left ( \phi_{ITT_{x}}(\widehat{\gamma})-\phi_{ITT_{x}}(\gamma_{0}) \right )\overset{d}{\rightarrow}\left [ \phi_{\gamma_{0}}^{'}(\ddot{\textbf{Z}}) \right ](z,x)\equiv\tilde{\textbf{Z}}_{ITT}(z,x)
\end{equation*}

which yields the process $\textbf{Z}_{ITT}(y,z,x,c,\pi_{d|x})$. It follows directly that the estimator for the unconditional upper bound converges weakly to a Gaussian element:

\vspace{-10mm}

{\small \begin{align*}
    &\sqrt{N}\left(\widehat{ITT}^{u}(c,\pi)-ITT^{u}(c,\pi)\right)=\sqrt{N}\sum_{k=1}^{K}\widehat{q}_{x_{k}}\left(\widehat{ITT}^{u}_{x_{k}}(c,\pi_{d|x_{k}})-ITT^{u}_{x{k}}(c,\pi_{d|x_{k}})\right)+\sum_{k=1}^{K}ITT_{x_{k}}\sqrt{N}\left(\widehat{q}_{x_{k}}-q_{x_{k}}\right)\\&\xrightarrow[d]{}\sum_{k=1}^{K}q_{x_{k}}\textbf{Z}_{ITT}^{(1)}(y,z,x_{k},c,\pi_{d|x_{k}})+\sum_{k=1}^{K}ITT^{u}_{x_{k}}(c,\pi_{d|x_{k}})\textbf{Z}_{1}^{(3)}(0,0,0,x_{k})\equiv \Breve{\textbf{Z}}^{(1)}_{ITT}
\end{align*}}

A similar result holds yields

\begin{equation*}
    \sqrt{N}\left(\widehat{ITT}^{l}-ITT^{l}\right)\xrightarrow[d]{} \Breve{\textbf{Z}}^{(2)}_{ITT}
\end{equation*}

which concludes the proof.

\begin{lemma}
    Suppose Assumptions 1-7 hold. Then, 

    \vspace{-10mm}

\begin{equation*}
    \sqrt{N}\begin{pmatrix}
\widehat{\pi}_{co}^{u}(c,\pi_{def})-\pi^{u}_{co}(c,\pi_{def})
\\\widehat{\pi}_{co}^{l}(c,\pi_{def})-\pi_{co}^{l}(c,\pi_{def})
\end{pmatrix}
\xrightarrow[d]{}\Breve{\textbf{Z}}_{FS}(d,z,x,c,\pi_{def})
\end{equation*}

\noindent a tight element of $l^{\infty}\left ( \left\{ 0,1\right\}^{2}\times\mathcal{S}(X)\times\mathcal{C},\mathbb{R}^{2} \right )$.
\end{lemma}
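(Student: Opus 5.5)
The proof will mirror that of Lemma 5, with the bounds on potential treatments from Lemma 4 replacing those on potential outcomes. First, Lemma 4 is applied jointly over $(d,z)=(1,1)$ and $(1,0)$ to obtain the joint weak convergence
\begin{equation*}
\sqrt{N}\begin{pmatrix}
\widehat{\mathbb{P}}^{u}_{c}\left(D(1)=1|X=x\right)-\mathbb{P}^{u}_{c}\left(D(1)=1|X=x\right)\\
\widehat{\mathbb{P}}^{l}_{c}\left(D(1)=1|X=x\right)-\mathbb{P}^{l}_{c}\left(D(1)=1|X=x\right)\\
\widehat{\mathbb{P}}^{u}_{c}\left(D(0)=1|X=x\right)-\mathbb{P}^{u}_{c}\left(D(0)=1|X=x\right)\\
\widehat{\mathbb{P}}^{l}_{c}\left(D(0)=1|X=x\right)-\mathbb{P}^{l}_{c}\left(D(0)=1|X=x\right)
\end{pmatrix}\xrightarrow[d]{}\ddot{\textbf{Z}}_{D}(x,c).
\end{equation*}
Its components are $\textbf{Z}_{D}^{(1)}(1,1,x,c)$, $\textbf{Z}_{D}^{(2)}(1,1,x,c)$, $\textbf{Z}_{D}^{(1)}(1,0,x,c)$ and $\textbf{Z}_{D}^{(2)}(1,0,x,c)$. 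Joint convergence holds because every component is the image, under one directionally differentiable map of the vector $\widehat{\theta}$, of the single process $\textbf{Z}_{1}$ from Lemma 1. Since $\mathcal{S}(X)$ and $\mathcal{C}$ are finite, uniformity reduces to a finite-dimensional statement.

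Second, for fixed $c$ and $\pi_{def}$, define the map $\phi_{FS_{x}}$ that sends these four bounds $\gamma$ to
\begin{equation*}
\begin{pmatrix}\min\left\{\gamma^{(1)}-\gamma^{(4)}+\pi_{def},1\right\}\\ \max\left\{\gamma^{(2)}-\gamma^{(3)}+\pi_{def},0\right\}\end{pmatrix}.
\end{equation*}
The inner maps are affine, with derivatives $h^{(1)}-h^{(4)}$ and $h^{(2)}-h^{(3)}$. The constants $1$ and $0$ have zero derivative. The Hadamard directional derivative at $\gamma_{0}$ is then obtained exactly as in the proof of Lemma 5. In the upper component it equals $\mathbf{1}(\delta_{1}(\gamma_{0})<1)\,\delta_{1}'(h)+\mathbf{1}(\delta_{1}(\gamma_{0})=1)\min\{\delta_{1}'(h),0\}$. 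In the lower component it equals $\mathbf{1}(\delta_{3}(\gamma_{0})>0)\,\delta_{3}'(h)+\mathbf{1}(\delta_{3}(\gamma_{0})=0)\max\{\delta_{3}'(h),0\}$. Theorem 2.1 of \cite{fangsantos} then gives $\sqrt{N}(\widehat{\pi}^{u}_{co|x}-\pi^{u}_{co|x},\widehat{\pi}^{l}_{co|x}-\pi^{l}_{co|x})\xrightarrow[d]{}\phi'_{FS_{x},\gamma_{0}}(\ddot{\textbf{Z}}_{D})$ jointly over $x\in\mathcal{S}(X)$ and $c\in\mathcal{C}$.

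Third, integrate over covariates. Write $\widehat{\pi}^{u}_{co}=\sum_{k}\widehat{q}_{x_{k}}\widehat{\pi}^{u}_{co|x_{k}}$ and decompose
\begin{equation*}
\sqrt{N}(\widehat{\pi}^{u}_{co}-\pi^{u}_{co})=\sum_{k}\widehat{q}_{x_{k}}\sqrt{N}(\widehat{\pi}^{u}_{co|x_{k}}-\pi^{u}_{co|x_{k}})+\sum_{k}\pi^{u}_{co|x_{k}}\sqrt{N}(\widehat{q}_{x_{k}}-q_{x_{k}}).
\end{equation*}
Since $\widehat{q}_{x_{k}}\to q_{x_{k}}$ in probability, Slutsky's lemma and the joint convergence with $\textbf{Z}_{1}^{(3)}$ from Lemma 1 give the limit $\sum_{k}q_{x_{k}}\tilde{\textbf{Z}}^{(1)}_{FS}(x_{k})+\sum_{k}\pi^{u}_{co|x_{k}}\textbf{Z}_{1}^{(3)}(x_{k})$. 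The lower bound is handled identically. This yields the tight limit $\Breve{\textbf{Z}}_{FS}$.

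The main obstacle is ensuring joint rather than merely marginal convergence. The first-step processes, the kinks at the truncations $1$ and $0$, and $\widehat{q}_{x}$ must all be handled on a common probability space. The plan addresses this by writing the whole composition as a single directionally differentiable map of $\widehat{\theta}$ and invoking the chain rule for Hadamard directionally differentiable maps. Under Assumption 7 no indicator kinks from $p_{z|x}\le c$ arise, so only the min/max kinks remain.
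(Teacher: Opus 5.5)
Your proposal is correct and follows the paper's proof of Lemma 6 step for step: the joint convergence of the four treatment bounds from Lemma 4, the directional delta method applied to $\phi_{FS_{x}}$ with its min/max truncations, and the decomposition $\sum_{k}\widehat{q}_{x_{k}}\sqrt{N}(\widehat{\pi}^{u}_{co|x_{k}}-\pi^{u}_{co|x_{k}})+\sum_{k}\pi^{u}_{co|x_{k}}\sqrt{N}(\widehat{q}_{x_{k}}-q_{x_{k}})$ to integrate over covariates. Your version is slightly cleaner in two respects: you place the kink of the lower component at $0$, whereas the paper's proof carries over the $-1$ from Lemma 5, and you make explicit the joint-convergence argument through a single composed map of $\widehat{\theta}$, which the paper leaves implicit.
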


\textbf{Proof:} From Lemma 4, it follows that

\vspace{-10mm}

\begin{equation*}
    \sqrt{N}\begin{pmatrix}
\widehat{\mathbb{P}}^{u}\left ( D(1)=1|X=x \right )-\mathbb{P}^{u}\left ( D(1)=1|X=x \right )
 \\\widehat{\mathbb{P}}^{l}\left ( D(1)=1|X=x \right )-\mathbb{P}^{l}\left ( D(1)=1|X=x \right )
 \\\widehat{\mathbb{P}}^{u}\left ( D(0)=1|X=x \right )-\mathbb{P}^{u}\left ( D(0)=1|X=x \right )
 \\\widehat{\mathbb{P}}^{l}\left ( D(0)=1|X=x \right )-\mathbb{P}^{l}\left ( D(0)=1|X=x \right )
\end{pmatrix}\xrightarrow[d]{}\begin{pmatrix}
 \textbf{Z}_{D}^{(1)}(1,1,x,c)
\\ \textbf{Z}_{D}^{(2)}(1,1,x,c)
\\ \textbf{Z}_{D}^{(1)}(1,0,x,c)
\\ \textbf{Z}_{D}^{(2)}(1,0,x,c)
\end{pmatrix}\equiv \ddot{\textbf{Z}}_{D}(d,z,x,c)
\end{equation*}

Let $\widehat{\gamma}$ denote the estimates for the bounds of potential outcomes and $\gamma_{0}$ their population values. For fixed $d$, $c$ and $\pi_{def|x}$, define the mapping 

\vspace{-10mm}

\begin{equation*}
    \phi_{FS_{x}}:l^{\infty}\left ( \left\{ 0,1\right\}^{2}\times \mathcal{S}(X) \right )\times l^{\infty}\left ( \left\{ 0,1\right\}\times \mathcal{S}(X) \right )\times l^{\infty}\left ( \mathcal{S}(X) \right )\rightarrow l^{\infty}\left ( \left\{ 0,1\right\},\mathcal{S}(X),\mathbb{R}^{2} \right )
\end{equation*}

by

\vspace{-10mm}

\begin{equation*}
    \left [ \phi_{FS_{x}}(\theta) \right ](z,x)=\begin{pmatrix}
\min\left\{\gamma^{(1)}(d,z,x,c)-\gamma^{(4)}(d,z,x,c)+\pi_{d|x},1 \right\}
\\\max\left\{\gamma^{(2)}(d,z,x,c)-\gamma^{(3)}(d,z,x,c)+\pi_{d|x},0 \right\}
\end{pmatrix}
\end{equation*}

The Hadamard derivative for $\left [ \delta_{1}(\gamma)  \right ](z,x)=\gamma^{(1)}(d,z,x,c)-\gamma^{(4)}(d,z,x,c)+\pi_{def|x}$ is

\begin{equation*}
    \left [ \delta^{'}_{1,\gamma}(h)  \right ](z,x)=h^{(1)}(d,z,x,c)-h^{(4)}(d,z,x,c)
\end{equation*}

The Hadamard derivative for $\left [ \delta_{2}(\gamma)  \right ](z,x)=1$ is equal to 0. The Hadamard derivative for $\left [ \delta_{3}(\gamma)  \right ](z,x)=\gamma^{(2)}(d,z,x,c)-\gamma^{(3)}(d,z,x,c)+\pi_{def|x}$ is

\vspace{-10mm}

\begin{equation*}
    \left [ \delta^{'}_{3,\gamma}(h)  \right ](z,x)=h^{(2)}d,z,x,c)-h^{(3)}(d,z,x,c)
\end{equation*}

\noindent and finally, the The Hadamard derivative for $\left [ \delta_{4}(\gamma)  \right ](z,x)=0$ is equal to 0. Hence, the Hadamard directional derivative of $\phi_{FS_{x}}$ evaluated at $\gamma_{0}$ is

\vspace{-10mm}

\begin{equation*}
    \phi_{FS_{x},\gamma_{0}}^{'}(h)=\begin{pmatrix}
\mathbf{1}\left ( \delta_{1}(\gamma_{0})=1 \right )\min\left\{ \delta^{'}_{1,\gamma_{0}}(h),0\right\} \\
+\mathbf{1}\left ( \delta_{1}(\gamma_{0})<1 \right )\delta^{'}_{1,\gamma_{0}}(h) \\ 
 \\\mathbf{1}\left ( \delta_{3}(\gamma_{0})=-1 \right )\max\left\{ \delta^{'}_{3,\gamma_{0}}(h),0\right\}
 \\+\mathbf{1}\left ( \delta_{,3}(\gamma_{0})>-1 \right )\delta^{'}_{3,\gamma_{0}}(h) 
\end{pmatrix}
\end{equation*}

By Lemma 4 and the Delta Method for Hadamard directionally differentiable functions,

\vspace{-10mm}

\begin{equation*}
    \sqrt{N}\left ( \phi_{FS_{x}}(\widehat{\gamma})-\phi_{FS_{x}}(\gamma_{0}) \right )\overset{d}{\rightarrow}\left [ \phi_{\gamma_{0}}^{'}(\ddot{\textbf{Z}}) \right ](z,x)\equiv\tilde{\textbf{Z}}_{FS}(z,x)
\end{equation*}

which yields the process $\textbf{Z}_{FS}(d,z,x,c,\pi_{def|x})$. It follows directly that the estimator for the unconditional upper bound converges weakly to a Gaussian element:

\vspace{-10mm}

{\small\begin{align*}
    &\sqrt{N}\left(\widehat{\pi}_{co}^{u}(c,\pi_{def})-\pi_{co}^{u}(c,\pi_{def})\right)=\sqrt{N}\sum_{k=1}^{K}\widehat{q}_{x_{k}}\left(\widehat{\pi}^{u}_{co|x_{k}}(c,\pi_{def|x_{k})})-\pi^{u}_{co|x{k}}(c,\pi_{def|x_{k}})\right)+\sum_{k=1}^{K}\pi_{co|x_{k}}\sqrt{N}\left(\widehat{q}_{x_{k}}-q_{x_{k}}\right)\\&\xrightarrow[d]{}\sum_{k=1}^{K}q_{x_{k}}\textbf{Z}_{FS}^{(1)}(d,z,x_{k},c,\pi_{def|x_{k}})+\sum_{k=1}^{K}\pi^{u}_{co|x_{k}}(c,\pi_{def|x_{k}})\textbf{Z}_{1}^{(3)}(0,0,0,x_{k})\equiv\Breve{\textbf{Z}}^{(1)}_{FS}
\end{align*}}

A similar result holds for the estimator of the unconditional lower bound for the first-stage, which concludes the proof.

\begin{lemma}
    Suppose Assumptions 1-7 hold. Then, 

    \vspace{-10mm}

    \begin{equation*}
        \sqrt{N}\begin{pmatrix}
\widehat{LATE}^{u}(c,\pi_{def})-LATE^{u}(c,\pi_{def})
 \\\widehat{LATE}^{l}(c,\pi_{def})-LATE^{l}(c,\pi_{def})
\end{pmatrix}\xrightarrow[d]{}\textbf{Z}_{LATE}(y,d,z,x,c,\pi_{def})
    \end{equation*}

\noindent a tight element of $l^{\infty}\left ( \left\{ 0,1\right\}^{2}\times\mathcal{S}(X)\times\mathcal{C},\mathbb{R}^{2} \right )$.
\end{lemma}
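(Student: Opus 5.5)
The plan is to obtain Lemma 7 from Lemmas 5 and 6 by one more application of the functional delta method for Hadamard directionally differentiable maps \citep{fangsantos}. The first step is to establish joint convergence of the vector
\begin{equation*}
\sqrt{N}\left(\widehat{ITT}^{u}-ITT^{u},\ \widehat{ITT}^{l}-ITT^{l},\ \widehat{\pi}^{u}_{co}-\pi^{u}_{co},\ \widehat{\pi}^{l}_{co}-\pi^{l}_{co}\right)
\end{equation*}
in $l^{\infty}(\mathcal{C}\times\mathcal{P})$, where $\mathcal{P}$ is a finite grid of values for $\pi_{def}$. Marginal convergence from Lemmas 5 and 6 is not enough by itself. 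However, every one of these estimators is a Hadamard directionally differentiable functional of the same $\widehat{\theta}$ from Lemma 1. I will therefore stack the maps from Lemmas 2--6 into a single map $\Psi(\theta)$. Applying the chain rule for directional derivatives to $\Psi$ gives joint convergence to $\Psi'_{\theta_{0}}(\textbf{Z}_{1})$.

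Next I define $\phi_{LATE}:(a,b,e,f)\mapsto\left(\min\{a/f,1\},\ \max\{b/e,-1\}\right)$. This maps $(ITT^{u},ITT^{l},\pi^{u}_{co},\pi^{l}_{co})$ to $(LATE^{u},LATE^{l})$. The inner quotient maps $(a,f)\mapsto a/f$ and $(b,e)\mapsto b/e$ are fully Hadamard differentiable at any point with denominator bounded away from zero. Their derivatives are $h_{a}/f_{0}-a_{0}h_{f}/f_{0}^{2}$ and $h_{b}/e_{0}-b_{0}h_{e}/e_{0}^{2}$. The outer truncations by $\min\{\cdot,1\}$ and $\max\{\cdot,-1\}$ are directionally differentiable, exactly as in Lemma 5. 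The resulting derivative equals the linear quotient derivative on $\{a_{0}/f_{0}<1\}$ and $0$ on $\{a_{0}/f_{0}>1\}$. On the tie set $\{a_{0}/f_{0}=1\}$ it equals $\min\{\cdot,0\}$ of the quotient derivative, and the lower bound is handled symmetrically. Because the grid is finite, uniformity over $(c,\pi_{def})$ is a finite-dimensional statement. The chain rule and the delta method then give convergence to $\phi'_{LATE}(\Psi'_{\theta_{0}}(\textbf{Z}_{1}))$, which is tight as a continuous image of a tight element.

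The main obstacle is the denominator. $\pi^{l}_{co}(c,\pi_{def})$ is defined as $\max\{\cdot,0\}$, so it can equal zero, and the quotient is then not even defined, let alone differentiable. My first attempt would be to restrict the grid to points with $\pi^{l}_{co}(c,\pi_{def})>0$ for all $(c,\pi_{def})\in\mathcal{C}\times\mathcal{P}$. Assumption 4 guarantees $\pi_{co|x}>\pi_{def|x}\geq0$ at the truth. Combined with Assumption 7 and the continuity of the bounds in $c$, this should make the restriction hold for small $c$, and I would state it as a maintained condition on $\mathcal{C}$. If that condition cannot be justified, the fallback is to work on the region where $\pi^{l}_{co}=0$ separately. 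There the population upper bound equals $1$ by convention whenever $ITT^{u}>0$. Since $\widehat{\pi}^{l}_{co}\to0$ at rate $\sqrt{N}$ while the numerator is bounded away from zero, the estimated upper bound equals $1$ with probability approaching one, so the limit on that region is degenerate at $0$.

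A minor additional point is the covariate weighting by $\widehat{q}_{x}$. This is already absorbed in Lemmas 5 and 6, so no further work is needed there.
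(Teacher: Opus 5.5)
Your proposal is correct and follows essentially the same route as the paper. Both stack the ITT and complier-share bound estimators from Lemmas 5 and 6 and apply the delta method for Hadamard directionally differentiable maps to $\phi_{LATE}$, whose derivative is the quotient derivative passed through the directional derivatives of the truncations $\min\{\cdot,1\}$ and $\max\{\cdot,-1\}$. Your two additions address points the paper's proof passes over silently: you obtain joint convergence by writing all the estimators as one directionally differentiable map of $\widehat{\theta}$, and you maintain $\pi^{l}_{co}(c,\pi_{def})>0$ on the grid, which holds automatically for small $c$ because the first stage is positive but is not implied by Assumptions 1--7 for every admissible $c$.
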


\textbf{Proof:}

From Lemmas 5 and 6, we have

\vspace{-10mm}

\begin{equation*}
    \sqrt{N}\begin{pmatrix}
\widehat{ITT}^{u}(c,\pi_{def})-ITT^{u}(c,\pi_{def})
 \\\widehat{ITT}^{l}(c,\pi_{def})-ITT^{l}(c,\pi_{def})
 \\\widehat{FS}^{u}(c,\pi_{def})-FS^{u}(c,\pi_{def})
 \\\widehat{FS}^{l}(c,\pi_{def})-FS^{l}(c,\pi_{def})
\end{pmatrix}\xrightarrow[d]{}\begin{pmatrix}
\breve{\textbf{Z}}^{(1)}_{ITT}(y,z,x,c,\pi_{def})
\\\breve{\textbf{Z}}^{(2)}_{ITT}(y,z,x,c,\pi_{def})
\\\breve{\textbf{Z}}^{(1)}_{FS}(d,z,x,c,\pi_{def})
\\\breve{\textbf{Z}}^{(2)}_{FS}(d,z,x,c,\pi_{def})
\end{pmatrix}\equiv \breve{\textbf{Z}}(y,d,z,x,c,\pi_{def})
\end{equation*}

Let $\widehat{\theta}$ denote the estimated parameters above and $\theta_{0}$ denote its population values. For fixed $y,d,c,\pi_{def}$, define the mapping

\vspace{-10mm}

\begin{align*}
    &\phi_{LATE}:l^{\infty}\left ( \left\{ 0,1\right\}^{3}\times\mathcal{S}(X) \right )\times l^{\infty}\left ( \left\{ 0,1\right\}^{2}\times\mathcal{S}(X) \right )\times l^{\infty}\left ( \left\{ 0,1\right\}\times\mathcal{S}(X) \right )\times l^{\infty}\left ( \mathcal{S}(X) \right )\\&\rightarrow l^{\infty}\left ( \left\{ -1,1\right\},\mathcal{S}(X),\mathbb{R}^{2} \right )
\end{align*}

by

\vspace{-10mm}

\begin{equation*}
    \left [ \phi_{LATE}(\theta_{}) \right ](z,x)=\begin{pmatrix}
\min\left\{ \frac{\theta_{}^{(1)}(y,z,x,c_{},\pi_{def})}{\theta_{}^{(4)}(d,z,x,c_{},\pi_{def})},1\right\} \\
\max\left\{ \frac{\theta_{}^{(2)}(y,z,x,c_{},\pi_{def})}{\theta_{}^{(3)}(d,z,x,c_{},\pi_{def})},-1\right\}
\end{pmatrix}
\end{equation*}

The Hadamard derivative for $\left [ \delta_{1}(\theta) \right ](z,x)=\frac{\theta^{(1)}(y,z,x,c_{},\pi_{def})}{\theta^{(4)}(d,z,x,c_{},\pi_{def})}$ is equal to

\vspace{-10mm}

\begin{equation*}
    \left [ \delta^{'}_{1,\theta}(h) \right ](z,x)=\frac{h^{(1)}(y,z,x,c_{},\pi_{def})\theta^{(4)}(d,z,x,c_{},\pi_{def})-\theta^{(1)}(y,z,x,c_{},\pi_{def})h^{(4)}(d,z,x,c_{},\pi_{def})}{\left ( \theta^{(4)}(d,z,x,c_{},\pi_{def}) \right )^{2}}
\end{equation*}

The Hadamard derivative for $\left [ \delta_{2}(\theta) \right ](z,x)=1$ is equal to $0$. The hadamard derivative for $\left [ \delta_{3}(\theta) \right ](z,x)=\frac{\theta^{(2)}(y,z,x,c_{},\pi_{def})}{\theta^{(3)}(d,z,x,c_{},\pi_{def})}$ is equal to

\vspace{-10mm}

\begin{equation*}
    \left [ \delta^{'}_{3,\theta}(h) \right ](z,x)=\frac{h^{(2)}(y,z,x,c_{},\pi_{def})\theta^{(3)}(d,z,x,c_{},\pi_{def})-\theta^{(2)}(y,z,x,c_{},\pi_{def})h^{(3)}(d,z,x,c_{},\pi_{def})}{\left ( \theta^{(3)}(d,z,x,c_{},\pi_{def}) \right )^{2}}
\end{equation*}

And the Hadamard derivative for $\left [ \delta_{4}(\theta) \right ](z,x)=-1$ is equal to $0$.

Hence, the Hadamard directional derivative of $\phi_{LATE}$ evaluated at $\theta_{0}$ is 

\vspace{-10mm}

\begin{equation*}
    \phi_{LATE,\theta_{0}}^{'}(h)=\begin{pmatrix}
\mathbf{1}\left ( \delta_{1}(\theta_{0})=1 \right )\min\left\{ \delta^{'}_{1,\theta_{0}}(h),0\right\} \\
+\mathbf{1}\left ( \delta_{1}(\theta_{0})<1 \right )\delta^{'}_{1,\theta_{0}}(h) \\ 
 \\\mathbf{1}\left ( \delta_{3}(\theta_{0})=-1 \right )\max\left\{ \delta^{'}_{3,\theta_{0}}(h),0\right\}
 \\+\mathbf{1}\left ( \delta_{3}(\theta_{0})>-1 \right )\delta^{'}_{3,\theta_{0}}(h) 
\end{pmatrix}
\end{equation*}

By the Delta Method for Hadamard directionally differentiable functions,

\vspace{-10mm}

\begin{equation*}
    \sqrt{N}\left ( \phi_{LATE}(\widehat{\theta})-\phi_{LATE}(\theta_{0}) \right )\overset{d}{\rightarrow}\left [ \phi_{LATE,\theta_{0}}^{'}(\Breve{\textbf{Z}}) \right ](z,x)\equiv\tilde{\textbf{Z}}_{LATE}(z,x)
\end{equation*}

which yields the process $\textbf{Z}_{LATE}(y,d,z,x,c_{},\pi)$.

\section*{Appendix C}

In this section I describe the DGP from Section 4.2.1, which is also the DGP used to conduct the Monte Carlo simulations. 

First, I consider a single binary covariate, which follows a Bernoulli distribution with parameter $p=0.5$. Therefore, $\mathbb{P}\left(X=1\right)=\mathbb{P}\left(X=0\right)=0.5$.

The conditional distribution of the instrument is also a Bernoulli. I set $p_{1|x}=0.6$ for $x\in\left\{0,1\right\}$.

The joint distribution of outcome, treatment and assignment is the same for both values of the covariate. Hence, I omit the covariate for the sake of the exposition.

I set

\vspace{-10mm}

\begin{align*}
    &\mathbb{P}\left(Y=1,D=1,Z=1\right)=0.2,\\&\mathbb{P}\left(Y=1,D=0,Z=1\right)=0.1,\\&\mathbb{P}\left(Y=0,D=1,Z=1\right)=0.25,\\&\mathbb{P}\left(Y=0,D=0,Z=1\right)=0.05
\end{align*}

and 

\vspace{-10mm}

\begin{align*}
    &\mathbb{P}\left(Y=1,D=1,Z=0\right)=0.05,\\&\mathbb{P}\left(Y=1,D=0,Z=0\right)=0.05,\\&\mathbb{P}\left(Y=0,D=1,Z=0\right)=0.05,\\&\mathbb{P}\left(Y=0,D=0,Z=0\right)=0.25
\end{align*}

which generates the values displayed in Section 4.2.1

\end{document}